\newtheorem{thm}{Theorem}[section]
\newtheorem{cor}[thm]{Corollary}
\newtheorem{defn}[thm]{Definition}
\newtheorem{lem}[thm]{Lemma}
\newtheorem{rem}[thm]{Remark}
\newtheorem{prop}[thm]{Proposition}
\newtheorem{ex}[thm]{Example}
{ \theoremstyle{remark} }
\numberwithin{thm}{section}
\numberwithin{equation}{section}
\newcommand{\myop}[1]{\operatorname{#1}}
\newcommand{\Irr}{\myop{Irr}}
\newcommand{\Tr}{\myop{Tr}}
\newcommand{\Hom}{\myop{Hom}}
\newcommand{\N}{\mathbb N}
\newcommand{\C}{\mathbb C}
\newcommand{\G}{\mathbb G}
\newcommand{\TL}{\text{TL}}
\newcommand{\mc}{\mathcal}
\newcommand{\eps}{\varepsilon}
\newcommand{\la}{\langle}
\newcommand{\ra}{\rangle}
\newcommand{\Comp}{\mathbb{C}}
\newcommand{\g}{\mathbb{G}}
\newcommand{\n}{\mathbb{N}}
\global\long\def\tp{\mathop{\xymatrix{*+<.7ex>[o][F-]{\scriptstyle \top}}
 } }
\begin{document}

\title[Temperley-Lieb channels]{Temperley-Lieb quantum channels}

\author {Michael Brannan}
\address{Michael Brannan,
Department of Mathematics,
Mailstop 3368, Texas A\&M University, 
College Station, TX 77843-3368, USA}
\email{mbrannan@math.tamu.edu}

\author {Beno\^\i{}t Collins}
\address{Beno\^\i{}t Collins,
Department of Mathematics, Kyoto University,
Kyoto 606-8502, Japan }
\email{collins@math.kyoto-u.ac.jp}

\author{Hun Hee Lee}
\address{Hun Hee Lee, Department of Mathematical Sciences and the Research Institute of Mathematics, Seoul National University, Gwanak-ro 1, Gwanak-gu, Seoul 08826, Republic of Korea}
\email{hunheelee@snu.ac.kr}

\author{Sang-Gyun Youn}
\address{Sang-Gyun Youn, 
Department of Mathematical Sciences, Seoul National University, 
GwanAkRo 1, Gwanak-Gu, Seoul 08826, South Korea}
\email{yun87654@snu.ac.kr }

\maketitle

\begin{abstract}
We study a class of quantum channels arising from the representation theory of compact quantum groups that we call Temperley-Lieb quantum channels. These channels simultaneously extend those introduced in \cite{BrCo18b}, \cite{Al14}, and \cite{LiSo14}.
(Quantum) Symmetries in quantum information theory arise naturally from many points of view, providing an important source of new examples of quantum phenomena, and also serve as useful tools to simplify or solve important problems. This work provides new applications of quantum symmetries
in quantum information theory.  Among others, we study entropies and capacitites of Temperley-Lieb channels, their (anti-) degradability, PPT and entanglement breaking properties, as well as the behaviour of their tensor products with respect to entangled inpurs. Finally we compare the Tempereley-Lieb channels with the (modified) TRO-channels recently introduced in \cite{GaJuLa16}.
\end{abstract}

\section{Introduction}\label{sec-intro}

A fundamental problem in (quantum) information theory is to understand the capacity of a noisy communications channel.  In the quantum world, this is harder, because there are many notions of capacities, non-trivial additivity questions related to these capacities, and a very poor understanding of the behaviour of quantum channels under the operation of tensoring. The non-trivial channels for which 
many entropic or capacity related quantities can be computed and be of non-trivial value or interest are rather scarce.
One reason for this paucity is that many quantities are defined with minimizers, and many properties (e.g. PPT, entanglement breaking property (shortly, EBT), degradability and so on) rely on the existence of auxiliary objects or computations of tensors that are close to impossible to 
describe effectively without additional conceptual assumptions on the quantum channel.
 
One of the most natural  (and to our mind,  underrated) property of a quantum channel  is to have some sort of group symmetry. 
In this paper, we will focus on quantum channels which feature symmetries with respect to structures which are more general than groups: compact quantum groups. 
For example, the notion of a covariant quantum channel channel with respect to a compact group action was introduced in many contexts (\cite{WeHo02, DaFuHo06, MoStDa17, Al14, LiSo14, Rit05}) but these properties have not been extensively used from the analysis point of view of quantum information theory (shortly, QIT) such as estimating quantities. In addition, most of the time, the covariance under consideration is with respect to the most elementary group representations, e.g., the basic representation of a matrix group $G \subset M_n(\C)$ on $\C^n$.
The principal reason behind the restriction to the basic representations so far is  that the symmetries involved and the analysis behind many aspects of representation theory are not well-understood to the degree required to estimate important quantities.  Nonetheless, it was observed in many places that such symmetries can be useful (e.g. \cite{MuReWo16,HaMu15,Sc05, DaFuHo06, KoWe09, SaWoPeCi09, MoStDa17}, etc). See also \cite{CoOsSa18} for a covariant characterization of $k$-positive maps. 

The first systematic attempt to remedy this limitation was conducted by Al Nuwairan \cite{Al14} in the context of $SU(2)$ symmetries. Here, Al Nuwairan investigated quantum channels arising from the intertwining isometries of the irreducible decomposition of the tensor product of two irreducible representations of $SU(2)$, which we will call $SU(2)$-Temperley-Lieb quantum channels (shortly, $SU(2)$-TL-channels). Thanks to the well-known $SU(2)$-Clebsch-Gordan formulas, explicit results could be obtained and it turned out that $SU(2)$-TL-channels play important roles of describing general $SU(2)$-covariant quantum channels. However, from the perspective of entanglement theory, the performance of $SU(2)$-TL-channels was not spectacular.
Subsequently, \cite{BrCo18b} considered a quantum extension of $SU(2)$-TL-channels using irreducible representations of free orthogonal quantum groups, which we call $O^+_N$-TL-channels in this paper, and noticed that a notion of rapid decay was exactly the concept needed to estimate precisely the entanglement in a highly entangled setup. 
The main idea was to replace group symmetries by quantum group symmetries especially for the free orthogonal quantum group $O_N^+$ case, whose main advantage is that it allows to remain in a well-understood C$^\ast$-tensor category (the Temperley-Lieb category) which facilitates very explicit computations and estimates. 

The present work undertakes a much more systematic study of $SU(2)$-TL-channels and $O^+_N$-TL-channels, and compares their various information theoretic properties.
One important achievement of this paper is that the minimum output entropy (shortly, MOE) $H_{\min}$, the one-shot quantum capacity $Q^{(1)}$ and the Holevo capacity $\chi$ can be estimated, and that these estimates are asymptotically sharp as $N$ becomes big, in the case of $O_N^+$-TL-channels. More generally, the main results of this paper are summarized below in the following table:

\begin{table}[h!]
  \begin{center}
    \caption{Summary of results.}
    \label{tab:table1}
    \begin{tabular}{l|c|r} 
       \textbf{Properties}$\backslash$\textbf{Channels} & $O^+_N$-TL-ch. [sec. \ref{sec:moe-cap}, sec. \ref{sec:EBP-PPT}] & $SU(2)$-TL-ch.  [sec. \ref{sec:EBP-PPT}]\\
        \hline
     $H_{\min}$ & asympt. sharp & \cite{Al13}\\
      \hline
      $Q^{(1)}$ and $\chi$ & asympt. sharp & rough estimates\\
       \hline
      EBT & No except for the lowest weight & complete \\
       \hline
      PPT & No except for the lowest weight with $N \gg 1$ & complete \\
       \hline
     (Anti-)Degradability & No except for the lowest weight with $N \gg 1$ &  partial results \\
      \hline
      $C$ (classical capacity) & $C\le (2+\eps)\chi$ with $N \gg 1$ & ? (open)\\
       \hline
      Equivalence to TRO ch.& ? (open)  & No in general [sec. \ref{sec:TRO}]\\

    \end{tabular}
  \end{center}
\end{table}

The term TRO in the above will be clarified later in the introduction and in section \ref{sec:TRO} with more details.
As it appears from the above table, many interesting and unexpected phenomena are unveiled, which we find counterintuitive, 
and whose proof boils down to an extensive case analysis. 
Just to mention a few: 
\begin{itemize}
\item 
Many non-trivial results can be obtained about the degradability and anti-degradability of the covariant quantum channels. 
To the best of our knowledge, although these notions are really important to estimate capacities (and we use such results), there are almost no non-trivial examples in the literature of quantum channels for which one can assess the  degradability and anti-degradability. Our computation is possible thanks to averaging methods 
stemming from (quantum) group invariance. 
\item 
In most cases, $O_N^+$-TL-channels with large $N$ have a highly non-trivial structure. Indeed, they are not PPT, not degradable, not anti-degradable except for the possibility of lowest weight subrepresentations, which we still have not settled. 
Moreover, we present a complete list for EBT and PPT for $SU(2)$-TL-channels and it turns out that the notions of PPT and EBT are actually equivalent in the case of $SU(2)$. One important ingredient here is the diagrammatic calculus for Temperley-Lieb category covered in Section \ref{subsec:Diagram}.
\item
On the other hand, we reveal unexpected results on (anti-)degradability of $SU(2)$-TL-channels. We show that they are degradable for extremal cases such as lowest or highest weight, whereas it is not 
true for other intermediate cases. Indeed, we provide an example of a non-degradable $SU(2)$-TL-channel in low dimensions (see Example \ref{ex:non-deg-non-antideg}). 
\end{itemize}

One crucial point in QIT is that it is often unavoidable to consider tensor products of quantum channels, and in general, computations in tensor products become very involved. However when the channels have nice symmetries, as we show in this paper, 
computations can remain tractable, even in non-trivial cases. The main techinical tool is an application of diagrammatic calculus explained in Section \ref{subsec:Diagram}, which can be applied to $O^+_N$-TL-channels, see Section \ref{sec:tensor} for the details.

Finally, TL-channels bear some resemblance with another important family of operators introduced by \cite{GaJuLa16}, called TRO-channels and their modified versions. Here, TRO refers to ternary ring of operators and name ``TRO-channel'' comes from the fact that its Stinespring space, i.e. the range of the Stinespring isometry actually has a TRO structure. Examples of TRO-channels include random unitary channels from regular representations of finite (quantum) groups and generalized dephasing channels \cite{GaJuLa16}. While the authors were preparing this manuscript and discussing it for the first time publicly, the question of how our TL-channels compare to TRO channels was posed (and, in particular, whether or not TL implies TRO).   The answer is that these classes of channels bear important differences, as explained in section \ref{sec:TRO}.

This paper is organized as follows. 
After this introduction, section \ref{sec:preliminaries} provides some background and reminders about quantum channels and compact quantum groups.
Section \ref{sec:TL} recalls some details on free orthogonal quantum groups and their associated representation theory. Then, we introduce Tempereley-Lieb quantum channels (shortly, TL-channels) and collect some details on their associated diagrammatic calculus. 
Section \ref{sec:moe-cap} contains results about the entropies and capacities of TL-channels. 
Then, section \ref{sec:EBP-PPT} addresses the property of entanglement breaking and PPT for TL-channels. 
Section \ref{sec:tensor} shows that $O^+_N$-TL-channels (unlike most `structureless' quantum channels) behave very well under tensor products. Finally, section \ref{sec:TRO} addresses the question of comparing TL-channels with Junge's (modified) TRO-channels.

\subsection*{Acknowledgements} MB's research was supported by NSF grant DMS-1700267.  BC's research was supported by JSPS KAKENHI 17K18734, 17H04823, 15KK0162. HHL and SY's research was supported by the Basic Science Research Program through the National Research Foundation of Korea (NRF) Grant NRF-2017R1E1A1A03070510 and the National Research Foundation of Korea (NRF) Grant funded by the Korean Government (MSIT) (Grant No.2017R1A5A1015626).

The authors are grateful to Marius Junge for useful comments and discussons during various stages of preparation of this manuscript.

\section{Preliminaries}\label{sec:preliminaries}

\subsection{Quantum channels and their information theoretic quantities}
 
Here, we are only interested in quantum channels based on finite dimensional Hilbert spaces. Recall that a quantum channel is a linear completely positive trace-preserving (shortly, CPTP) map $\Phi: B(H_A) \to B(H_B)$. It is well-known that there is a so called {\it Stinespring isometry} $V : H_A \to H_B \otimes H_E$ such that
	$$\Phi(\rho) = (\iota \otimes {\rm Tr}_E)(V \rho V^*),\; \rho \in B(H_A),$$
where ${\rm Tr}_E$ refers to the trace on $B(H_E)$. For a given Stinespring isometry $V$ we can consider the complementary channel $\tilde{\Phi}:B(H_A) \to B(H_E)$ of $\Phi$ given by
	$$\tilde{\Phi}(\rho) = ({\rm Tr}_B \otimes \iota)(V \rho V^*),\; \rho \in B(H_A).$$

For each quantum channel there are several important information theoretic quantities, which we recall in the following.

\begin{defn}
Let $\Phi:B(H_A)\rightarrow B(H_B)$ be a quantum channel.

\begin{enumerate}

\item The Holevo capacity $\chi(\Phi)$ is defined by
\[\chi (\Phi) := \max \Big \{ H(\Phi \big (\sum_x p_x \rho_x\big ))-\sum_x p_x H(\Phi(\rho_x)) \Big \},\]
where the maximum runs over all possible choice of ensemble of quantum states $\{(p_x), (\rho_x)\}$ on $H_A$ and $H(\cdot)$ refers to the von Neumann entropy of a state $\rho \in B(H_A)$.

\item The ``one-shot'' quantum capacity $Q^{(1)}(\Phi)$ is defined by
\[Q^{(1)}(\Phi) := \max \{ H(\Phi(\rho)) - H(\tilde{\Phi}(\rho)) \}\]
where the maximum runs over all quantum states $\rho$ in $B(H_A)$. Note that the definition is independent of the choice of Stinespring isometry which determines the complementary channel $\tilde{\Phi}$.

\item The classical capacity $C(\Phi)$ and the quantum capacity $Q(\Phi)$ are obtained by the regularizations of the Holevo capacity and the ``one-shot'' quantum capacity, respectively, as follows.
\[C(\Phi)=\lim_{n \to \infty} \frac{\chi (\Phi^{\otimes n})}{n},\;\; Q(\Phi) =\lim_{n\to \infty} \frac{Q^{(1)}(\Phi^{\otimes n})}{n}.\]

\item The minimum output entropy (MOE) $H_{\min}(\Phi)$ given by
	$$H_{\min}(\Phi):=\min_{\rho} H(\Phi(\rho)),$$
where the minimum runs over all quantum states $\rho$ in $B(H_A)$.

\end{enumerate}
\end{defn}

\begin{rem}
The two quantities $\chi$ and $H_{\min}$ are closely related. In general, we have the following for a quantum channel $\Phi:B(H_A)\rightarrow B(H_B)$.
	\begin{equation}\label{eq-Holevo-MOE}
		\chi(\Phi) \le \log d_B - H_{\min}(\Phi),
	\end{equation}
where $d_B$ refers to the dimension of $H_B$ \cite{Hol-book}.	
\end{rem}

The regularization precedure for the classical capacity and the quantum capacity causes serious difficulties for the calculations of capacities in general. There are, however, some properties of channels that allow us to simplify the calculation, which we present below.

\begin{defn} Let $\Phi:B(H_A)\rightarrow B(H_B)$ be a quantum channel with the complimentary channel $\tilde{\Phi}:B(H_A) \to B(H_E)$.
\begin{enumerate}
\item We say that $\Phi$ is degradable (resp. anti-degradable) if there exists a channel $\Psi:B(H_B)\rightarrow B(H_E)$ (resp. $\Psi:B(H_E)\rightarrow B(H_B))$ such that $\widetilde{\Phi}=\Psi\circ \Phi$ (resp. $\Phi= \Psi\circ \widetilde{\Phi})$.
\item We say that $\Phi$ is entanglement-breaking (shortly, EBT) if there exist a probability distribution $(p_x)_x$ and product states $\rho_x^B\otimes \rho_x^A \in B(H_B \otimes H_A)$ such that the Choi matrix of $\Phi$, $\displaystyle C_{\Phi} :=\frac{1}{d_A}\sum_{i,j=1}^{d_A}\Phi(e_{ij})\otimes e_{ij}$ is given by $\displaystyle C_{\Phi} = \sum_x p_x \rho_x^B\otimes \rho_x^A.$

\item We say that $\Phi$ is PPT (positive partial transpose) if $(T_B\otimes \iota) C_\Phi$ is a positive matrix in $B(H_B \otimes H_A)$, equivalently if $T_B\circ \Phi$ is also a channel where $T_B$ is the transpose map on $B(H_B)$.

\item We say that $\Phi$ is {\it bistochastic} if $\Phi(\frac{1_A}{d_A}) = \frac{1_B}{d_B}$.

\end{enumerate}
\end{defn}

From the definition it is clear that EBT channels are PPT and by \cite[Corollary 10.28]{Hol-book} they are also  anti-degradable. Note that we have the following consequences of the above properties.

\begin{prop}\label{prop:implications}
Let $\Phi:B(H_A)\rightarrow B(H_B)$ be a quantum channel.
	\begin{enumerate}
		\item \cite{DeSh05} If $\Phi$ is degradable, then $Q(\Phi) = Q^{(1)}(\Phi)$.
		\item \cite{HoHoHo96,Pe96,Hol-book} If $\Phi$ is PPT or anti-degradable, then $Q(\Phi) = Q^{(1)}(\Phi) = 0$.
		\item \cite{Sh02} If $\Phi$ is EBT, then $C(\Phi) = \chi(\Phi)$.
	\end{enumerate}
\end{prop}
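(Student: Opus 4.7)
The plan is to handle each item by recalling the argument from the cited reference, unified through the coherent information $I_c(\rho,\Phi):=H(\Phi(\rho))-H(\tilde\Phi(\rho))$, so that $Q^{(1)}(\Phi)=\max_\rho I_c(\rho,\Phi)$ and $Q(\Phi)$ is its regularization. For part (1), I would first note that superadditivity $I_c(\rho_1\otimes\rho_2,\Phi_1\otimes\Phi_2)\ge I_c(\rho_1,\Phi_1)+I_c(\rho_2,\Phi_2)$ is immediate by evaluating on product inputs, so the real content is the reverse inequality on general entangled inputs. Using the hypothesis $\tilde\Phi=\Psi\circ\Phi$, one has $\widetilde{\Phi_1\otimes\Phi_2}=(\Psi_1\otimes\Psi_2)\circ(\Phi_1\otimes\Phi_2)$, and applying strong subadditivity to a purification of the bipartite input yields the matching subadditivity of $I_c$ for degradable channels. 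Iterating collapses the regularization and gives $Q(\Phi)=Q^{(1)}(\Phi)$.

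For part (2), the anti-degradable case is the cleanest: if $\Phi=\Psi\circ\tilde\Phi$ then composing and using data processing of the quantum mutual information forces $I_c(\rho,\Phi)\le 0$ for every input $\rho$, and anti-degradability is stable under tensor powers, so $I_c(\rho,\Phi^{\otimes n})\le 0$ for all $n$ and $Q(\Phi)=0$. For the PPT case the plan is different: one invokes the standard equivalence between quantum capacity and one-way distillable entanglement of the Choi state (via teleportation simulations), then combines the Peres criterion with the Horodecki distillability theorem, which shows that PPT bipartite states have zero distillable entanglement. Since the Choi matrix of $\Phi^{\otimes n}$ remains PPT whenever $\Phi$ is PPT, no quantum information can be distilled and $Q(\Phi)=0$.

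For part (3), the plan is to reproduce Shor's additivity argument for EBT channels. By the Holevo--Horodecki characterization, one may write $\Phi(\rho)=\sum_k \mathrm{tr}(M_k\rho)\sigma_k$ as a measure-and-prepare map. Given an optimal ensemble for $\chi(\Phi\otimes\Psi)$, performing the classical measurement $\{M_k\}$ on the first tensor factor converts the Holevo quantity on $\Phi\otimes\Psi$ into a mixture of Holevo quantities on $\Psi$ labelled by classical outcomes, and a convexity/data-processing bookkeeping yields $\chi(\Phi\otimes\Psi)\le\chi(\Phi)+\chi(\Psi)$; the reverse inequality is trivial. Induction on the tensor factors gives $\chi(\Phi^{\otimes n})=n\chi(\Phi)$, hence $C(\Phi)=\chi(\Phi)$. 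The principal obstacle in all three parts is the removal of the regularization, which rests respectively on strong subadditivity, on the PPT-preservation/no-distillation dichotomy, and on the measure-and-prepare rigidity of EBT channels; the remaining steps are essentially bookkeeping.
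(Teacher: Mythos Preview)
The paper does not actually prove this proposition: it is stated as a collection of known facts with citations \cite{DeSh05}, \cite{HoHoHo96,Pe96,Hol-book}, \cite{Sh02} and no argument is given. Your proposal goes further than the paper by sketching the content of those references, and the sketches are essentially correct outlines of the standard proofs (additivity of coherent information for degradable channels via data processing/strong subadditivity; nonpositivity of coherent information for anti-degradable channels and the PPT/no-distillation route; Shor's measure-and-prepare additivity argument for EBT channels). So there is nothing to compare against in the paper itself, and your write-up is a reasonable expansion of what the authors leave to the literature.

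One small caution on part (2): your PPT argument leans on the identification of $Q(\Phi)$ with the (regularized) one-way distillable entanglement of the Choi state via teleportation simulation. This is fine as a heuristic, but strictly speaking the clean equality requires some care (the Choi state of $\Phi$ need not be maximally entangled on the input side unless $\Phi$ is unital, and the teleportation simulation argument uses an asymptotic many-copy protocol). A more self-contained route, closer in spirit to \cite{Hol-book}, is to observe directly that if $T_B\circ\Phi$ is completely positive then $(\iota\otimes T_B)\circ(\iota\otimes\Phi)$ sends every bipartite state to a positive operator; combined with the Rains bound (or the Horodecki PPT-undistillability result applied to $(\iota\otimes\Phi^{\otimes n})$ of any purification), this forces $I_c(\rho,\Phi^{\otimes n})\le 0$ for all $n$ and all $\rho$. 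Either way the conclusion stands; I mention it only because the teleportation-simulation step, as you phrased it, hides a nontrivial lemma.
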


Some bistochastic channels have the following straightforward capacity estimates.
\begin{prop}\label{prop-bistochastic-estimates}
Let $\Phi: B(H_A) \to B(H_B)$ be a bistochastic quantum channel with a Stinespring isometry $V: H_A \to H_B \otimes H_E$. Suppose further that its complementary channel $\tilde{\Phi}$ is also bistochastic, then we have
	\begin{equation}
		\log \frac{d_B}{d_E} \le Q^{(1)}(\Phi) \le C(\Phi) \le \min \{ \log d_A, \log d_B, \log \frac{d_A d_B}{d_E} \}.
	\end{equation}
\end{prop}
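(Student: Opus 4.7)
The strategy is to handle each inequality in the chain separately. For the lower bound $\log(d_B/d_E) \leq Q^{(1)}(\Phi)$, the plan is simply to evaluate the coherent information $H(\Phi(\rho))-H(\tilde\Phi(\rho))$ at the maximally mixed input $\rho=1_A/d_A$: both bistochasticity hypotheses force $\Phi(\rho)=1_B/d_B$ and $\tilde\Phi(\rho)=1_E/d_E$, so this single evaluation already produces a lower bound of $\log d_B-\log d_E$. The central inequality $Q^{(1)}(\Phi)\leq C(\Phi)$ reduces to standard facts: $Q^{(1)}\leq Q\leq C$, using super-additivity of coherent information on product inputs for the first inequality and the well-known domination of classical capacity over quantum capacity for the second.

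For the upper bounds on $C(\Phi)$, the two estimates $C(\Phi)\leq\log d_A$ and $C(\Phi)\leq\log d_B$ hold for any channel: the Holevo quantity satisfies $\chi(\Phi)\leq\min\{\log d_A,\log d_B\}$ by standard arguments (write the Holevo quantity on a pure-state ensemble as the quantum mutual information $I(X;B)$ of the associated classical-quantum state and apply data processing for the first bound, and use $I(X;B)\leq H(B)\leq\log d_B$ for the second), and since the bound scales linearly under tensor powers, regularization transfers it to $C(\Phi)$.

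The main work is the third bound $C(\Phi)\leq\log(d_Ad_B/d_E)$. The plan is to combine the Holevo--minimum-entropy bound~\eqref{eq-Holevo-MOE}, $\chi(\Phi)\leq\log d_B-H_{\min}(\Phi)$, with a lower bound on the minimum output entropy coming from the complementary channel. By concavity of entropy, $H_{\min}$ is attained on pure inputs; for pure $\rho$, $\Phi(\rho)$ and $\tilde\Phi(\rho)$ are reduced density matrices of the same pure vector $V|\psi\rangle\in H_B\otimes H_E$, so they share Schmidt spectra and $H_{\min}(\Phi)=H_{\min}(\tilde\Phi)$. To estimate $H_{\min}(\tilde\Phi)$ from below, the key step uses bistochasticity of $\tilde\Phi$: any density matrix $\rho$ satisfies $\rho\leq 1_A$, so positivity and bistochasticity of $\tilde\Phi$ give $\tilde\Phi(\rho)\leq\tilde\Phi(1_A)=(d_A/d_E)1_E$; the min-entropy inequality $H(\sigma)\geq-\log\|\sigma\|_\infty$ then yields $H_{\min}(\tilde\Phi)\geq\log(d_E/d_A)$. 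Feeding this back into~\eqref{eq-Holevo-MOE} produces $\chi(\Phi)\leq\log(d_Ad_B/d_E)$, and since the hypotheses persist for $\Phi^{\otimes n}$ with dimensions $(d_A^n,d_B^n,d_E^n)$, regularization delivers the claim.

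The only non-routine step is the min-entropy lower bound $H_{\min}(\tilde\Phi)\geq\log(d_E/d_A)$, which is where bistochasticity of the complementary channel enters essentially; everything else reduces to standard inequalities and the regularization procedure. Note that when $d_A>d_E$ this lower bound degenerates and $\log(d_Ad_B/d_E)>\log d_B$, so the third term of the minimum in the statement becomes redundant, consistent with the form of the conclusion.
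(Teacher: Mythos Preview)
Your proof is correct and follows essentially the same route as the paper. The lower bound via the maximally mixed input, the use of \eqref{eq-Holevo-MOE} together with $H_{\min}(\Phi)=H_{\min}(\tilde\Phi)$ and the min-entropy bound $H_{\min}(\tilde\Phi)\ge \log(d_E/d_A)$ from bistochasticity, and the regularization step all match the paper's argument.

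There is one minor but genuine difference worth noting: for the bound $C(\Phi)\le \log d_A$, the paper does \emph{not} invoke the general data-processing argument you use. Instead it argues (again via \eqref{eq-Holevo-MOE}) that bistochasticity of $\Phi$ itself gives $\|\Phi^{\otimes n}\|_{S^1\to B}\le (d_A/d_B)^n$, hence $H_{\min}(\Phi^{\otimes n})\ge n\log(d_B/d_A)$, so $\chi(\Phi^{\otimes n})\le n\log d_A$. Your route via $I(X;B)\le I(X;A)\le \log d_A$ is cleaner and does not require bistochasticity of $\Phi$ for this particular bound; the paper's route has the virtue of using a single mechanism (the operator-norm/MOE estimate) uniformly for both the $\log d_A$ and $\log(d_Ad_B/d_E)$ terms. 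Either way the argument goes through.
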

\begin{proof}
We first observe that positivity of $\Phi$ tells us
	$$||\Phi||_{S^1(H_A) \to B(H_B)} \le ||\Phi||_{B(H_A) \to B(H_B)} = ||\Phi(1_A)||_{B(H_B)} = \frac{d_A}{d_B}.$$
Since $\Phi^{\otimes n}$ is also bistochastic, we also have $||\Phi^{\otimes n}||_{S^1(H_A^{\otimes n}) \to B(H_B^{\otimes n})} \le \big(\frac{d_A}{d_B}\big)^n$. Thus, we have
	$$H_{\min}(\Phi^{\otimes n}) = \min_{\rho} H(\Phi^{\otimes n}(\rho)) \ge -\log ||\Phi^{\otimes n}||_{S^1(H_A^{\otimes n}) \to B(H_B^{\otimes n})} \ge n \log \frac{d_B}{d_A}.$$
Note also that $H_{\min}(\Phi^{\otimes n}) = H_{\min}(\widetilde{\Phi^{\otimes n}}) = H_{\min}(\tilde{\Phi}^{\otimes n}) \ge n \log \frac{d_E}{d_A}$ 	
so that we have
	\begin{align*}
	\chi(\Phi^{\otimes n})
	& \le  \log d_B^n - H_{\min}(\Phi^{\otimes n})\\
	& \le n \log d_B - n \cdot \max \{ \log \frac{d_B}{d_A},  \log \frac{d_E}{d_A}\}\\
	& = n \cdot \min\{\log d_A, \log \frac{d_Ad_B}{d_E}\}.
	\end{align*}
Thus, we have
	$$C(\Phi) = \lim_{n\to \infty}\frac{\chi(\Phi^{\otimes n})}{n} \le \min \{ \log d_A, \log d_B, \log \frac{d_A d_B}{d_E} \}$$
together with the obvious estimate $\chi(\Phi^{\otimes n}) \le n\cdot \log d_B$.

The lower bound is direct from the definition of the ``one-shot'' quantum capacity.
	$$Q^{(1)}(\Phi) \ge H(\Phi(\frac{1_A}{d_A})) - H(\tilde{\Phi}(\frac{1_A}{d_A})) \ge H(\frac{1_B}{d_B}) - \log d_E = \log \frac{d_B}{d_E}.$$
\end{proof}

\subsection{Compact quantum groups and their representations}

A {\it compact quantum group} is a pair  $\g=(C(\g),\Delta)$ where $C(\g)$ is a unital $C^*$-algebra and $\Delta:C(\g)\rightarrow C(\g)\otimes_{\min}C(\g)$ is a unital $*$-homomorphism satisfying that (1) $(\Delta\otimes \iota)\Delta= (\iota\otimes \Delta)\Delta$ and (2) each of the spaces $\mathrm{span}\left \{\Delta(a)(1\otimes b):a,b\in C(\g) \right\}$ and $\mathrm{span}\left \{\Delta(a)(b\otimes 1):a,b\in C(\g) \right\}$ are dense in $C(\g)\otimes_{\min}C(\g)$. It is well known that every compact quantum group has the (unique) {\it Haar state} $h$, which is a state on $C(\g)$ such that $(\iota \otimes h)\Delta= h(\cdot )1=(h\otimes \iota)\Delta.$ If the Haar state $h$ is tracial, i.e. $h(ab)=h(ba)$ for all $a,b\in C(\g)$, then $\g$ is said to be of {\it Kac type}. 

A (finite dimensional) {\it representation} of $\g$ is a pair $(u,H_u)$ where $H_u$ is a finite dimensional Hilbert space and $u=(u_{i,j})_{1\leq i,j\leq d_u}\in B(H_u)\otimes C(\g)$ such that
$\displaystyle \Delta(u_{i,j})=\sum_{k=1}^{d_u}u_{i,k}\otimes u_{k,j}$ for all $1\leq i,j\leq d_u$. Here, $d_u$ refers to the dimension of $u$. The representation $u$ is called {\it unitary} if it further satisfies $u^*u=1_u\otimes 1=uu^*$. Whenever we have a unitary representation $(u,H_u)$ of $\G$ we obtain a so-called {\it $\G$-action} on $B(H_u)$
	\begin{equation}\label{eq-G-action}
	\beta_u: B(H_u) \to B(H_u) \otimes C(\G), \quad x \mapsto u(x\otimes 1 )u^*.
	\end{equation}
For given unitary representations $v$ and $w$, we say that a linear map $T:B(H_v)\rightarrow B(H_w)$ {\it intertwines} $v$ and $w$ if 
$$(T\otimes 1) v = w (T\otimes 1)$$
and denote by $\mathrm{Hom}_\G(v,w)$ (simply, $\mathrm{Hom}(v,w)$) the space of {\it intertwiners}. If $\mathrm{Hom}(v,w)$ contains an invertible intertwiner, then $v$ and $w$ are said to be {\it equivalent}. A unitary representation $(v,H_v)$ is called {\it irreducible} if $\mathrm{Hom}(v)=\mathrm{Hom}(v,v)=\Comp\cdot 1_v$ and we denote by $\mathrm{Irr}(\g)$ the set of all irreducible unitary representations of $\g$ up to equivalence. 

When we fix a representative $u^{\alpha}= [u^{\alpha}_{ij}]_{i,j=1}^{d_\alpha} \in M_{d_{\alpha}} (C(\G))$ for each $\alpha \in \mathrm{Irr}(\g)$, the Peter-Weyl theory for compact quantum groups says the space $\mathrm{Pol}(\g) :=  \mathrm{span}\{u^{\alpha}_{ij}: \alpha \in \mathrm{Irr}(\g), 1\le i,j\le d_\alpha \}$ is a subalgebra of $C(\G)$ containing all the information on the quantum group $\G$. In particular, it hosts the map $S$ called the {\it antipode} determined by the formula
\[ S(u^{\alpha}_{ij})= (u_{ji}^{\alpha})^*, \;\;  \alpha \in \mathrm{Irr}(\g), 1\le i,j\le d_\alpha.\]

For representations $v=(v_{ij})$ and $w=(w_{kl})$ we define its {\it tensor product} $v \tp w$ by
$$ v\tp w = \sum_{i,j=1}^{d_v}\sum_{k,l=1}^{d_w} e_{ij}\otimes e_{kl}\otimes v_{ij}w_{kl} \in B(H_v)\otimes B(H_w)\otimes C(\g) .$$ 
Then {\it the representation category} consisting of unitary representations as objects and intertwiners as morphisms is a {\it strict $C^*$-tensor category} under the natural adjoint operation $Hom(v,w)\rightarrow Hom(w,v), T\mapsto T^*$, and the tensor product $\tp$. 
It is well known that any finite dimensional representation decomposes into a direct sum of irreducible representations, so that we have
	$$v \tp w \cong \oplus^N_{i=1}u_i.$$
In case $u$ is a component of the irreducible decomposition of $v \tp w$ we write $u\subset v \tp w$.

For a given unitary representation $(v,H_v)$ we consider the map $j:B(H)\rightarrow B(\overline{H})$ defined by $j(T)\overline{\xi}=\overline{T^*\xi}$. Then the {\it contragredient representation} of $v$ is given by 
$$ v^c=(v_{ij}^*)_{1\leq i,j\leq d_v}=(j\otimes \iota)(v^{-1})\in B(\overline{H})\otimes C(\g).$$
The contragredient representation $v^c$ is unitary if $\g$ is of Kac type.

For each compact quantum group $\G$ we have its opposite version $\G^{\rm op}$ with the same algebra $C(\G^{\rm op}) = C(\G)$, but with the flipped co-multiplication $\Delta_{\rm op} = \Sigma \circ \Delta$, where $\Sigma$ is the flip map on $C(\G) \otimes_{\min} C(\G)$. Then, for any unitrary representation $u= (u_{ij}) \in B(H_u)\otimes C(\G)$ of $\G$ we have an associated representation $u^* = (u^*_{ji}) \in B(H_u)\otimes C(\G)$ of $\G^{\rm op}$.

\subsection{Clebsch-Gordan channels}
Let $\G$ be a compact quantum group and $(u,H_u)$, $(v,H_v)$ and $(w,H_w)$ be unitary irreducible representations of $\G$ such that $u \subset v \tp w$, which gives us its intertwining isometry $\alpha_u^{v,w}:H_u \to H_v \otimes H_w$. By using $\alpha_u^{v,w}$ as the Stinespring isometry we get the following complementary pair of quantum channels:
\begin{align*}
\Phi_u^{\bar{v}, w}:B(H_u) \to B(H_w);  \quad\rho \mapsto \Tr_v(\alpha_u^{v,w}\rho(\alpha_u^{v,w})^*)\\
\Phi_u^{v, \bar{w}}:B(H_u) \to B(H_v);  \quad\rho \mapsto \Tr_w(\alpha_u^{v,w}\rho(\alpha_u^{v,w})^*).
\end{align*}
We name the above channels as Clebsch-Gordan channels (shortly, CG-channels) since the isometry $\alpha_u^{v,w}$ reflects the Clebsch-Gordan coefficients directly. Note that the symbol $\bar{v}$ does not refer to the conjugate representation, instead it means that we trace out the $H_v$ part. These channels have been studied by Al-Nuwairan \cite{Al14}, Brannan-Collins \cite{BrCo18b}, and also Leib-Solovej \cite{LiSo14}. It turns out that CG-channels preserve certain ``quantum symmetries''. Recall that groups provide a certain symmetry on quantum channels through their (projective) unitary representations, namely covariance of channels. This concept naturally extends to the case of quantum groups as follows.

\begin{defn}
Let $\Phi:B(H_A)\rightarrow B(H_B)$ be a quantum channel. Suppose that there are unitary representations $(u, H_A)$ and $(w, H_B)$ of a compact quantum group $\G$ such that
	\[(\iota \otimes \Phi)(\beta_u(\rho)) = \beta_w ( \Phi (\rho)), \qquad \rho \in B(H_A),\]
where $\beta_u$ and $\beta_w$ are $\G$-actions from \eqref{eq-G-action}.	
Then we say that the channel $\Phi$ is $\G$-covariant with respect to $(u,w)$. In case we have no possibility of confusion we simply say $\G$-covariant.
\end{defn}

Note that the covariance with respect to group representations has been studied in various contexts and has provided useful tools to handle information-theoretic problems \cite{Sc05, DaFuHo06, KoWe09, MeWo09, SaWoPeCi09, MaSp14, NaUe17, MoStDa17}.

We show that with mild assumptions, CG-channels are also {\it $\G$-covariant}.

\begin{prop}
Let $u$, $v$ and $w$ be irreducible unitary representations of a compact quantum group $\G$ such that $u \subset v \tp w$. Then the CG-channel $\Phi_u^{v, \bar{w}}$ is $\G$-covariant with respect to $(u,v)$ if the conjugate representation $w^c$ is also unitary. Similarly, $\Phi_u^{\bar{v}, w}$ is $\G^{\rm op}$-covariant with respect to $(u^*,w^*)$ if $v^c$ is unitary.
\end{prop}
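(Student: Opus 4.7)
The plan is to verify both covariance identities by a direct coordinate computation in $B(H_v\otimes H_w)\otimes C(\G)$, the two main ingredients being the intertwining property of the Stinespring isometry $\alpha:=\alpha_u^{v,w}$ and a partial-trace averaging lemma whose hypothesis is exactly the assumed unitarity of the relevant contragredient. The intertwining reads $(\alpha\otimes 1)u=(v\tp w)(\alpha\otimes 1)$ in $B(H_u,H_v\otimes H_w)\otimes C(\G)$, and multiplying on the right by $u^*$ (using unitarity of $u$ and $v\tp w$) yields the adjoint form $(\alpha\otimes 1)u^*=(v\tp w)^*(\alpha\otimes 1)$. The averaging lemma to be proved is: for any unitary representation $z$ of $\G$ with $z^c$ unitary, and any $y\in B(H_z)$,
\[(\Tr_z\otimes\iota)\bigl(z(y\otimes 1)z^*\bigr)=\Tr(y)\cdot 1_{C(\G)}.\]
Writing $z=\sum e_{ij}\otimes z_{ij}$ and expanding coordinate-wise, this reduces to the matrix identity $\sum_k z_{ik}z_{jk}^*=\delta_{ij}\cdot 1$, which is precisely one of the two one-sided relations $z^c(z^c)^*=1$ supplied by unitarity of $z^c$.

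For the first claim, I would conjugate $u(\rho\otimes 1)u^*$ by $\alpha\otimes 1$ on the left and $\alpha^*\otimes 1$ on the right; the intertwining relation converts this to $(v\tp w)(\alpha\rho\alpha^*\otimes 1)(v\tp w)^*$. Tracing out the $H_w$ factor, in leg notation on $B(H_v)\otimes B(H_w)\otimes C(\G)$ with $v\tp w=v_{13}w_{23}$, the factor $v_{13}$ pulls outside the leg-$2$ trace and the averaging lemma applied to $w$ collapses $\Tr_2\bigl(w_{23}(\cdot)w_{23}^*\bigr)$ to $\Tr_w(\cdot)\otimes 1$, delivering
\[(\Phi_u^{v,\bar w}\otimes\iota)(\beta_u(\rho))=v\bigl(\Phi_u^{v,\bar w}(\rho)\otimes 1\bigr)v^*=\beta_v\bigl(\Phi_u^{v,\bar w}(\rho)\bigr),\]
which is the asserted $\G$-covariance with respect to $(u,v)$. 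The second claim proceeds the same way with the following modifications: since $\beta_{u^*}(\rho)=u^*(\rho\otimes 1)u$, the adjoint intertwining reduces the Stinespring conjugation to $(v\tp w)^*(\alpha\rho\alpha^*\otimes 1)(v\tp w)$; one then traces out leg $1$, pulls the $w_{23}^*$ and $w_{23}$ factors outside the partial trace, and invokes the averaging lemma with $v$ in place of $z$. The relevant coordinate identity this time is $\sum_n v_{mn}^*v_{m'n}=\delta_{mm'}$, i.e.\ the other one-sided relation $v^c(v^c)^*=1$, again part of the assumed unitarity of $v^c$. The outcome $w^*(\Phi_u^{\bar v,w}(\rho)\otimes 1)w=\beta_{w^*}(\Phi_u^{\bar v,w}(\rho))$ is precisely $\G^{\rm op}$-covariance with respect to $(u^*,w^*)$.

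The only real subtlety is bookkeeping: $v_{13}$ and $w_{23}$ do not commute in $B(H_v)\otimes B(H_w)\otimes C(\G)$, so the averaging lemma must be applied to the factor whose matrix leg is currently being traced out, while the other factor is treated as an outer multiplier that genuinely commutes past the partial trace on that leg. Once this ordering is respected, the two halves of the proposition are essentially mirror images.
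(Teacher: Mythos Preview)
Your approach is correct and essentially mirrors the paper's: both verify covariance by conjugating with the intertwining isometry and then collapsing the traced-out tensor factor via a coordinate identity equivalent to (one half of) the unitarity of the relevant contragredient. For the second claim you take the adjoint of the intertwining relation directly, which is slightly cleaner than the paper's detour through the antipode but lands on the same coordinate computation.

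There is one slip in your averaging lemma. Expanding $(\Tr_z\otimes\iota)\bigl(z(y\otimes 1)z^*\bigr)$ with $z=\sum_{i,j}e_{ij}\otimes z_{ij}$ actually yields $\sum_{j,l}y_{jl}\bigl(\sum_i z_{ij}z_{il}^*\bigr)$, so the identity you need is $\sum_i z_{ij}z_{il}^*=\delta_{jl}$, i.e.\ $(z^c)^*z^c=1$ (equivalently $z^t z^c=1$, exactly the relation the paper invokes). The identity you wrote, $\sum_k z_{ik}z_{jk}^*=\delta_{ij}$, is simply $zz^*=1$, which holds for any unitary $z$ and would not explain why the hypothesis on $z^c$ is needed. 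Your second-part identity $\sum_n v_{mn}^*v_{m'n}=\delta_{mm'}$, i.e.\ $v^c(v^c)^*=1$, is correctly identified; so the two halves genuinely use the two different one-sided relations contained in unitarity of the contragredient, not the same one.
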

\begin{proof}
We first check the case of $\Phi_u^{v, \bar{w}}$. For any quantum state $\rho \in B(H_u)$ we have
	\begin{align*}
		\lefteqn{(\Phi_u^{v, \bar{w}} \otimes \iota)(u (\rho \otimes 1)u^*)}\\
		& = \iota \otimes {\rm Tr} \otimes \iota [(\alpha^{v,w}_u \otimes \iota) u (\rho \otimes 1)u^* ((\alpha^{v,w}_u)^* \otimes \iota)]\\
		& = \iota \otimes {\rm Tr} \otimes \iota [ (v\tp w)(\alpha^{v,w}_u \otimes \iota) (\rho \otimes 1)((\alpha^{v,w}_u)^* \otimes \iota)(v\tp w)^* ]\\
		& = \sum^{d_v}_{i,j,i',j'=1}\sum^{d_w}_{k,l,k',l'=1} \iota \otimes {\rm Tr} [(|i \ra \la j| \otimes |k \ra \la l| )\alpha^{v,w}_u\rho (\alpha^{v,w}_u)^*(|j' \ra \la i'| \otimes |l' \ra \la k'| )] \otimes v_{ij}w_{kl}w^*_{k'l'}v^*_{i'j'}\\
		& = \sum^{d_v}_{i,j,i',j'=1}\sum^{d_w}_{l,l'=1} \iota \otimes {\rm Tr} [(|i \ra \la j| \otimes |l' \ra \la l| )\alpha^{v,w}_u\rho (\alpha^{v,w}_u)^*(|j' \ra \la i'| \otimes 1)] \otimes v_{ij}(\sum^{d_w}_{k=1}w_{kl}w^*_{k'l'})v^*_{i'j'}\\
		& = \sum^{d_v}_{i,j,i',j'=1}\sum^{d_w}_{l,l'=1} \iota \otimes {\rm Tr} [(|i \ra \la j| \otimes |l' \ra \la l| )\alpha^{v,w}_u\rho (\alpha^{v,w}_u)^*(|j' \ra \la i'| \otimes 1)] \otimes v_{ij}(w^tw^c)_{ll'}v^*_{i'j'}\\
		& = \sum^{d_v}_{i,j,i',j'=1}\iota \otimes {\rm Tr} [(|i \ra \la j| \otimes 1)\alpha^{v,w}_u\rho (\alpha^{v,w}_u)^*(|j' \ra \la i'| \otimes 1)] \otimes v_{ij}v^*_{i'j'}\\
		& = v (\Phi_u^{v, \bar{w}}(\rho) \otimes 1) v^*,
	\end{align*}
where we use tracial property for the fourth equality and the assumption that $w^c$ is unitary for $(w^tw^c)_{ll'} = \delta_{ll'}$.

For $\Phi_u^{\bar{v}, w}$ we observe that
	$$(\alpha^{v,w}_u \otimes \iota) u^*  (|\xi\ra \otimes 1 ) = (\alpha^{v,w}_u \otimes S) u  ( |\xi\ra \otimes 1)  = (\iota \otimes S)[(v\tp w) (\alpha^{v,w}_u  |\xi\ra \otimes 1) ],$$
where $S$ is the antipode of the quantum group $\G$. 
Thus, we get
	\begin{align*}
		\lefteqn{(\Phi_u^{\bar{v}, w} \otimes \iota)(u^* (\rho \otimes 1)u )}\\
		& = \sum^{d_v}_{i,j,i',j'=1}\sum^{d_w}_{k,l,k',l'=1} {\rm Tr} \otimes \iota [(|i \ra \la j| \otimes |k \ra \la l| )\alpha^{v,w}_u\rho (\alpha^{v,w}_u)^*(|j' \ra \la i'| \otimes |l' \ra \la k'| )] \otimes w^*_{lk}v^*_{ji}v_{j'i'}w_{l'k'}.
	\end{align*}
Then, we get the wanted conclusion by the same argument.
\end{proof}

The property $\G$-covariance has the following useful consequence.

\begin{prop}\label{prop-CGchannel-bistochastic}
Let $\Phi: B(H_u) \to B(H_v)$ be a quantum channel which is $\G$-covariant  with respect to a pair of unitary representations $(u,v)$ of a compact quantum group $\G$. If, in addition, $v$ is assumed to be irreducible, then $\Phi$ is bistochastic. In particular, all CG-channels associated to a Kac type compact quantum group are bistochastic.
\end{prop}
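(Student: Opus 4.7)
The plan is to evaluate the covariance identity at the maximally mixed input and then invoke Schur's lemma for the irreducible representation $v$. Concretely, I would first apply the $\G$-covariance relation
\[
(\Phi \otimes \iota)(u(\rho \otimes 1)u^*) = v(\Phi(\rho) \otimes 1)v^*
\]
to the tracial state $\rho = 1_u/d_u$. Since $u$ is a unitary representation, $u(1_u \otimes 1)u^* = 1_u \otimes 1$, so the left-hand side collapses to $\Phi(1_u) \otimes 1$. The covariance identity therefore reduces to
\[
\Phi(1_u) \otimes 1 = v(\Phi(1_u) \otimes 1)v^*,
\]
or equivalently $(\Phi(1_u) \otimes 1)v = v(\Phi(1_u) \otimes 1)$, which is exactly the statement that $\Phi(1_u) \in \Hom(v,v)$.

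Next, since $v$ is assumed irreducible, $\Hom(v,v) = \C \cdot 1_v$, so there exists a scalar $c \in \C$ with $\Phi(1_u) = c \cdot 1_v$. To pin down $c$, I would use that $\Phi$ is trace preserving: taking traces on both sides of $\Phi(1_u) = c \cdot 1_v$ yields $d_u = c\,d_v$, hence $c = d_u/d_v$. This shows $\Phi(1_u/d_u) = 1_v/d_v$, i.e. $\Phi$ is bistochastic.

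For the ``in particular'' part, I would invoke the previous proposition: in the Kac-type setting, every irreducible unitary representation has unitary contragredient, so the hypotheses on $w^c$ (respectively $v^c$) are automatically satisfied, and the CG-channels $\Phi_u^{v,\bar w}$ and $\Phi_u^{\bar v,w}$ are $\G$-covariant (respectively $\G^{\mathrm{op}}$-covariant) with respect to $(u,v)$ (respectively $(u^*,w^*)$), with the target representation $v$ (respectively $w^*$) irreducible. The already-established general statement then applies directly.

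No part of this argument is really an obstacle: the only subtlety is that the covariance relation as written in the definition involves $\iota \otimes \Phi$, but the CG-channel computation in the preceding proposition makes clear that the intended identity is $(\Phi\otimes\iota)\circ\beta_u = \beta_v\circ\Phi$, and it is this version that feeds cleanly into the Schur-type argument above.
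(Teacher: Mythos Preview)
Your argument is correct and follows essentially the same route as the paper: evaluate the covariance identity at the identity (equivalently, observe $1_u/d_u\in\Hom(u,u)$), use Schur's lemma on the irreducible $v$ to get $\Phi(1_u)\in\C\cdot 1_v$, and fix the scalar via trace preservation. The paper's proof is simply a terser phrasing of exactly this.
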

\begin{proof}
Since $\Phi$ is $\G$-covariant and $\frac{1_u}{d_u} \in \text{Hom}(u,u)$, we get $\Phi(\frac{1_u}{d_u}) \in \text{Hom}(v,v).$ But irreducibility and Schur's lemma then give $\Phi(\frac{1_u}{d_u})  \in \C I$, which implies $\Phi(\frac{1_u}{d_u})= \frac{1_v}{d_v}$.
\end{proof}

The following Proposition tells us that, under the assumption that $\g$ is of Kac type and $u\subseteq v \tp w$, the orthogonal projection from $H_v\otimes H_w$ onto $H_u$ can be obtained by applying an averaging technique using the Haar state, for each unit vector $\xi\in H_u$. Moreover, together with Theorem \ref{thm:choi-eq}, the following Proposition will be used to characterize EBT for TL-channels.

\begin{prop}\label{prop:ave}
Let $\g$ be a compact quantum group of Kac type and $u, v, w \in \Irr(\G)$ with $u \subset v \tp w$. Then for any unit vector $\xi\in \alpha^{v,w}_u(H_u)\subseteq H_v\otimes H_w $ we have
\[\frac{1}{d_u}\alpha^{v,w}_u (\alpha^{v,w}_u)^*= (\iota \otimes \iota\otimes h)((v \tp w)^* ( |\xi\ra\la \xi |\otimes 1 ) (v \tp w)) .\]
\end{prop}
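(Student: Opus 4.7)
The plan is to reduce the computation to a Schur-type orthogonality relation for the matrix coefficients of the irreducible representation $u$, by exploiting the intertwining property of $\alpha := \alpha^{v,w}_u$. I set $V := v \tp w$ (a unitary representation of $\G$ on $H_v \otimes H_w$) and let $\eta \in H_u$ be the unique unit vector with $\alpha \eta = \xi$, so that $|\xi\ra\la\xi| = \alpha |\eta\ra\la\eta| \alpha^*$.

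The first step is to push the $V^*$ and $V$ appearing on the right hand side through the isometry $\alpha \otimes 1$. The intertwining identity $V(\alpha \otimes 1) = (\alpha \otimes 1) u$ combined with unitarity of $V$ rearranges to $V^*(\alpha \otimes 1) = (\alpha \otimes 1) u^*$; taking the adjoint gives $(\alpha^* \otimes 1) V = u(\alpha^* \otimes 1)$. Substituting these,
\[V^*(|\xi\ra\la\xi|\otimes 1)V = (\alpha \otimes 1)\, u^*(|\eta\ra\la\eta|\otimes 1)\, u\, (\alpha^*\otimes 1).\]
Because $\alpha$ acts trivially on the $C(\G)$-factor, applying $\iota \otimes \iota \otimes h$ pulls $\alpha$ and $\alpha^*$ outside, and the desired identity reduces to showing
\[(\iota \otimes h)\bigl(u^*(|\eta\ra\la\eta| \otimes 1)\, u\bigr) = \tfrac{1}{d_u}\, 1_u.\]

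The final step is a direct calculation using Schur orthogonality. Expanding $u = \sum_{i,j} |e_i\ra\la e_j| \otimes u_{ij}$ in an orthonormal basis of $H_u$ and writing $\eta = \sum_m \eta_m e_m$, the left hand side becomes
\[\sum_{i,j,k,l} \eta_i\, \bar\eta_k\, h(u_{ij}^*\, u_{kl})\, |e_j\ra\la e_l|.\]
Since $\G$ is of Kac type the Haar state $h$ is tracial, so the standard Woronowicz orthogonality relation $h(u_{kl}\, u_{ij}^*) = \delta_{ik}\delta_{jl}/d_u$ yields $h(u_{ij}^*\, u_{kl}) = \delta_{ik}\delta_{jl}/d_u$ as well. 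This collapses the double sum to $(\|\eta\|^2/d_u)\, 1_u = (1/d_u)\, 1_u$, completing the argument.

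There is no genuine obstacle in this plan; the only mildly subtle point is choosing the right direction of the intertwining identity at the start, namely $V^*(\alpha\otimes 1) = (\alpha \otimes 1)u^*$ obtained from unitarity of $V$, rather than the more obvious $V(\alpha\otimes 1) = (\alpha\otimes 1)u$. The Kac-type hypothesis enters precisely through traciality of $h$, which is exactly what is needed to pass from the standard orthogonality formula $h(u_{ab} u_{cd}^*)$ to the expression $h(u_{ij}^* u_{kl})$ that appears in the computation.
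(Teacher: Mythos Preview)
Your proof is correct and follows essentially the same approach as the paper: both arguments push the intertwining isometry through $(v\tp w)^*$ to reduce to the identity $(\iota\otimes h)\bigl(u^*(B\otimes 1)u\bigr)=\tfrac{\Tr(B)}{d_u}1_u$, which is Schur orthogonality for Kac-type quantum groups. The only cosmetic difference is that the paper tests the operator $A$ against every irreducible component $\alpha_{u'}^{v,w}$ of $v\tp w$, whereas you write $|\xi\ra\la\xi|=\alpha|\eta\ra\la\eta|\alpha^*$ at the outset and compute the right-hand side directly; your route is slightly more streamlined but the substance is identical.
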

\begin{proof}
Let $A=\displaystyle (  \iota \otimes \iota\otimes h)((v \tp w)^* (1\otimes |\xi\ra\la \xi | ) (v \tp w))$. Then, in order to reach the conclusion, it is enough to show that 
\[\la \eta | (\alpha^{v,w}_{u'})^*A \alpha^{v,w}_{u'} |\eta\ra=\frac{\delta_{u,u'}}{d_u} 1_u \]
for any irreducible components $u'$ of $v\tp w$ and any $\eta\in H_{u'}$. Indeed,

\begin{align*}
\la \eta | (\alpha^{v,w}_{u'})^*A \alpha^{v,w}_{u'} |\eta\ra&=h([ (\la \eta |( \alpha^{v,w}_{u'})^*  \otimes 1) (v\tp w)^* ] (|\xi \ra\la \xi |\otimes 1) [(v\tp w)(\alpha^{v,w}_{u'} |\eta \ra \otimes 1)] )\\
&=h( ( \la \eta |\otimes 1 )(u')^*  ( (\alpha^{v,w}_{u'})^* |\xi \ra\la \xi | \alpha^{v,w}_{u'}\otimes 1 )  u'( |\eta \ra \otimes 1)) .
\end{align*}

Then the facts that $\displaystyle ( \iota\otimes h)((u')^*(B\otimes 1) u')=\frac{\Tr(B)}{d_{u'}}1_{u'}$ and 
\[ \Tr((\alpha^{v,w}_{u'})^*|\xi \ra\la \xi | \alpha^{v,w}_{u'})= \la \xi | \alpha^{v,w}_{u'}(\alpha^{v,w}_{u'})^*|\xi\ra = \delta_{u,u'} \] 
complete the proof.

\end{proof}

\section{Temperley-Lieb Channels} \label{TL-diagrams}\label{sec:TL}

\subsection{Free orthogonal quantum groups $O_F^+$}

Let us fix an integer $N \ge 2$, $F \in \text{GL}_N(\C)$ satisfying $F \bar F = \pm  1$. We define $C(O_F^+)$ as the universal $C^*$-algebra generated by $u_{ij}$ $(1\leq i,j\leq N)$ with the defining relations (1) $\displaystyle u^*u= 1_N \otimes 1 = uu^*$ and (2) $u=(F\otimes 1) u^c (F^{-1}\otimes 1)$ where $u=(u_{ij})_{1\leq i,j\leq N}\in B(\Comp^N )\otimes C(O_F^+)$ that is called the {\it fundamental representation}. Then, together with a unital $*$-homomorphism $\Delta:C(O_F^+)\rightarrow C(O_F^+)\otimes_{\min}C(O_F^+)$ determined by 
$$\Delta(u_{ij})=\sum_{k=1}^N u_{ik}\otimes u_{kj},$$
$O_F^+=(C(O_F^+),\Delta)$ forms a compact quantum group, which is called the free orthogonal quantum group with parameter matrix $F$ \cite{VaWa96, Ba96, Ba97}. In particular, $O_F^+=SU(2)$ if $F=\left (\begin{array}{cc} 0&1\\ -1&0 \end{array} \right )$ and we denote by $O_N^+$ if $F=1_N$. Note that $O^+_F$ is of Kac type if and only if $F$ is unitary (\cite{Ba97}), which covers both of the above cases.

\subsection{Representations of $O_F^+$}

It is known from \cite{Ba96} that the irreducible representations of $O_F^+$ can be labelled  $(v^k)_{k \in \N_0}$ (up to unitary equivalence) in such a way that $v^0 = 1$, $v^1 = u$, the fundamental representation, $v^l \cong \overline{v^l}$, and the following fusion rule holds:
\begin{align} \label{frules}
v^l \tp v^m \cong \bigoplus_{0 \le r \le \min\{l,m\}} v^{l+m - 2r}.
\end{align}

Denote by $H_k$ the Hilbert space associated to $v^k$.  Then $H_0 = \C$, $H_1 = \C^N$, and \eqref{frules} shows that the dimensions $\dim H_k$ satisfy the recursion relations 
$\dim H_1 \dim H_k = \dim H_{k+1} + \dim H_{k-1}$.  Defining the quantum parameter \[q_0:= \frac{1}{N}\Big(\frac{2}{1+ \sqrt{1 -4/N^2}}\Big) \in (0,1],\] then one has $q_0 +q_0^{-1} = N$, and it can be shown by induction that the dimensions $\dim H_k$ are given by the {\it quantum integers} \[\dim H_k = [k+1]_{q_0}: = {q_0}^{-k}\Big(\frac{1-{q_0}^{2k+2}}{1-{q_0}^2}\Big) \qquad (N \ge 3).
 \]
When $N=2$, we have $q_0=1$, and then $\dim H_k = k+1 = \lim_{q_0 \to 1^-} [k+1]_{q_0}$.  Note that for $N \ge 3$, we have the exponential growth asymptotic $[k+1]_{q_0} \sim N^k$ (as $N \to \infty$).

We now describe the explicit construction of the representations $v^k$ and their corresponding Hilbert spaces $H_k$ due to Banica \cite{Ba96}.  (See also the description in \cite[Section 7]{VaVe07}).  The idea is that according to the fusion rules \eqref{frules}, the $k$-th tensor power $u^{\tiny \tp k}$ of the fundamental representation contains exactly one irreducible subrepresentation equivalent to $v^k$.  In particular, if we agree to explicitly identify $v^k$ as a subrepresentation of $u^{\tiny \tp k}$, then there exists a unique projection $0 \ne p_k \in \Hom_{O^+_F}(u^{{\tiny \tp} k}, u^{\tiny \tp k}) \subset B(H_1^{\otimes k})$ called the {\it Jones-Wenzl projection} \cite{Jo83,We87} satisfying $H_k = p_k(H_1^{\otimes k})$ and 
$$v^k = (p_k \otimes 1)u^{\tiny \tp k}(p_k \otimes 1) \in B(H_k)\otimes C(O_F^+).$$ 

Thus, we are left with the problem of describing the projection $p_k$.  To this end,   fix an orthonormal basis $(e_i)_{i=1}^N$ for $H_1= \C^N$, and put 
\begin{align} \label{cup}
\cup_F = \sum_{i=1}^N e_i \otimes Fe_i.
\end{align}  It is then a simple matter to check that $\cup_F  \in \text{Hom}_{O^+_F}(1,u\tp u)$, i.e. $ u^{\tiny \tp 2}(\cup_F \otimes 1) = (  \cup_F\otimes 1 ) $.  In particular, $\iota_{H_1^{\otimes i-1}} \otimes \cup_F\otimes  \iota_{H_1^{\otimes k-i-1}} \in \Hom_{O^+_F}(u^{\tiny \tp (k-2)}, u^{\tiny \tp k})$ for each $1 \le i \le k-1$.  Using these observations, we inductively define $(p_k)_{k \ge 1}$ using $p_1 = \iota_{H_1}$ together with the so-called {\it Wenzl recursion}
\begin{align}\label{Wenzl}
p_k = \iota_{H_1} \otimes p_{k-1} - \frac{[k-1]_{q}}{[k]_{q}}(\iota_{H_1}\otimes p_{k-1})(\cup_F \cup_F^* \otimes \iota_{H_1^{\otimes k-2}})(\iota_{H_1}\otimes p_{k-1}) \qquad (k \ge 2),
\end{align}
where $q = q(F) \in (0, q_0]$ is another quantum parameter defined so that $q+q^{-1} = \Tr(F^*F)$.

The Jones-Wenzl projections first appeared in the context of II$_1$-subfactors \cite{Jo83}.  The shared connection between subfactor theory and the representation theory of $O^+_F$ is through the famous {\it Temperley-Lieb category.}  Indeed, as explained for example in \cite{Ba96, BrCo18b, BrCo17b}, given $d \in (-\infty, 2] \cup [2, \infty)$ the Temperley-Lieb Category $\TL(d)$ is defined to be the strict C$^\ast$-tensor category generated by two simple objects $\{0,1\}$, where $0$ denotes the unit object for the tensor category, and $1 \ne 0$ is a self-dual simple object with the property that the morphism spaces $\TL_{k,l}(d):=\text{Hom}(1^{\otimes k}, 1^{\otimes l} )$ $(k,l \in \N)$ are generated by the identity map $ \iota \in \text{Hom}(1,1)$ together with a unique morphism $\cup\in \text{Hom}(0, 1 \otimes 1)$ satisfying $\cap\circ\cup = |d| \in \Hom(0,0) = \C$ and the ``snake equation'' $(\iota \otimes \cap)(\cup \otimes \iota)  = (\cap \otimes \iota)(\iota \otimes \cup) = \text{sgn}(d)\iota$.
Here, the ``cap'' $\cap$ is simply the adjoint $\cup^* \in \text{Hom}(1 \otimes 1,0)$ of the ``cup'' $\cup$.  On the other hand, we have the concrete C$^\ast$-tensor category $\text{Rep}(O^+_F)$ of finite dimensional unitary representations of $O^+_F$, and it was shown by Banica \cite{Ba96} that if $d = \Tr((F\bar F)(F^*F))$, then there exists a {\it unitary fiber functor} $\TL(d) \to \text{Rep}(O^+_F)$ which is  determined by mapping the simple objects $0,1 \in \TL(d)$ to $v^0, v^1 \in \text{Rep}(O^+_F)$, respectively, and by mapping the generating morphisms as follows
\[
\iota \in \TL_{1,1}(d) \mapsto \iota_{H_1} \in \text{Hom}_{O^+_F}(v^1, v^1)  \quad \& \quad  \cup \in \TL_{0,2}(d) \mapsto \cup_F \in \text{Hom}_{O^+_F}(v^0, v^1 \tp v^1).  
\]    
In other words, with $d$ and $F$ as above, we can concretely realize $\TL(d)$ in terms of the subcategory of finite dimensional Hilbert spaces $\text{Rep}(O^+_F)$.  In particular, for calculations involving morphisms and objects in $\text{Rep}(O^+_F)$, one can perform these calculations using the well-known planar diagrammatic calculus in the Temperley-Lieb category $\TL(d)$ \cite{BrCo17b, KaLi94, CaFlSa95}, which we now briefly review. 

\subsection{Diagrammatic calculus for $\text{Rep}(O^+_F)$}\label{subsec:Diagram}  In the following, we continue to use the notations (e.g. $H_k  = p_k(H_1^{\otimes k})$, $\cup_F$, etc.) defined above.  We use the standard string diagram calculus to depict linear transformations between Hilbert spaces.  That is, a linear operator $\rho \in B(H_k, H_l)$ will be diagrammatically represented as a string diagram
\[  \begin{tikzpicture}[baseline=(current  bounding  box.center),
			wh/.style={circle,draw=black,thick,inner sep=.5mm},
			bl/.style={circle,draw=black,fill=black,thick,inner sep=.5mm}, scale = 0.2]
		
\node  at (-1,9) {$l$};	
\node  at (-1,-1) {$k$};	
\node at (0,4) {$\rho$};

\draw [-, color=black]
		(0,6) -- (0,8);
		
\draw [-, color=black]
		(0,0) -- (0,2);		

\draw (-2,2) rectangle (2,6);
	\end{tikzpicture}\] 
with the input Hilbert space at the bottom of the diagram, and the output at the top.  The string corresponding to $H_l$ will be labeled by $l$.  We will generally omit the string corresponding to $H_0 = \C$, so a vector $\xi \in H_k \cong B(\C, H_k)$ and a covector $\xi^* \in H_k^* \cong B(H_k,\C)$ will be drawn, respectively, as

\[
  \begin{tikzpicture}[baseline=(current  bounding  box.center),
			wh/.style={circle,draw=black,thick,inner sep=.5mm},
			bl/.style={circle,draw=black,fill=black,thick,inner sep=.5mm}, scale = 0.2]
		
\node  at (-1,9) {$k$};	
\node at (0,4) {$\xi$};

\draw [-, color=black]
		(0,6) -- (0,8);
		
\draw (-2,2) rectangle (2,6);
	\end{tikzpicture}\, , \qquad   \begin{tikzpicture}[baseline=(current  bounding  box.center),
			wh/.style={circle,draw=black,thick,inner sep=.5mm},
			bl/.style={circle,draw=black,fill=black,thick,inner sep=.5mm}, scale = 0.2]
		
\node  at (-3,-1) {$k$};	
\node at (-2,4) {$\xi^*$};

\draw [-, color=black]
		(-2,0) -- (-2,2);		

\draw (-4,2) rectangle (0,6);
	\end{tikzpicture}\, .
\]
   Similarly, $\rho \in B (H_k \otimes H_l, H_{k'} \otimes H_{l'})$ is denoted using parallel input/output strings

 \[  \begin{tikzpicture}[baseline=(current  bounding  box.center),
			wh/.style={circle,draw=black,thick,inner sep=.5mm},
			bl/.style={circle,draw=black,fill=black,thick,inner sep=.5mm}, scale = 0.2]
		
\node  at (-2,9) {$k'$};	
\node  at (-2,-1) {$k$};	
\node  at (2,9) {$l'$};	
\node  at (2,-1) {$l$};
\node at (0,4) {$\rho$};

\draw [-, color=black]
		(-1,6) -- (-1,8);
		
\draw [-, color=black]
		(-1,0) -- (-1,2);

\draw [-, color=black]
		(1,6) -- (1,8);
		
\draw [-, color=black]
		(1,0) -- (1,2);			

\draw (-2,2) rectangle (2,6);
	\end{tikzpicture}.\] 
We define (for later use) the {\it ($k$-th) quantum trace}\footnote{The term ``trace'' comes from the fact that under the fiber functor $\TL(d) \to \text{Rep}(O^+_F)$, $\tau_k$ corresponds to the well-known Markov trace $\tau_k:\TL_{k,k}(d) \to \C$ obtained by tracial closure of Temperley-Lieb diagrams \cite{KaLi94}.} functional \[\tau_k:\mc B(H_1^{\otimes k}) \to \C, \qquad \tau_k(\rho) := \Tr_{H_1}^{\otimes k}((F^t\bar F )^{\otimes k})\rho) \qquad (k \in \N),\] which is depicted by the closure of a string diagram as follows: 

 \[  \begin{tikzpicture}[baseline=(current  bounding  box.center),
			wh/.style={circle,draw=black,thick,inner sep=.5mm},
			bl/.style={circle,draw=black,fill=black,thick,inner sep=.5mm}, scale = 0.2]
		
\node  at (-1,9) {$k$};	

\node at (0,4) {$\rho$};

\draw [-, color=black]
		(0,6) -- (0,8);
		
\draw [-, color=black]
		(0,0) -- (0,2);		

\draw (-1.5,2) rectangle (1.5,6);

\draw [-, color=black]
		(0,8) to [bend left = 90] (0,0);
	\end{tikzpicture} = \begin{tikzpicture}[baseline=(current  bounding  box.center),
			wh/.style={circle,draw=black,thick,inner sep=.5mm},
			bl/.style={circle,draw=black,fill=black,thick,inner sep=.5mm}, scale = 0.2]
		
\node  at (-1,9) {$k$};	

\node at (0,4) {$\rho$};

\draw [-, color=black]
		(0,6) -- (0,8);
		
\draw [-, color=black]
		(0,0) -- (0,2);		

\draw (-1.5,2) rectangle (1.5,6);

\draw [-, color=black]
		(0,8) to [bend right = 90] (0,0);
	\end{tikzpicture}\, .\] 
Composition of linear maps is depicted by vertical concatenation of string diagrams and tensoring is depicted by placing them in parallel, respectively.  

\[  \begin{tikzpicture}[baseline=(current  bounding  box.center),
			wh/.style={circle,draw=black,thick,inner sep=.5mm},
			bl/.style={circle,draw=black,fill=black,thick,inner sep=.5mm}, scale = 0.2]
		
\node  at (-1,9) {$l$};	
\node  at (-1,-1) {$k$};	
\node at (0,4) {$\rho \rho'$};

\draw [-, color=black]
		(0,6) -- (0,8);
		
\draw [-, color=black]
		(0,0) -- (0,2);		

\draw (-2,2) rectangle (2,6);
	\end{tikzpicture} = \begin{tikzpicture}[baseline=(current  bounding  box.center),
			wh/.style={circle,draw=black,thick,inner sep=.5mm},
			bl/.style={circle,draw=black,fill=black,thick,inner sep=.5mm}, scale = 0.2]

\node  at (-1,13) {$l$};	
\node  at (-1,-5) {$k$};	
		
\node at (0,8) {$\rho $};

\draw [-, color=black]
		(0,10) -- (0,12);
		
\draw [-, color=black]
		(0,4) -- (0,6);		

\draw (-2,6) rectangle (2,10);

\node at (0,0) {$ \rho'$};

\draw [-, color=black]
		(0,2) -- (0,4);
		
\draw [-, color=black]
		(0,-4) -- (0,-2);		

\draw (-2,-2) rectangle (2,2);
	\end{tikzpicture}, \qquad \begin{tikzpicture}[baseline=(current  bounding  box.center),
			wh/.style={circle,draw=black,thick,inner sep=.5mm},
			bl/.style={circle,draw=black,fill=black,thick,inner sep=.5mm}, scale = 0.2]
		
\node  at (-2,9) {$k'$};	
\node  at (-2,-1) {$k$};	
\node  at (2,9) {$l'$};	
\node  at (2,-1) {$l$};
\node at (0,4) {$\rho \otimes \rho'$};

\draw [-, color=black]
		(-1,6) -- (-1,8);
		
\draw [-, color=black]
		(-1,0) -- (-1,2);

\draw [-, color=black]
		(1,6) -- (1,8);
		
\draw [-, color=black]
		(1,0) -- (1,2);			

\draw (-3,2) rectangle (3,6);
	\end{tikzpicture} =  \begin{tikzpicture}[baseline=(current  bounding  box.center),
			wh/.style={circle,draw=black,thick,inner sep=.5mm},
			bl/.style={circle,draw=black,fill=black,thick,inner sep=.5mm}, scale = 0.2]
		
\node  at (-1,9) {$k'$};	
\node  at (-1,-1) {$k$};	
\node at (0,4) {$\rho$};

\draw [-, color=black]
		(0,6) -- (0,8);
		
\draw [-, color=black]
		(0,0) -- (0,2);		

\draw (-2,2) rectangle (2,6);
	\end{tikzpicture} \ \  \begin{tikzpicture}[baseline=(current  bounding  box.center),
			wh/.style={circle,draw=black,thick,inner sep=.5mm},
			bl/.style={circle,draw=black,fill=black,thick,inner sep=.5mm}, scale = 0.2]
		
\node  at (1,9) {$l'$};	
\node  at (1,-1) {$l$};	
\node at (0,4) {$\rho'$};

\draw [-, color=black]
		(0,6) -- (0,8);
		
\draw [-, color=black]
		(0,0) -- (0,2);		

\draw (-2,2) rectangle (2,6);
	\end{tikzpicture}
. \] 

Let us end this subsection by describing the string-diagrammatic representation of the maps specific to the representation category $\text{Rep}(O^+_F)$.  Recall that for $\text{Rep}(O^+_F)$, we have the fundamental generating morphisms $\iota_{H_k}$, $\cup_F$, $\cap_F := \cup_F^*$.  We depict these maps as follows:
\[
 \begin{tikzpicture}[baseline=(current  bounding  box.center),
			wh/.style={circle,draw=black,thick,inner sep=.5mm},
			bl/.style={circle,draw=black,fill=black,thick,inner sep=.5mm}, scale = 0.2]
		
\node  at (-1,9) {$k$};	
\node  at (-1,-1) {$k$};	
\node at (0,4) {$\iota_{H_k}$};

\draw [-, color=black]
		(0,6) -- (0,8);
		
\draw [-, color=black]
		(0,0) -- (0,2);		

\draw (-2,2) rectangle (2,6);
	\end{tikzpicture} = \   \begin{tikzpicture}[baseline=(current  bounding  box.center),
			wh/.style={circle,draw=black,thick,inner sep=.5mm},
			bl/.style={circle,draw=black,fill=black,thick,inner sep=.5mm}, scale = 0.2]

\node  at (-1,9) {$k$};	

\draw [-, color=black]
		(0,0) -- (0,8);

\end{tikzpicture} \ , \qquad \begin{tikzpicture}[baseline=(current  bounding  box.center),
			wh/.style={circle,draw=black,thick,inner sep=.5mm},
			bl/.style={circle,draw=black,fill=black,thick,inner sep=.5mm}, scale = 0.2]
		
\node  at (-2,9) {$1$};	
\node  at (2,9) {$1$};	
\node at (0,4) {$\cup_F$};

\draw [-, color=black]
		(-1,6) -- (-1,8);

\draw [-, color=black]
		(1,6) -- (1,8);

\draw (-2,2) rectangle (2,6);
	\end{tikzpicture} = \begin{tikzpicture}[baseline=(current  bounding  box.center),
			wh/.style={circle,draw=black,thick,inner sep=.5mm},
			bl/.style={circle,draw=black,fill=black,thick,inner sep=.5mm}, scale = 0.2]
		
\node  at (-2,9) {$1$};	
\node  at (2,9) {$1$};	

\draw [-, color=black]
		(-1,6) -- (-1,8);

\draw [-, color=black]
		(1,6) -- (1,8);

\draw[-, color=black]
(-1,6) to [bend right = 90] (1,6);

	\end{tikzpicture}, \qquad \begin{tikzpicture}[baseline=(current  bounding  box.center),
			wh/.style={circle,draw=black,thick,inner sep=.5mm},
			bl/.style={circle,draw=black,fill=black,thick,inner sep=.5mm}, scale = 0.2]
		
\node  at (-2,-1) {$1$};	

\node  at (2,-1) {$1$};

\node at (0,4) {$\cap_F$};

\draw [-, color=black]
		(-1,0) -- (-1,2);

\draw [-, color=black]
		(1,0) -- (1,2);			

\draw (-2,2) rectangle (2,6);
	\end{tikzpicture} = \begin{tikzpicture}[baseline=(current  bounding  box.center),
			wh/.style={circle,draw=black,thick,inner sep=.5mm},
			bl/.style={circle,draw=black,fill=black,thick,inner sep=.5mm}, scale = 0.2]
		
\node  at (-2,-1) {$1$};	

\node  at (2,-1) {$1$};

\draw [-, color=black]
		(-1,0) -- (-1,2);

\draw [-, color=black]
		(1,0) -- (1,2);	
		
\draw [-, color=black]
		(-1,2) to [bend left = 90](1,2);	

	\end{tikzpicture}
. \]
Then one has that the fundamental Temperley-Lieb relations are graphically depicted.  For example, the value of a closed loop is $|d|$:   
\[
\|\cup_F\|^2 = \cap_F \circ \cup_F = \begin{tikzpicture}[baseline=(current  bounding  box.center),
			wh/.style={circle,draw=black,thick,inner sep=.5mm},
			bl/.style={circle,draw=black,fill=black,thick,inner sep=.5mm}, scale = 0.2]

\node  at (2,-1) {$1$};

\draw [-, color=black]
		(-1,0) -- (-1,2);

\draw [-, color=black]
		(1,0) -- (1,2);	
		
\draw [-, color=black]
		(-1,2) to [bend left = 90](1,2);	

\draw [-, color=black]
		(-1,0) to [bend right = 90](1,0);	

	\end{tikzpicture} = \Tr(F^*F) = |d|, 
\]
and the snake equations are given by
\[
(\iota_{H_1} \otimes \cap_F) (\cup_F \otimes \iota_{H_1}) =  \begin{tikzpicture}[baseline=(current  bounding  box.center),
			wh/.style={circle,draw=black,thick,inner sep=.5mm},
			bl/.style={circle,draw=black,fill=black,thick,inner sep=.5mm}, scale = 0.2]

\draw [-, color=black]
		(0,0)--(0,3);

\draw [-, color=black]
		(2,0)--(2,3);

\draw [-, color=black]
		(4,0)--(4,3);

\draw [-, color=black]
		(2,3) to [bend left = 90](4,3);

\draw [-, color=black]
		(0,0) to [bend right = 90](2,0);

\end{tikzpicture} = F \bar F = \text{sgn}(d) \ \begin{tikzpicture}[baseline=(current  bounding  box.center),
			wh/.style={circle,draw=black,thick,inner sep=.5mm},
			bl/.style={circle,draw=black,fill=black,thick,inner sep=.5mm}, scale = 0.2]

\draw [-, color=black]
		(0,0)--(0,3);

\end{tikzpicture} \
=
 \begin{tikzpicture}[baseline=(current  bounding  box.center),
			wh/.style={circle,draw=black,thick,inner sep=.5mm},
			bl/.style={circle,draw=black,fill=black,thick,inner sep=.5mm}, scale = 0.2]

\draw [-, color=black]
		(0,0)--(0,3);

\draw [-, color=black]
		(2,0)--(2,3);

\draw [-, color=black]
		(4,0)--(4,3);

\draw [-, color=black]
		(0,3) to [bend left = 90](2,3);

\draw [-, color=black]
		(2,0) to [bend right = 90](4,0);

\end{tikzpicture} = (\cap_F \otimes \iota_{H_1}) (\iota_{H_1} \otimes \cup_F). 
\]

\subsection{Temperley-Lieb Channels}
We now come to our main objects of study, which are the CG-channels associated to the irreducible representations of the quantum groups $O^+_F$, which, in view of the above connection with the Temperley-Lieb category,  we redub ``Temperley-Lieb channels'':

\begin{defn}
A triple $(k,l,m) \in \N_0^3$ is called {\it admissible} if there exists an integer $0 \le r \le \min\{l,m\}$ such that $k = l+m - 2r$. For an admissible triple $(k,l,m) \in \N_0^3$ we have $v^k \subset v^l \tp v^m$ with the intertwining isometry $\alpha^{l,m}_k:H_k \to H_l \otimes H_m$ and the corresponding CG-channels $\Phi_{v^k}^{\overline{v^l}, v^m}$ and $\Phi_{v^k}^{v^l, \overline{v^m}}$ (shortly, $\Phi_k^{\bar{l}, m}$ and $\Phi_k^{l, \bar{m}}$)
are called {\it ($O^+_F$-)Temperley-Lieb channels}.
\end{defn}

Let us now give a string-diagrammatic description of the covariant isometries $\alpha_k^{l,m}$ which define the TL-channels above. We begin by fixing an admissible triple $(k,l,m) \in \N_0^3$ and define 
\begin{align}\label{unnormal}
A_k^{l,m} =(p_l \otimes p_m)\Big(\iota_{H_{l-r}} \otimes \cup_F^r  \otimes \iota_{m-r}\Big)p_k \in \text{Hom}_{O^+_F}(v^k, v^l \tp v^m).
\end{align} 
where $\cup_F^r \in \text{Hom}_{O^+_F}(v^0, v^{\tiny \tp 2r})$ is defined recursively from
	$$\cup_F^1 := \cup_F,\;\; \cup_F^r := (\iota_{H_1^{\otimes r-1}} \otimes \cup_F \otimes \iota_{H_1^{\otimes r-1}})\cup_F^{r-1}.$$
In terms of  our string diagram formalism, $\cup_F^r$ is given by $r$ nested cups 
\[
 \begin{tikzpicture}[baseline=(current  bounding  box.center),
			wh/.style={circle,draw=black,thick,inner sep=.5mm},
			bl/.style={circle,draw=black,fill=black,thick,inner sep=.5mm}, scale = 0.2]

\node at (0,4) {$\cup_F^r$};

\draw [-, color=black]
		(-1,6) -- (-1,8);
		
\draw [-, color=black]
		(-2,6) -- (-2,8);
		
\node at (0,7) {$...$};

\draw [-, color=black]
		(1,6) -- (1,8);
		
\draw [-, color=black]
		(2,6) -- (2,8);

\draw (-2,2) rectangle (2,6);

\end{tikzpicture} = 
\begin{tikzpicture}[baseline=(current  bounding  box.center),
			wh/.style={circle,draw=black,thick,inner sep=.5mm},
			bl/.style={circle,draw=black,fill=black,thick,inner sep=.5mm}, scale = 0.2]

\draw [-, color=black]
		(-1,6) -- (-1,8);
		
\draw [-, color=black]
		(-2,6) -- (-2,8);
		
\node at (0,7) {$...$};

\draw [-, color=black]
		(1,6) -- (1,8);
		
\draw [-, color=black]
		(2,6) -- (2,8);
		
\draw [-, color=black]
		(2,6) -- (2,8);

\draw [-, color=black]
		(2,6) -- (2,8);

\draw [-, color=black]
		(-2,6) to [bend right = 90] (2,6);

\draw [-, color=black]
		(-1,6) to [bend right = 90] (1,6);

\end{tikzpicture},
\]
and $A_k^{l,m}$ is given by  \[
 A_k^{l,m} = \begin{tikzpicture}[baseline=(current  bounding  box.center),
			wh/.style={circle,draw=black,thick,inner sep=.5mm},
			bl/.style={circle,draw=black,fill=black,thick,inner sep=.5mm}, scale = 0.2]

\draw (-3,0) rectangle (3,2);
\node at (0,1) {$p_k$};

\draw (-9,10) rectangle (-4,12);
\node at (-6.5,11) {$p_l$};

\draw (4,10) rectangle (9,12);
\node at (6.5,11) {$p_m$};

\draw [-, color=black]
		(-3,2) -- (-9,10);

\draw [-, color=black]
		(-1,2) -- (-7,10);

\draw [-, color=black]
		(3,2) -- (9,10);

\draw [-, color=black]
		(1,2) -- (7,10);

\draw [-, color=black]
		(-4,10) to [bend right = 40] (4,10);

\draw [-, color=black]
		(-6,10) to [bend right = 90] (6,10);

\node at (0,8) {$\vdots$};
\node at (-4.3,5) {$\cdot \cdot$};
\node at (4.3,5) {$\cdot \cdot$};

\end{tikzpicture}
\]

The (non-zero) map $A_{k}^{l,m}$ is often called a {\it three-vertex} in the context of tensor category theory and Temperley-Lieb recoupling theory  \cite{KaLi94}, and (following standard conventions) the above string diagram for $A_k^{l,m}$ is simply drawn as a trivalent vertex:

\begin{center}

	\begin{tikzpicture}[baseline=(current  bounding  box.center),
			wh/.style={circle,draw=black,thick,inner sep=.5mm},
			bl/.style={circle,draw=black,fill=black,thick,inner sep=.5mm}, scale = 0.2]

\node  at (-5,0.5) {$A_{k}^{l,m} = $};	
\node  at (-5,5) {$l$};	
\node  at (5,5) {$m$};	
\node  at (-1,-5) {$k$};	

\draw [-, color=black]
		(-4,4) -- (0,0);

\draw [-, color=black]
		(0,-0) -- (4,4);

\draw [-, color=black]
		(0,-4) -- (0,0);

	\end{tikzpicture}.
 
\end{center}

We then have that the the adjoint $(A_k^{l,m})^*$ is obtained by rotating 180 degrees about the horizontal axis. 
\[
\begin{tikzpicture}[baseline=(current  bounding  box.center),
			wh/.style={circle,draw=black,thick,inner sep=.5mm},
			bl/.style={circle,draw=black,fill=black,thick,inner sep=.5mm}, scale = 0.2]

\node  at (-5,0.5) {$(A_{k}^{l,m})^* = \quad $};	
\node  at (-5,-5) {$l$};	
\node  at (5,-5) {$m$};	
\node  at (-1,5) {$k$};	

\draw [-, color=black]
		(-4,-4) -- (0,0);

\draw [-, color=black]
		(0,-0) -- (4,-4);

\draw [-, color=black]
		(0,4) -- (0,0);

	\end{tikzpicture}.
\]
From Schur's Lemma and irreducibility, it follows that our required isometry $\alpha_k^{l,m}$ must be a scalar multiple of the three-vertex $A_k^{l,m}$, and this scaling factor is given in terms of the so-called {\it theta-net}  $\theta_q(k,l,m)$ \cite{KaLi94}.
\[
\theta_q(k,l,m):= \tau_k((A_k^{l,m})^*A_k^{l,m}) = \begin{tikzpicture}[baseline=(current  bounding  box.center),
			wh/.style={circle,draw=black,thick,inner sep=.5mm},
			bl/.style={circle,draw=black,fill=black,thick,inner sep=.5mm}, scale = 0.17]

\node  at (-4.5,5) {{\scriptsize $l$}};	
\node  at (1.5,5) {{\scriptsize $m$}};	
\node  at (-1,-5) {{\scriptsize $k$}};	
\node at (-1,13) {{\scriptsize $k$}};

\draw [-, color=black]
		(-4,4) -- (0,0);

\draw [-, color=black]
		(0,-0) -- (4,4);

\draw [-, color=black]
		(0,-4) -- (0,0);
		
\draw [-, color=black]
		(-4,4) -- (0,8);
		
\draw [-, color=black]
		(4,4) -- (0,8);
		
\draw [-, color=black]
		(0,8) -- (0,12);		

\draw [-, color=black]
		(0,12) to [bend left =110] (0,-4);

	\end{tikzpicture} = \frac{[r]_q![l-r]_q![m-r]_q![k+r+1]_q!}{[l]_q![m]_q![k]_q!},
\]
where $q=q(F)$, $k=l+m - 2r$, and $[x]_q! = [x]_q[x-1]_q \ldots [2]_q[1]_q$ denotes the quantum factorial.  Then one has
\begin{center}

	\begin{tikzpicture}[baseline=(current  bounding  box.center),
			wh/.style={circle,draw=black,thick,inner sep=.5mm},
			bl/.style={circle,draw=black,fill=black,thick,inner sep=.5mm}, scale = 0.2]

\node  at (-22,0.5) {$\alpha_k^{l,m} = \Big(\frac{\tau_k(\iota_{H_k})}{ \tau_k((A_k^{l,m})^*A_k^{l,m})}\Big)^{1/2} A_k^{l,m} =  \Big(\frac{[k+1]_q}{\theta_q(k,l,m)}\Big)^{1/2}\ \ \  \ $};	
\node  at (-5,5) {$l$};	
\node  at (5,5) {$m$};	
\node  at (-1,-5) {$k$};	

\draw [-, color=black]
		(-4,4) -- (0,0);

\draw [-, color=black]
		(0,-0) -- (4,4);

\draw [-, color=black]
		(0,-4) -- (0,0);

	\end{tikzpicture}.

\end{center}

\subsection{Kac type Temperley-Lieb channels}
Throughout the rest of the paper we make the standing assumption that all free orthogonal quantum groups $O^+_F$ under consideration are of Kac type, which is equivalent to the unitarity of $F$ \cite{Ba97}. (In fact, for the most part we just consider $O^+_N$, however this slightly higher level of generality is useful at times, allowing us for exmple to prove results for $SU(2)$ simultaneously). The main reason for making this assumption is that for the calculations that follow, it is essential for us to have that the ``physical operations'' of taking partial traces in tensor product spaces such as $B(H_l \otimes H_m)$ agree with the ``quantum operations'' coming from taking (partial) quantum traces using the functionals  $\tau_k$ described above. In this case, we also have the handy feature that the $O^+_F$-covariant unit vectors $\alpha_0^{k,k}\in H_k \otimes H_k$ are all maximally entangled states. 

\begin{rem}
Note that when $O^+_F$ is of Kac type, we have that both the quantum parameters $q_0$ and $q$ defined above are equal (since $N = \Tr(F^*F)$ when $F$ is unitary).   From now on we simply use the letter $q$ to denote the quantum parameter.
\end{rem}

Of course, since in the Kac case the quantum traces and ordinary traces agree, we have the following diagrammatic representations for the Temperley-Lieb quantum channels $ \Phi_k^{\bar l, m},  \Phi_k^{l, \bar m}$:
\[ \Phi_k^{\bar l, m}(\rho)=\frac{[k+1]_q}{\theta_q(k,l,m)} \begin{tikzpicture}[baseline=(current  bounding  box.center),
			wh/.style={circle,draw=black,thick,inner sep=.5mm},
			bl/.style={circle,draw=black,fill=black,thick,inner sep=.5mm}, scale = 0.17]
\node  at (5,15) {$m$};	
\node  at (5,-7) {$m$};	
\node  at (-9,5) {$l$};	

\node  at (-1,9) {$k$};	
\node  at (-1,-1) {$k$};	
\node at (0,4) {$\rho$};

\draw [-, color=black]
		(-4,16) -- (0,12);

\draw [-, color=black]
		(0,12) -- (4,16);

\draw [-, color=black]
		(0,6) -- (0,12);
		
\draw [-, color=black]
		(-4,-8) -- (0,-4);
		
\draw [-, color=black]
		(4,-8) -- (0,-4);
		
\draw [-, color=black]
		(0,-4) -- (0,2);		

\draw [-, color=black]
		(-4,16) to [bend right=120] (-4,-8);

\draw (-2,2) rectangle (2,6);
	\end{tikzpicture} \quad, \quad  \Phi_k^{l, \overline{m}}(\rho)=\frac{[k+1]_q}{\theta_q(k,l,m)}\begin{tikzpicture}[baseline=(current  bounding  box.center),
			wh/.style={circle,draw=black,thick,inner sep=.5mm},
			bl/.style={circle,draw=black,fill=black,thick,inner sep=.5mm}, scale = 0.17]
\node  at (-5,15) {$l$};	
\node  at (-5,-7) {$l$};	
\node  at (8.5,5) {$m$};
\node  at (-1,9) {$k$};	
\node  at (-1,-1) {$k$};	
\node at (0,4) {$\rho$};

\draw [-, color=black]
		(-4,16) -- (0,12);

\draw [-, color=black]
		(0,12) -- (4,16);

\draw [-, color=black]
		(0,6) -- (0,12);
		
\draw [-, color=black]
		(-4,-8) -- (0,-4);
		
\draw [-, color=black]
		(4,-8) -- (0,-4);
		
\draw [-, color=black]
		(0,-4) -- (0,2);		

\draw [-, color=black]
		(4,16) to [bend left=120] (4,-8);

\draw (-2,2) rectangle (2,6);
	\end{tikzpicture}.\] 

Let us finish this section with an application of our string diagram formalism to the Choi maps associated to the TL-channels. The result below was proved for the cases of $SU(2)$ by Al-Nuwairan \cite{Al14} and $O^+_N$ in \cite{BrCo17b}.  The following general case follows by the exact same planar isotopy arguments used in \cite{BrCo17b}.

\begin{thm} \label{thm:choi-eq}
For any admissible triple $(k,l,m)\in \n^3_0$ the Choi matrices associated to any Kac type $O^+_F$-TL-channels $\Phi_k^{\bar{l}, m}$ and $\Phi_k^{l, \bar{m}}$ are given by
\begin{equation}\label{choi-eq}
C_{\Phi_k^{\overline{l}, m}} = \frac{[k+1]_q}{[l+1]_q} \alpha_l^{m,k}(\alpha_{l}^{m,k})^*,\;\;
C_{\Phi_k^{l, \overline{m}}} = \frac{[k+1]_q}{[m+1]_q} \alpha_m^{k,l}(\alpha_{m}^{k,l})^*,
\end{equation}
respectively. In particular, these Choi maps are scalar multiples of $O^+_F$-covariant projections onto irreducible subrepresentations.
\end{thm}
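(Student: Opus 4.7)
The plan is to express both sides of \eqref{choi-eq} as closed string diagrams in the Temperley-Lieb category and then verify equality by planar isotopy. Since $O_F^+$ is of Kac type, the $O_F^+$-covariant unit vector $\alpha_0^{k,k}\in H_k\otimes H_k$ is a maximally entangled state, so up to a conventional prefactor and a harmless identification on the ancilla tensor factor the Choi matrix can be written as
\[ C_{\Phi_k^{\bar l,m}} \;\propto\; \bigl(\Phi_k^{\bar l,m}\otimes \iota_{B(H_k)}\bigr)\bigl(\alpha_0^{k,k}(\alpha_0^{k,k})^*\bigr). \]
Using the diagrammatic formula $\alpha_k^{l,m} = \sqrt{[k+1]_q/\theta_q(k,l,m)}\,A_k^{l,m}$ and the diagrammatic representation of the partial trace, the right-hand side becomes a closed string diagram built from two copies of the three-vertex $A_k^{l,m}$ joined along an $l$-colored arc (from the partial trace over $H_l$ defining $\Phi_k^{\bar l,m}$) and along a nested $k$-colored cup-cap (from $\alpha_0^{k,k}(\alpha_0^{k,k})^*$), with free $m$- and $k$-colored strands at the top and bottom corresponding to the two tensor factors of $B(H_m\otimes H_k)$.

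The key step is then a planar isotopy that ``rotates'' the trivalent vertices. The nested $k$-cup joining the bottoms of the two three-vertices is used, via the snake identity applied strand-by-strand, to pull each $A_k^{l,m}$ into a three-vertex of the form $A_l^{m,k}$: the outgoing $k$-leg is bent around into an incoming leg, while the formerly traced $l$-legs remain in the middle of the diagram and join the two new trivalent vertices together. This is precisely the Frobenius-duality (``rotation of trivalent vertices'') move in Temperley-Lieb recoupling theory, and is the identical isotopy argument carried out in the $O_N^+$ case in \cite{BrCo17b} and in the $SU(2)$ case in \cite{Al14}. After the rotation, the resulting diagram is a scalar multiple of $A_l^{m,k}(A_l^{m,k})^*$, hence, after reinserting the normalization of $\alpha_l^{m,k}$ and using the cyclic symmetry $\theta_q(k,l,m)=\theta_q(l,m,k)$ of the theta-net (itself an instance of planar isotopy), all theta-net factors cancel and leave exactly the ratio $[k+1]_q/[l+1]_q$ in front of $\alpha_l^{m,k}(\alpha_l^{m,k})^*$. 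The formula for $\Phi_k^{l,\bar m}$ follows by the mirror-image argument, trading the roles of the $l$- and $m$-strands (equivalently, reflecting the diagram about a vertical axis), which yields the factor $[k+1]_q/[m+1]_q$.

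The second assertion is then immediate: admissibility of $(k,l,m)$ is symmetric under cyclic permutations (with $k=l+m-2r$ equivalent to $l=m+k-2(m-r)$), so $v^l\subset v^m\tp v^k$ and $\alpha_l^{m,k}$ is a genuine intertwining isometry. Hence $\alpha_l^{m,k}(\alpha_l^{m,k})^*$ is the orthogonal projection in $B(H_m\otimes H_k)$ onto a copy of the irreducible $v^l$, and it belongs to $\Hom(v^m\tp v^k,v^m\tp v^k)$, i.e.\ is $O_F^+$-covariant. The only real obstacle I anticipate is bookkeeping: one must carefully track the scalar factors arising from the Wenzl normalization of $\alpha_k^{l,m}$, the normalization of the maximally entangled state $\alpha_0^{k,k}$, and the convention-dependent prefactor in the Choi matrix, and verify that all of them collapse to the clean quantum-integer ratios in \eqref{choi-eq}. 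No analytic or deep representation-theoretic input is required beyond planar isotopy and Wenzl's recursion, which is why the identity can be treated as a direct extension of the known $SU(2)$ and $O_N^+$ cases.
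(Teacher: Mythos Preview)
Your proposal is correct and follows essentially the same approach as the paper: the paper does not give a detailed argument but simply notes that the formula was established for $SU(2)$ in \cite{Al14} and for $O_N^+$ in \cite{BrCo17b}, and that the general Kac-type $O_F^+$ case follows by the identical planar isotopy argument used in \cite{BrCo17b}. Your description of that isotopy---rotating the trivalent vertices via the snake relations to convert $A_k^{l,m}(A_k^{l,m})^*$ with an $l$-closure and $k$-cup/cap into $A_l^{m,k}(A_l^{m,k})^*$, then clearing the theta-net normalizations using $\theta_q(k,l,m)=\theta_q(l,m,k)$---is exactly that argument, and your identification of the bookkeeping of scalar prefactors as the only delicate point is accurate.
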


\section{The minimum output entropy and capacities of $O_N^+$-Temperley-Lieb channels}\label{sec:moe-cap}

In this section we establish asymptotically sharp estimates on the minimum output entropy, the Holevo capacity and the ``one-shot'' quantum capacity of $O_N^+$-TL-channels for large enough $N$. The estimate begins with the following result of \cite[Corollary 4.2]{BrCo18b}.
	\begin{equation}
		H_{\min}(\Phi^{l,\bar m}_k )=H_{\min}(\Phi^{\overline{l},m}_k)\geq \log(\frac{\theta_q(k,l,m)}{[k+1]_q})\ge \frac{l+m-k}{2} \cdot \log N - C(N)
	\end{equation}
with $C(N)\to 0$ as $N\to \infty$. The above estimate was conjectured to be asymptotically optimal as $N\to \infty$ in \cite{BrCo18b}, which will be confirmed to be true below.

Before we dig into the above conjecture we prepare several elementary estimates. Let $f(t) = -t\log t$, $0<t<1$ be the function we use for the entropy. Then it is straightforwad to see that $f(t) \lesssim t^{1/2}$ and $f(t) \lesssim 1-t$, where $a\lesssim b$ means that there is a universal constant $C>0$ such that $a \le C \cdot b$. The Fannes-Audenaert inequality (\cite{A07}) says that for any quantum states $X,Y \in B(H)$ with ${\rm dim}H = n$
	$$|H(X) - H(Y)| \le \delta \log (n-1) + f(\delta) + f(1-\delta),\; \delta = \frac{1}{2}||X-Y||_1,$$
where $||\cdot||_1$ is the trace norm, so that we have
	\begin{equation}
	|H(X) - H(Y)| \lesssim \log n  \cdot ||X-Y||_1 + ||X-Y||^{1/2}_1.
	\end{equation}	

\begin{lem}\label{lem-Fannes-extend}
Let $X,Y \in B(H)_+$ with ${\rm dim}H = n$. Suppose further that $\Tr(X) = 1 \ge \Tr(Y) > 0$. Then we still have
	\begin{equation}
	|H(X) - H(Y)| \lesssim \log n  \cdot ||X-Y||_1 + ||X-Y||^{1/2}_1.
	\end{equation}
\end{lem}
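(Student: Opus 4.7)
The plan is to reduce to the standard Fannes--Audenaert inequality by normalizing $Y$ to a state and carefully accounting for the correction terms. Set $t := \Tr(Y) \in (0,1]$ and $\widetilde Y := Y/t$, a quantum state on $H$. A direct computation from the spectral calculus applied to $Y = t \widetilde Y$ yields the useful decomposition
$$H(Y) = -\Tr(Y\log Y) = -t\log t + t H(\widetilde Y) = f(t) + t H(\widetilde Y),$$
so that by the triangle inequality,
$$|H(X)-H(Y)| \le |H(X) - H(\widetilde Y)| + (1-t)H(\widetilde Y) + f(t).$$

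For the first summand, I would apply the Fannes--Audenaert inequality to the genuine states $X$ and $\widetilde Y$, which requires controlling $\|X-\widetilde Y\|_1$ in terms of $\|X-Y\|_1$. The triangle inequality together with $\|Y - \widetilde Y\|_1 = (1/t - 1)\Tr(Y) = 1-t$ gives
$$\|X-\widetilde Y\|_1 \le \|X-Y\|_1 + (1-t),$$
and the key observation is that $1-t = \Tr(X) - \Tr(Y) \le \|X-Y\|_1$, so $\|X-\widetilde Y\|_1 \le 2\|X-Y\|_1$. Plugging into Fannes--Audenaert produces the desired $\log n \cdot \|X-Y\|_1 + \|X-Y\|_1^{1/2}$ bound (up to absolute constants).

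For the two correction terms, I would use $H(\widetilde Y) \le \log n$ together with $1-t \le \|X-Y\|_1$ to obtain $(1-t)H(\widetilde Y) \lesssim \log n \cdot \|X-Y\|_1$. For $f(t)$, the elementary fact $f(t) \le 1-t$ on $[0,1]$ (seen by noting $g(t) := 1-t+t\log t$ satisfies $g'(t) = \log t \le 0$ and $g(1)=0$) gives $f(t) \le \|X-Y\|_1$. Combining everything and using that $\|X-Y\|_1 \lesssim \|X-Y\|_1^{1/2}$ on the bounded range $\|X-Y\|_1 \le 2$ yields the stated inequality.

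The argument is essentially routine once the decomposition $H(Y) = f(t) + tH(\widetilde Y)$ is in place; there is no real obstacle. The only subtle point is ensuring that the extra error $1-t$ introduced by renormalization is controlled by the very same quantity $\|X-Y\|_1$ that appears on the right-hand side, which is exactly what the identity $1-t = \Tr(X)-\Tr(Y)$ provides.
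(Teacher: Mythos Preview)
Your proof is correct and follows essentially the same approach as the paper: both use the decomposition $H(Y) = f(t) + tH(\widetilde Y)$ with $t=\Tr(Y)$, bound the resulting three terms separately, and apply Fannes--Audenaert to $H(X)-H(\widetilde Y)$ after showing $\|X-\widetilde Y\|_1 \le 2\|X-Y\|_1$ via $1-t = \Tr(X-Y) \le \|X-Y\|_1$. The only cosmetic difference is that the paper writes the middle correction term as $(1-t)H(X)$ rather than your $(1-t)H(\widetilde Y)$, but both are bounded by $\log n \cdot \|X-Y\|_1$ in the same way.
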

\begin{proof}
First we observe that
	\begin{align*}
	H(X) - H(Y)
	& = H(X) + \Tr(Y) \log \Tr(Y) - \Tr(Y)H(\frac{Y}{\Tr(Y)})\\
	& = \Tr(Y) \log \Tr(Y) + (1-\Tr(Y))H(X) + \Tr(Y) (H(X) - H(\frac{Y}{\Tr(Y)}))\\
	& = A + B + C.
	\end{align*}
Since we have $1- \Tr(Y) = \Tr(X-Y) \le ||X-Y||_1$	we know
	$$|A| \lesssim ||X-Y||_1,\; |B| \lesssim \log n  \cdot ||X-Y||_1.$$
For the third term we have
	$$|C| \le |H(X) - H(\frac{Y}{\Tr(Y)})| \lesssim \log n  \cdot ||X-\frac{Y}{\Tr(Y)}||_1 + ||X-\frac{Y}{\Tr(Y)}||^{1/2}_1.$$
Finally we observe that
	$$||X-\frac{Y}{\Tr(Y)}||_1 \le ||X-Y||_1 + (\frac{1}{\Tr(Y)} - 1)||Y||_1 = ||X-Y||_1 + 1 - \Tr(Y) \le 2||X-Y||_1,$$
which leads us to the conclusion we wanted.	
\end{proof}

\begin{lem}\label{lem-factorial}
For any admissible $(l,m,k)\in \n^3_0$ with $k=l+m-2r$ we have
\[ \frac{N^r[k+1]_q}{\theta_q(k,l,m)}=1+O(\frac{1}{N^2}).\]
\end{lem}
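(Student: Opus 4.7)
The plan is a direct computation in the variable $q = q(F) = q_0$ (they coincide in the Kac case), using the fact that both sides of the asserted asymptotic are elementary rational functions of $q$ (equivalently of $N$), combined with the clean identity $Nq = 1+q^2$ that arises from $N = q+q^{-1}$.

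First I would rewrite the target ratio in a more symmetric form. Using $[k+1]_q \cdot [k]_q! = [k+1]_q!$ and the defining formula
\[
\theta_q(k,l,m) = \frac{[r]_q!\, [l-r]_q!\, [m-r]_q!\, [k+r+1]_q!}{[l]_q!\, [m]_q!\, [k]_q!},
\]
one obtains
\[
\frac{[k+1]_q}{\theta_q(k,l,m)} = \frac{[l]_q!\, [m]_q!\, [k+1]_q!}{[r]_q!\, [l-r]_q!\, [m-r]_q!\, [k+r+1]_q!}.
\]
Next, I would exploit the factorisation $[n]_q = q^{-(n-1)}\, P_n(q^2)$ with $P_n(x) := 1+x+\cdots+x^{n-1}$, so that each $P_n(q^2) = 1+O(q^2)$. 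Substituting this gives a splitting of the ratio into a pure power $q^{e(k,l,m,r)}$ times a product of terms of the form $P_n(q^2)^{\pm 1}$.

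The main bookkeeping step is to compute the exponent $e(k,l,m,r)$. Summing the contributions $-(n-1)$ coming from each $[n]_q$ factor gives
\[
e(k,l,m,r) = \sum_{j=1}^{r}(j-1) + \sum_{j=1}^{l-r}(j-1) + \sum_{j=1}^{m-r}(j-1) + \sum_{j=1}^{k+r+1}(j-1) - \sum_{j=1}^{l}(j-1) - \sum_{j=1}^{m}(j-1) - \sum_{j=1}^{k+1}(j-1),
\]
and a short calculation, substituting $k = l+m-2r$, collapses this to $e(k,l,m,r) = r$. I expect this telescoping is where all the work lies; it is essentially algebra but must be done carefully.

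Combining, we obtain
\[
\frac{[k+1]_q}{\theta_q(k,l,m)} = q^r \cdot R_{k,l,m}(q^2), \qquad R_{k,l,m}(q^2) = \prod_{n} P_n(q^2)^{\pm 1} = 1 + O(q^2).
\]
Finally, the identity $N = q+q^{-1}$ yields $Nq = 1+q^2$, hence $N^r q^r = (1+q^2)^r = 1 + O(q^2)$. Multiplying, we conclude
\[
\frac{N^r[k+1]_q}{\theta_q(k,l,m)} = (1+q^2)^r \cdot R_{k,l,m}(q^2) = 1 + O(q^2).
\]
Since $q+q^{-1}=N$ forces $q = \tfrac{1}{N}+O(N^{-3})$ as $N\to\infty$, we have $q^2 = O(N^{-2})$, giving the claim. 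The only subtle point is verifying the exponent cancellation $e(k,l,m,r)=r$; once that is in hand, the asymptotic simply falls out of $Nq = 1+q^2$.
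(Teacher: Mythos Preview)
Your argument is correct. Both you and the paper extract a leading power and show the remaining factor is $1+O(q^2)=1+O(N^{-2})$, but the bookkeeping differs. The paper first proves the one-step estimate $\frac{[k+1]_q}{[k]_q N}=1+O(N^{-2})$, iterates it to the $q$-binomial estimate $\frac{[a]_q!}{[b]_q!\,[a-b]_q!\,N^{b(a-b)}}=1+O(N^{-2})$, and then decomposes $\frac{N^r[k+1]_q}{\theta_q(k,l,m)}$ into three such $q$-binomial pieces, verifying the exponent identity $-r(l-r)-r(m-r)+r(k+1)=r$. You instead factor each $[n]_q=q^{-(n-1)}P_n(q^2)$ at the outset and use the clean identity $Nq=1+q^2$ to convert $N^r q^r$ to $(1+q^2)^r$ in one stroke; your exponent identity $e(k,l,m,r)=r$ is the sum-of-triangular-numbers version of the same algebra. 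Your route is a bit more direct and makes the role of $Nq=1+q^2$ explicit; the paper's is more modular, since the intermediate $q$-binomial asymptotic is a reusable building block.
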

\begin{proof}
We first observe for any $k\ge 1$ that
	\begin{align*}
		\frac{[k+1]_q}{[k]_qN}
		& = \frac{1}{2}(1 + \sqrt{1-4/N^2}) \frac{1-q^{2k+2}}{1-q^{2k}}\\
		& = \frac{1}{2}(1 + \sqrt{1-4/N^2}) (1 + \frac{q^{2k}-q^{2k+2}}{1-q^{2k}})\\
		& = \frac{1}{2}(2  + O(\frac{1}{N^2})) (1+O(\frac{1}{N^2})) = 1+O(\frac{1}{N^2}).
	\end{align*}
Then, we can easily see for all $a > b\in \n$ that
	$$\frac{[a]_q}{[b]_q N^{a-b}} = \frac{[a]_q}{[a-1]_q N}\cdots \frac{[b+1]_q}{[b]_q N} = (1+O(\frac{1}{N^2}))^{a-b} = 1+O(\frac{1}{N^2}),$$
which can be extended to the following
	\[\frac{[a]_q!}{[b]_q![a-b]_q! N^{b(a-b)}}=1+O(\frac{1}{N^2})=\frac{[b]_q! [a-b]_q! N^{b(a-b)}}{[a]_q!}.\]
Finally, we have
\begin{align*}
\frac{N^r[k+1]_q}{\theta_q(k,l,m)}
& = N^r\frac{[l]_q! [m]_q! [k+1]_q!}{[r]_q![l-r]_q![m-r]_q! [k+r+1]_q!}\\
& = \frac{[l]_q!}{[r]_q![l-r]_q!N^{r(l-r)}}\cdot \frac{[m]_q!}{[m-r]_q![r]_q!N^{r(m-r)}}\cdot \frac{[k+1]_q![r]_q!N^{r(k+1)}}{[k+r+1]_q!}\\
& = 1+O(\frac{1}{N^2})
\end{align*}
since $-r(l-r)-r(m-r)+r(k+1)=r(2r-l-m+k+1)=r$.

\end{proof}

Here, we introduce some notations. For $N \ge 2$ we write the index set $I = \{1, 2, \cdots, N\}$. We also need multi-index sets	
	$$I^n = \{ {\bf i} = (i_1, \cdots, i_n): i_k \in I,\; 1\le k \le n\}$$
and	
	$$I^n_{\ne} := \{ {\bf i} = (i_1, \cdots, i_n) \in I^n: i_k \ne i_{k+1}, \;1\le k \le n-1\}.$$
We sometimes need to aviod particular indices as follows.	
	$$(s,t)/I^n_{\ne} := \{ {\bf i} = (i_1, \cdots, i_n) \in I^n_{\ne}: i_1 \ne s, i_1 \ne t\}$$
and
	$$I^n_{\ne}\backslash(t) := \{ {\bf i} = (i_1, \cdots, i_n) \in I^n_{\ne}: i_n \ne t\}$$		
for $n\in \n$, $s \ne t\in I$. Note that we have $|(s,t)/I^n_{\ne}| = (N-2)(N-1)^{n-1}$ and $|I^n_{\ne}\backslash(t)| = (N-1)^n$.

For each ${\bf i} \in  I^n_{\neq}$ we can easily see that $|{\bf i}\ra \in H_n$ so that $p_n |{\bf i}\ra = |{\bf i}\ra$ from the Jones-Wenzl recursion.

For ${\bf i} \in I^n$ and ${\bf j} \in I^m$ the vector $| {\bf i} \ra \otimes | {\bf j} \ra \in \Comp^{n+m}$ will simply be denoted by $| {\bf i} {\bf j} \ra$. We will use a very specific index ${\bf m}^k := (1,2,1,\cdots) \in I^k$, $k\ge 1$. For ${\bf i}=(i_1,\cdots,i_n)\in I^n$, its order reversed multi-index $\check{{\bf i}}=(i_n,\cdots, i_1)\in I^n$ will be considered.

\begin{thm}\label{thm-MOE-sharp}
For each admissible triple $(l,m,k)\in \n^3_0$ we have
	\begin{equation}
		\frac{l+m-k}{2} \cdot \log N - C(N) \le H_{\min}(\Phi^{l,\bar m}_k )=H_{\min}(\Phi^{\overline{l},m}_k)\le \frac{l+m-k}{2} \cdot \log N + D(N)
	\end{equation}
with $C(N), D(N)\to 0$ as $N\to \infty$. When $k=l+m$, we actually have the following.
	$$H_{\min}(\Phi^{l,\bar m}_{l+m})=H_{\min}(\Phi^{\bar{l},m}_{l+m}) = 0$$
for any $N\ge 2$.	
\end{thm}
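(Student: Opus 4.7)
The lower bound here is Corollary 4.2 of \cite{BrCo18b}, so the task reduces to exhibiting a pure state input whose output entropy is at most $\frac{l+m-k}{2}\log N + D(N)$ with $D(N) \to 0$. My plan is to test $\Phi_k^{l,\bar m}$ on $\rho = |{\bf m}^k\rangle\langle {\bf m}^k|$ where ${\bf m}^k = (1,2,1,2,\ldots) \in I^k_{\ne}$. Since ${\bf m}^k \in I^k_{\ne}$ one has $p_k|{\bf m}^k\rangle = |{\bf m}^k\rangle$, so $\rho$ is a genuine pure state on $H_k$; the point is that the alternating structure makes the Jones-Wenzl projections act trivially on the relevant basis vectors.

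For the boundary case $k=l+m$ (so $r=0$), the factor $\cup_F^r$ disappears and $\theta_q(l+m,l,m) = [l+m+1]_q$, so $\alpha_{l+m}^{l,m} = (p_l \otimes p_m)|_{H_{l+m}}$ is just the inclusion $H_{l+m} \hookrightarrow H_l \otimes H_m$. Splitting ${\bf m}^{l+m} = ({\bf a},{\bf b})$ with ${\bf a} \in I^l_{\ne}$ and ${\bf b} \in I^m_{\ne}$ alternating, we get $\alpha_{l+m}^{l,m}|{\bf m}^{l+m}\rangle = |{\bf a}\rangle \otimes |{\bf b}\rangle$, a product vector, and tracing out either factor gives a pure state. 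Hence $H_{\min}(\Phi_{l+m}^{l,\bar m}) = H_{\min}(\Phi_{l+m}^{\bar l,m}) = 0$ for every $N \ge 2$.

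For general $k=l+m-2r$ with $r \ge 1$, expanding $\cup_F^r = \sum_{{\bf i} \in I^r}|{\bf i}\check{\bf i}\rangle$ and decomposing ${\bf m}^k=({\bf a},{\bf b})$ with $|{\bf a}|=l-r$ and $|{\bf b}|=m-r$ yields
\[
\alpha_k^{l,m}|{\bf m}^k\rangle = \sqrt{\tfrac{[k+1]_q}{\theta_q(k,l,m)}} \sum_{{\bf i} \in I^r} \bigl(p_l|{\bf a}{\bf i}\rangle\bigr) \otimes \bigl(p_m|\check{\bf i}{\bf b}\rangle\bigr).
\]
On the good subset $(a_{l-r},b_1)/I^r_{\ne}$, the constraints $i_1 \ne a_{l-r}$ and $i_1 \ne b_1$ together with ${\bf i} \in I^r_{\ne}$ force $({\bf a},{\bf i}) \in I^l_{\ne}$ and $(\check{\bf i},{\bf b}) \in I^m_{\ne}$, so $p_l$ and $p_m$ act as the identity and $\{|{\bf a}{\bf i}\rangle \otimes |\check{\bf i}{\bf b}\rangle\}$ is orthonormal. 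Tracing out $H_m$ on this part produces the rescaled diagonal state
\[
\sigma_0 := \tfrac{[k+1]_q}{\theta_q(k,l,m)} \sum_{{\bf i} \in (a_{l-r},b_1)/I^r_{\ne}} |{\bf a}{\bf i}\rangle\langle {\bf a}{\bf i}|,
\]
whose support has size $(N-2)(N-1)^{r-1}$. Lemma \ref{lem-factorial} then gives $\Tr(\sigma_0) = 1-O(1/N)$ and $H(\sigma_0) = \Tr(\sigma_0)\log(\theta_q(k,l,m)/[k+1]_q) \le r\log N$.

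The main obstacle is bounding the bad residual $v_{\text{bad}} := \alpha_k^{l,m}|{\bf m}^k\rangle - v_{\text{good}}$, whose components $(p_l \otimes p_m)|{\bf a}{\bf i}\check{\bf i}{\bf b}\rangle$ for bad ${\bf i}$ need not be mutually orthogonal after projection. Nevertheless, since $p_l \otimes p_m$ is contractive and the raw vectors $\{|{\bf a}{\bf i}\check{\bf i}{\bf b}\rangle : {\bf i} \in I^r\}$ are orthonormal, the crude estimate
\[
\|v_{\text{bad}}\|^2 \le \tfrac{[k+1]_q}{\theta_q(k,l,m)}\bigl(N^r - (N-2)(N-1)^{r-1}\bigr) = O(1/N)
\]
already suffices, using Lemma \ref{lem-factorial} together with the count $N^r - (N-2)(N-1)^{r-1} = O(N^{r-1})$. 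Taking the partial trace and using $\|\Tr_m(|u\rangle\langle v|)\|_1 \le \|u\|\|v\|$ for the cross and bad-bad pieces then gives $\|\Phi_k^{l,\bar m}(\rho) - \sigma_0\|_1 = O(1/\sqrt{N})$. Finally, Lemma \ref{lem-Fannes-extend} with ambient dimension $n=[l+1]_q \le N^l$ converts this trace-norm gap into an entropy discrepancy of $O(\log N/\sqrt{N}) + O(1/N^{1/4}) = o(1)$, yielding the desired upper bound $H_{\min}(\Phi_k^{l,\bar m}) \le r\log N + D(N)$ with $D(N) \to 0$.
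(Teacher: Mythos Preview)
Your proof is correct and follows essentially the same strategy as the paper: test on the alternating pure state $|{\bf m}^k\rangle\langle{\bf m}^k|$, split the sum over ${\bf i}\in I^r$ into the ``good'' indices $(a_{l-r},b_1)/I^r_{\ne}$ on which the Jones--Wenzl projections act trivially and a small ``bad'' remainder, then control the entropy discrepancy via Lemma~\ref{lem-Fannes-extend} and the asymptotic Lemma~\ref{lem-factorial}. The only notable difference is in how the remainder is estimated. You bound the cross terms by $\|\Tr_m(|v_g\rangle\langle v_b|)\|_1\le \|v_g\|\,\|v_b\|=O(N^{-1/2})$, which is enough for $D(N)\to 0$. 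The paper instead observes that these cross terms actually \emph{vanish}: for good ${\bf i}$ one has $p_m|\check{\bf i}{\bf b}\rangle=|\check{\bf i}{\bf b}\rangle$, so $\langle\check{\bf j}{\bf b}|p_m|\check{\bf i}{\bf b}\rangle=\delta_{{\bf i},{\bf j}}=0$ whenever ${\bf j}$ is bad. Consequently the residual $\Phi_k^{l,\bar m}(\rho)-\sigma_0$ is positive, hence its trace norm equals its trace $1-\Tr(\sigma_0)=O(1/N)$, a sharper bound than your $O(1/\sqrt{N})$. This refinement is not needed for the theorem as stated, but it does give a cleaner error term and avoids the $N^{-1/4}$ in your final Fannes estimate.
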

\begin{proof}
We set $r = \frac{l+m-k}{2}$. We will use a very specific index ${\bf m} := (1,2,1,\cdots) \in H_k\subseteq H^{\otimes k}_1$, which splits into $(m_1,\cdots, m_k) = {\bf m} = {\bf m}'{\bf m}''$, where ${\bf m}' = (m_1,\cdots, m_{l-r}) \in H_{l-r}\subseteq H^{\otimes l-r}_1$ and ${\bf m}'' = (m_{l-r+1},\cdots, m_k)\in H_{m-r}\subseteq H^{\otimes m-r}_1$. Then, we have
	\begin{align*}
		\frac{\theta_q(k,l,m)}{[k+1]_q} \Phi^{\bar{l},m}_k(|{\bf m}\ra \la{\bf m}|)
		& = \Tr \otimes \iota (A^{l,m}_k |{\bf m}\ra \la{\bf m}| (A^{l,m}_k )^*)\\
		& = \Tr \otimes \iota (A^{l,m}_k |{\bf m}'{\bf m}''\ra \la{\bf m}'{\bf m}''| (A^{l,m}_k )^*)\\
		& = \sum_{{\bf i}, {\bf i}' \in I^r} \Tr \otimes \iota [(p_l \otimes p_m) (|{\bf m}'{\bf i} \ra  \la{\bf m}'{\bf i}'|  \otimes |\check{{\bf i}}\, {\bf m}''\ra\la \check{{\bf i}'}{\bf m}''|) (p_l \otimes p_m)]\\
		& = \sum_{{\bf i}, {\bf i}' \in I^r} \la{\bf m}'{\bf i}'| p_l |{\bf m}'{\bf i} \ra \cdot p_m |\check{{\bf i}}\, {\bf m}''\ra\la \check{{\bf i}'}{\bf m}''| p_m\\
		& = \sum_{{\bf i}\in (1,2)/I^r_{\ne}} |\check{{\bf i}}\, {\bf m}''\ra\la \check{{\bf i}}{\bf m}''| + \sum_{{\bf i}, {\bf i}' \not\in (1,2)/I^r_{\ne}} \la{\bf m}'{\bf i}'| p_l |{\bf m}'{\bf i} \ra \cdot p_m |\check{{\bf i}}\, {\bf m}''\ra\la \check{{\bf i}'}{\bf m}''| p_m\\
		& = \frac{\theta_q(k,l,m)}{[k+1]_q}(Z(1) + Z(2)),
	\end{align*} 
where we used the fact that for ${\bf i}\in (1,2)/I^r_{\ne}$ we have ${\bf m}'{\bf i}\in H_l$ and $\check{{\bf i}}\, {\bf m}'' \in H_m$. Note that 
$$\frac{\theta_q(k,l,m)}{[k+1]_q}Z(2)= \mathrm{Tr}\otimes \iota ((p_l\otimes p_m)|\xi \ra \la \xi | (p_l\otimes p_m))\geq 0, $$
where $|\xi\ra = \displaystyle \sum_{{\bf i}\notin (1,2)/I^r_{\neq}} |{\bf m}' {\bf i}\ra\otimes |\check{{\bf i}}{\bf m}''\ra$. 
The term $Z(1)$ is the dominant one with the entropy
	\begin{align*}
	H(Z(1))
	& = (N-2)(N-1)^{r-1}\frac{[k+1]_q}{\theta_q(k,l,m)} \log \frac{\theta_q(k,l,m)}{[k+1]_q}\\
	& = (1-\frac{2}{N})(1-\frac{1}{N})^{r-1}\frac{N^r[k+1]_q}{\theta_q(k,l,m)} \log \frac{\theta_q(k,l,m)}{[k+1]_q}\\
	& = (1+O(\frac{1}{N}))\log[ (1+O(\frac{1}{N^2}))N^r]
	\end{align*}
by Lemma \ref{lem-factorial}. For the second term $Z(2)$ we have
	$$	\Tr(Z(2))
	= 1 - \Tr(Z(1))\\
	= 1- (N-2)(N-1)^{r-1}\frac{[k+1]_q}{\theta_q(k,l,m)}\\
	= O(\frac{1}{N}).$$
By Lemma \ref{lem-Fannes-extend} we have
	$$| H(\Phi^{\bar{l},m}_k(|{\bf m}\ra \la{\bf m}|)) - H(Z(1)) | \lesssim m \log N \Tr(Z(2)) + \Tr(Z(2))^{1/2} \lesssim O(\frac{1}{\sqrt{N}}),$$
which leads us to the conclusion we wanted.

If $k=l+m$, then we have $r=0$ and
	$$\Phi^{\bar{l},m}_{l+m}(|{\bf m}\ra \la{\bf m}|) = |{\bf m}'\ra \la{\bf m}'|,$$
which is a pure state. Thus, we get the conclusion we wanted.	
\end{proof}

Now we move to the case of capacities. We will apply a similar argument for the lower bound of ``one-shot'' quantum capacity.
\begin{thm}\label{thm-QC}
For each admissible triple $(k,l,m)\in \n^3_0$ we have
	\begin{equation}
		\begin{cases}\frac{l+k-m}{2} \cdot \log N - C(N) \le Q^{(1)}(\Phi^{l,\bar m}_k)\\ \frac{m+k-l}{2} \cdot \log N - D(N) \le Q^{(1)}(\Phi^{\overline{l},m}_k)\end{cases}
	\end{equation}
with constants $C(N), D(N)\to 0$ as $N\to \infty$. When $k=l+m$, we actually have the following.
	\begin{equation}
		\begin{cases} l \cdot \log (N-1) \le Q^{(1)}(\Phi^{l,\bar m}_{l+m})\\ m \cdot \log (N-1) \le Q^{(1)}(\Phi^{\overline{l},m}_{l+m}).\end{cases}
	\end{equation}
\end{thm}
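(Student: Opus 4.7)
The plan is to exhibit, for the channel $\Phi^{l,\bar{m}}_k$ with $r = (l+m-k)/2$, an explicit mixed input state $\rho \in B(H_k)$ whose coherent information
\[I_c(\rho, \Phi^{l,\bar m}_k) := H(\Phi^{l,\bar m}_k(\rho)) - H(\Phi^{\bar l, m}_k(\rho))\]
attains the desired lower bound $\tfrac{l+k-m}{2}\log N - o(1) = (l-r)\log N - o(1)$; recall that $\Phi^{\bar l, m}_k$ is the complementary channel of $\Phi^{l,\bar m}_k$. Pure inputs are useless here: for $\rho = |\xi\ra\la\xi|$, the two channel outputs are reduced states of a common pure state on $H_l \otimes H_m$, hence isospectral, yielding $I_c = 0$. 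The idea is therefore to build $\rho$ by mixing purely on the $H_l$-side, so as to preserve the ``simplicity'' of the environment output on the $H_m$-side while maximizing the output entropy on the $H_l$-side.

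Split the alternating index ${\bf m}^k \in I^k_{\ne}$ as ${\bf m}^k = {\bf m}'{\bf m}''$ with ${\bf m}' \in I^{l-r}_{\ne}$ and ${\bf m}'' \in I^{m-r}_{\ne}$, and set
\[\rho := \frac{1}{(N-1)^{l-r}} \sum_{{\bf i} \in I^{l-r}_{\ne}\backslash(m_{l-r+1})} |{\bf i}\,{\bf m}''\ra\la{\bf i}\,{\bf m}''| \ \in\ B(H_k).\]
Since ${\bf i}\,{\bf m}'' \in I^k_{\ne}$, the Jones--Wenzl projection fixes each $|{\bf i}\,{\bf m}''\ra$, so $\rho$ is a valid state on $H_k$. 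Applying the same diagrammatic expansion as in Theorem~\ref{thm-MOE-sharp} to each pure summand, the dominant contribution to $\Phi^{l,\bar m}_k(|{\bf i}\,{\bf m}''\ra\la{\bf i}\,{\bf m}''|)$ is $\tfrac{[k+1]_q}{\theta_q(k,l,m)} \sum_{\bf s} |{\bf i}\,{\bf s}\ra\la{\bf i}\,{\bf s}|$, where ${\bf s}$ ranges over the $(N-2)(N-1)^{r-1}$ elements of $I^r_{\ne}$ with $s_1 \notin \{i_{l-r}, m_{l-r+1}\}$. Because the prefix ${\bf i}$ distinguishes the basis vectors $|{\bf i}\,{\bf s}\ra$ across different ${\bf i}$'s, averaging over ${\bf i}$ yields a state approximately uniform on $(N-2)(N-1)^{l-1} \sim N^l$ basis vectors, so $H(\Phi^{l,\bar m}_k(\rho)) = l\log N - o(1)$.

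For the environment channel, the dominant contribution to $\Phi^{\bar l, m}_k(|{\bf i}\,{\bf m}''\ra\la{\bf i}\,{\bf m}''|)$ is $\tfrac{[k+1]_q}{\theta_q(k,l,m)} \sum_{\bf s} |\check{\bf s}\,{\bf m}''\ra\la\check{\bf s}\,{\bf m}''|$ with ${\bf s}$ in the same set, but now these basis vectors depend only on ${\bf s}$ (not on ${\bf i}$). Averaging over ${\bf i}$ therefore \emph{collapses} the sum: for each ${\bf s}\in I^r_{\ne}$ with $s_1 \ne m_{l-r+1}$, the number of contributing ${\bf i}$'s is $(N-2)(N-1)^{l-r-1}$ (those with $i_{l-r}\notin\{s_1, m_{l-r+1}\}$, counted via the $S_N$-symmetry of $I^{l-r}_{\ne}$). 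The result is approximately uniform on $(N-1)^r \sim N^r$ basis vectors, giving $H(\Phi^{\bar l, m}_k(\rho)) = r\log N + o(1)$. Subtracting yields $Q^{(1)}(\Phi^{l,\bar m}_k) \ge (l-r)\log N - o(1)$, and the symmetric estimate for $\Phi^{\bar l, m}_k$ follows by interchanging the roles of ${\bf m}'$ and ${\bf m}''$ in the construction (i.e., mixing on the $H_m$-side). In the case $k = l+m$ (so $r = 0$), the argument becomes exact: $A^{l,m}_{l+m} = p_{l+m} \le p_l \otimes p_m$ fixes each $|{\bf j}\ra$ with ${\bf j} \in I^{l+m}_{\ne}$, so $\Phi^{l,\bar m}_{l+m}(\rho)$ is \emph{exactly} the uniform mixture of $(N-1)^l$ distinct basis vectors (entropy $l\log(N-1)$), while $\Phi^{\bar l, m}_{l+m}(\rho) = |{\bf m}''\ra\la{\bf m}''|$ is pure.

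The main technical difficulty is controlling the non-dominant cross-terms in the two output states, namely those ${\bf s}, {\bf s}'$ for which either ${\bf i}{\bf s} \notin I^l_{\ne}$ or $\check{\bf s}\,{\bf m}'' \notin I^m_{\ne}$ (where the Jones--Wenzl projections act nontrivially). As in Theorem~\ref{thm-MOE-sharp}, Lemma~\ref{lem-factorial} bounds the combined trace-norm contribution of such terms by $O(1/N)$, and Lemma~\ref{lem-Fannes-extend} then converts this into an $O(N^{-1/2}\log N)$ entropy correction, which is absorbed into $C(N), D(N) \to 0$.
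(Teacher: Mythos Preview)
Your proof is correct and follows essentially the same route as the paper: both choose the input $\rho = \frac{1}{(N-1)^{l-r}}\sum_{{\bf j}\in I^{l-r}_{\ne}\backslash(t)}|{\bf j}\,{\bf n}\ra\la{\bf j}\,{\bf n}|$ (your ${\bf m}'',{\bf i}$ are the paper's ${\bf n},{\bf j}$), split each output into a dominant diagonal piece plus a positive remainder of trace $O(1/N)$ via Lemma~\ref{lem-factorial}, and control the entropy perturbation with Lemma~\ref{lem-Fannes-extend}. The collapsing argument for the environment output and the exact $k=l+m$ case are likewise identical; your additional remark that pure inputs give $I_c=0$ is a nice motivational comment not present in the paper.
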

\begin{proof}
We set $r = \frac{l+m-k}{2}$ and fix a specific index ${\bf n} := (1,2,1,\cdots) \in H_{m-r}\subseteq H^{\otimes m-r}_1$. We first consider the estimates of $Q^{(1)}(\Phi^{\bar{l},m}_k)$. For any ${\bf j} \in I^{l-r}_{\ne}\backslash(1)$ we use the same argument as in the proof of Theorem \ref{thm-MOE-sharp} to get
	\begin{align*}
		\frac{\theta_q(k,l,m)}{[k+1]_q} \Phi^{\bar{l},m}_k(|{\bf j}{\bf n}\ra \la {\bf j}{\bf n}|)
		& = \Tr \otimes \iota (A^{l,m}_k |{\bf j}{\bf n}\ra \la{\bf j}{\bf n}| (A^{l,m}_k )^*)\\
		& = \sum_{{\bf i}, {\bf i}' \in I^r} \la{\bf j}{\bf i}'| p_l |{\bf j}{\bf i} \ra \cdot p_m |\check{{\bf i}}\, {\bf n}\ra\la \check{{\bf i}'}{\bf n}| p_m\\
		& = \sum_{{\bf i}\in (1,j_{l-r})/I^r_{\ne}} |\check{{\bf i}}{\bf n}\ra\la \check{{\bf i}}{\bf n}| + \sum_{{\bf i}, {\bf i}' \not\in (1,j_{l-r})/I^r_{\ne}} \la{\bf j}{\bf i}'| p_l |{\bf j}{\bf i} \ra \cdot p_m |\check{{\bf i}}\, {\bf n}\ra\la \check{{\bf i}'}{\bf n}| p_m\\
		& = \frac{\theta_q(k,l,m)}{[k+1]_q}(Z(1,{\bf j}) + Z(2,{\bf j})).
	\end{align*}
Now we set $\displaystyle \rho = \frac{1}{(N-1)^{l-r}}\sum_{{\bf j} \in I^{l-r}_{\ne}\backslash(1)}|{\bf j}{\bf n}\ra \la {\bf j}{\bf n}|$ and we get
		$$\Phi^{\bar{l},m}_k(\rho) = \frac{1}{(N-1)^{l-r}}\sum_{{\bf j} \in I^{l-r}_{\ne}\backslash(1)}(Z(1,{\bf j}) + Z(2,{\bf j})) = Z(1) + Z(2).$$
In other words,
	\begin{align*}
	Z(1)
	& = \frac{[k+1]_q}{(N-1)^{l-r}\theta_q(k,l,m)}\sum_{{\bf j} \in I^{l-r}_{\ne}\backslash(1)}\sum_{{\bf i}\in (1,j_{l-r})/I^r_{\ne}} |\check{{\bf i}}{\bf n}\ra\la \check{{\bf i}}{\bf n}|\\
	& = \frac{[k+1]_q}{(N-1)\theta_q(k,l,m)}\sum^N_{j_{l-r} =2 }\sum_{{\bf i}\in (1,j_{l-r})/I^r_{\ne}} |\check{{\bf i}}{\bf n}\ra\la \check{{\bf i}}{\bf n}|\\
	& = \frac{(N-2)[k+1]_q}{(N-1)\theta_q(k,l,m)}\sum_{{\bf i}\in (1)/I^r_{\ne}} |\check{{\bf i}}{\bf n}\ra\la \check{{\bf i}}{\bf n}|.
	\end{align*}
As before we use Lemma \ref{lem-factorial} to get
	\begin{align*}
	H(Z(1))
	& = (N-1)^r \frac{(N-2)[k+1]_q}{(N-1)\theta_q(k,l,m)} \log \frac{(N-1)\theta_q(k,l,m)}{(N-2)[k+1]_q}\\
	& = (1-\frac{1}{N})^r \frac{N-2}{N-1}\frac{N^r[k+1]_q}{\theta_q(k,l,m)} \log \frac{(N-1)\theta_q(k,l,m)}{(N-2)[k+1]_q}\\
	& = (1+O(\frac{1}{N}))\log[ (1+O(\frac{1}{N}))N^r]
	\end{align*}
and
	$$\Tr(Z(2)) = 1 - \Tr(Z(1)) = 1 - (N-1)^r \frac{(N-2)[k+1]_q}{(N-1)\theta_q(k,l,m)} = O(\frac{1}{N}).$$
By Lemma \ref{lem-Fannes-extend} again we still have
	$$| H(\Phi^{\bar{l},m}_k(\rho)) - H(Z(1)) | \lesssim m \log N \Tr(Z(2)) + \Tr(Z(2))^{1/2} \lesssim O(\frac{1}{\sqrt{N}}).$$
			
For the complementary channel we similarly have
	\begin{align*}
		\frac{\theta_q(k,l,m)}{[k+1]_q} \Phi^{l,\bar{m}}_k(|{\bf j}{\bf n}\ra \la {\bf j}{\bf n}|)
		& = \iota \otimes \Tr (A^{l,m}_k |{\bf j}{\bf n}\ra \la{\bf j}{\bf n}| (A^{l,m}_k )^*)\\
		& = \sum_{{\bf i}, {\bf i}' \in I^r} p_l |{\bf j}{\bf i} \ra \la{\bf j}{\bf i}'| p_l \cdot \la \check{{\bf i}'}{\bf n}| p_m |\check{{\bf i}}\, {\bf n}\ra\\
		& = \sum_{{\bf i}\in (1,j_{l-r})/I^r_{\ne}} |{\bf j}{\bf i} \ra \la{\bf j}{\bf i}| + \sum_{{\bf i}, {\bf i}' \not\in (1,j_{l-r})/I^r_{\ne}}  \la \check{{\bf i}'}{\bf n}| p_m |\check{{\bf i}}\, {\bf n}\ra \cdot p_l |{\bf j}{\bf i} \ra \la{\bf j}{\bf i}'| p_l \\
		& = \frac{\theta_q(k,l,m)}{[k+1]_q}(Y(1,{\bf j}) + Y(2,{\bf j})).
	\end{align*} 
Thus, we have
	$$\Phi^{l,\bar{m}}_k(\rho) = \frac{1}{(N-1)^{l-r}}\sum_{{\bf j} \in I^{l-r}_{\ne}\backslash(1)}(Y(1,{\bf j}) + Y(2,{\bf j})) = Y(1) + Y(2),$$
which means
	$$Y(1) = \frac{[k+1]_q}{(N-1)^{l-r}\theta_q(k,l,m)}\sum_{{\bf j} \in I^{l-r}_{\ne}\backslash(1)}\sum_{{\bf i}\in (1,j_{l-r})/I^r_{\ne}} |{\bf j}{\bf i} \ra \la{\bf j}{\bf i}|.$$
Now we have
	\begin{align*}
	H(Y(1))
	& = (N-2)(N-1)^{r-1}\frac{[k+1]_q}{\theta_q(k,l,m)} \log \frac{(N-1)^{l-r}\theta_q(k,l,m)}{[k+1]_q}\\
	& = (1-\frac{2}{N}) (1-\frac{1}{N})^{r-1} \frac{N^r[k+1]_q}{\theta_q(k,l,m)} \log \frac{(N-1)^{l-r}\theta_q(k,l,m)}{[k+1]_q}\\
	& = (1+O(\frac{1}{N}))\log[ (1+O(\frac{1}{N}))N^l]
	\end{align*}
and
	$$\Tr(Y(2)) = 1 - \Tr(Y(1)) = 1 - (1-\frac{2}{N}) (1-\frac{1}{N})^{r-1} \frac{N^r[k+1]_q}{\theta_q(k,l,m)} = O(\frac{1}{N}).$$
Thus, we similarly get, by Lemma \ref{lem-Fannes-extend}, that $| H(\Phi^{l,\bar{m}}_k(\rho)) - H(Y(1)) | \lesssim O(\frac{1}{\sqrt{N}}).$

Combining all the above estimates we get
	$$\lim_{N\to \infty} |H(\Phi^{l,\bar{m}}_k(\rho)) - H(\Phi^{l,\bar{m}}_k(\rho)) - \frac{l+k-m}{2}\cdot \log N| = 0,$$
which gives us the desired lower estimate for $Q^{(1)}(\Phi^{l,\bar{m}}_k)$ as $N\to \infty$.

For the case $k=l+m$ we actually have the following exact formulae.
	$$\Phi^{l,\bar{m}}_{l+m}( \frac{1}{(N-1)^l}\sum_{{\bf j} \in I^l_{\ne}/(1)}| {\bf j}{\bf n} \ra \la {\bf j}{\bf n} |) = \frac{1}{(N-1)^l}\sum_{{\bf j} \in I^l_{\ne}/(1)}| {\bf j} \ra \la {\bf j} |$$
and
	$$\Phi^{\bar{l},m}_{l+m}( \frac{1}{(N-1)^l}\sum_{{\bf j} \in I^l_{\ne}/(1)}| {\bf j}{\bf n} \ra \la {\bf j}{\bf n} |) = | {\bf n} \ra \la {\bf n} |,$$
which tells us that	$Q^{(1)}(\Phi^{l,\bar m}_{l+m}) \ge l \cdot \log (N-1)$.

The estimates for $Q^{(1)}(\Phi^{\bar{l},m}_k)$ can be obtained in a similar way.

\end{proof}

Combining Theorem \ref{thm-MOE-sharp} and Theorem \ref{thm-QC}, we obtain the following asymptotically sharp one-shot capacities:

\begin{cor}\label{cor-capacities}
For each admissible triple $(k,l,m) \in \n^3_0$ we have 
$$\frac{l+k-m}{2}\log(N) -C_1(N)\leq Q^{(1)}(\Phi^{l, \bar m}_k) \leq \chi(\Phi^{l,\bar m}_k)\leq \frac{l+k-m}{2}\log(N) +C_2(N) $$
and
$$\frac{m+k-l}{2}\log(N) -D_1(N)\leq Q^{(1)}(\Phi^{l, \bar m}_k) \leq \chi(\Phi^{l,\bar m}_k)\leq \frac{m+k-l}{2}\log(N) +D_2(N) $$
with constants $C_1(N),C_2(N),D_1(N),D_2(N)\rightarrow 0$ as $N\rightarrow \infty$.
\end{cor}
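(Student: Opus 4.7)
The plan is to combine the already-established bounds: the lower bounds on $Q^{(1)}$ will come straight from Theorem \ref{thm-QC}, and then the standard sandwich $Q^{(1)}(\Phi)\le\chi(\Phi)$ (valid for any quantum channel) will carry the same estimate up to $\chi$. For the upper bound on $\chi$, I would use the Holevo--MOE inequality \eqref{eq-Holevo-MOE}, namely $\chi(\Phi)\le\log d_B - H_{\min}(\Phi)$, combined with the lower bound on $H_{\min}$ furnished by Theorem \ref{thm-MOE-sharp}.

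Concretely, for $\Phi^{l,\bar m}_k:B(H_k)\to B(H_l)$ the output dimension is $d_l=[l+1]_q$. Lemma \ref{lem-factorial} (specifically the single-ratio estimate $[k+1]_q/([k]_q N)=1+O(1/N^2)$ iterated $l$ times) gives $[l+1]_q = N^l(1+O(1/N^2))$, so $\log d_l = l\log N + O(1/N^2)$. Plugging into \eqref{eq-Holevo-MOE} together with $H_{\min}(\Phi^{l,\bar m}_k)\ge \tfrac{l+m-k}{2}\log N - C(N)$ from Theorem \ref{thm-MOE-sharp} yields
\[
\chi(\Phi^{l,\bar m}_k)\le l\log N - \tfrac{l+m-k}{2}\log N + C(N) + O(N^{-2}) = \tfrac{l+k-m}{2}\log N + C_2(N),
\]
with $C_2(N)\to 0$. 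The same argument applied to $\Phi^{\bar l, m}_k:B(H_k)\to B(H_m)$, with $d_m=[m+1]_q$ and $\log d_m = m\log N + O(N^{-2})$, delivers the matching upper bound $\tfrac{m+k-l}{2}\log N + D_2(N)$.

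For the chain $Q^{(1)}(\Phi)\le\chi(\Phi)$, I would just recall the short argument: given any pure-state decomposition $\rho=\sum_x p_x|\psi_x\rangle\langle\psi_x|$, purity gives $H(\Phi(|\psi_x\rangle\langle\psi_x|))=H(\tilde\Phi(|\psi_x\rangle\langle\psi_x|))$, and concavity of the von Neumann entropy applied to $\tilde\Phi$ yields $H(\tilde\Phi(\rho))\ge\sum_x p_x H(\tilde\Phi(|\psi_x\rangle\langle\psi_x|))$; hence the coherent information $H(\Phi(\rho))-H(\tilde\Phi(\rho))$ is dominated by the Holevo quantity evaluated on that ensemble. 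Maximizing over $\rho$ on the left and over all ensembles on the right gives $Q^{(1)}\le\chi$. Combining this with the lower bound from Theorem \ref{thm-QC} closes both displayed chains of inequalities.

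No genuine obstacle is expected here: this corollary is essentially bookkeeping that assembles Theorem \ref{thm-MOE-sharp}, Theorem \ref{thm-QC}, Lemma \ref{lem-factorial}, and the inequality $Q^{(1)}\le\chi\le\log d_B - H_{\min}$. The only mildly delicate point is tracking that the $O(1/N^2)$ error coming from $\log[l+1]_q=l\log N+O(N^{-2})$ is compatible with the $C(N),D(N)\to 0$ tolerances in Theorem \ref{thm-MOE-sharp}; this is automatic since all these error terms decay as $N\to\infty$, so they can be absorbed into a single vanishing constant.
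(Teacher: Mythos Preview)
Your proposal is correct and follows essentially the same route as the paper: lower bounds from Theorem \ref{thm-QC}, the middle inequality $Q^{(1)}\le\chi$, and the upper bound from \eqref{eq-Holevo-MOE} combined with the $H_{\min}$ lower bound of Theorem \ref{thm-MOE-sharp}. You have simply spelled out details (the asymptotic $\log[l+1]_q = l\log N + O(N^{-2})$ via Lemma \ref{lem-factorial} and the standard $Q^{(1)}\le\chi$ argument) that the paper's two-line proof leaves implicit.
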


\begin{proof}
Theorem \ref{thm-QC} directly gives us the wanted lower bounds, and Theorem \ref{thm-MOE-sharp} together with a general fact \eqref{eq-Holevo-MOE}
completes the conclusion.
\end{proof}

\begin{rem}
We note that Corollary \ref{cor-capacities} gives us asymptotically sharp ``one-shot'' private capacities $P^{(1)}(\Phi^{l, \bar m}_k)$ and $P^{(1)}(\Phi^{\bar l, m}_k)$ since
$$Q^{(1)}\leq P^{(1)}\leq \chi$$
in general. The one-shot private capacity $P^{(1)}$ is defined as
$$\max \left \{ H(\sum_x p_x\Phi(\rho_x))-\sum_x p_x H(\Phi(\rho_x)) -H(\sum_x p_x\widetilde{\Phi}(\rho_x))+\sum_x p_xH(\widetilde{\Phi}(\rho_x)) \right\}$$
where the maximum runs over all ensembles of quantum states $\left \{(p_x),(\rho_x)\right\}$. See \cite[Section 13.6]{Wi17} for details.
\end{rem}

\section{EBT/PPT and (anti-)degradability of TL-channels}\label{sec:EBP-PPT}

Since we have studied ``one-shot'' capacities $Q^{(1)}$ and $\chi$ for $O_N^+$-TL-channels in previous section, it is very natural to investigate their regularized quantities $Q$ and $C$. Since our $O_N^+$-TL-channels are bistochastic, we know that the classical capacity $C$ is smaller than $2\chi$ asymptotically by Proposition \ref{prop-bistochastic-estimates}:
$$C(\Phi^{l, \bar m}_k)\leq (l+k-m)\log(N),~C(\Phi^{\bar l, m}_k)\leq (m+k-l)\log(N) .$$

Although the regularized quantities $Q$ and $C$ are computationally intractible for many channels, some structural properties such as EBT/PPT/(anti-)degradability enable us to handle the regularization issues (See Proposition \ref{prop:implications}). However, we will show that our TL-channels associated with $O_N^+$ and $SU(2)$ have no such structural properties in most cases.

\subsection{The case of $O^+_N$}

\subsubsection{EBT property}
We now apply Theorem \ref{thm:choi-eq} to investigate EBT property for our $O_N^+$-TL-channels $\Phi_{k}^{\overline{l},m}$. Before coming to our result characterizing the EBT property for the channels $\Phi_k^{\overline{l},m}$, we first need an elementary lemma.

\begin{lem}\label{lem:ent-sub}
Let $H_A$ and $H_B$ be finite dimensional Hilbert spaces, let $0 \neq p \in B(H_B \otimes H_A)$ be an orthogonal projection, and let $H_0 \subseteq H_B \otimes H_A$ denote the range of $p$.  If $H_0$ is an entangled subspace of $H_B \otimes H_A$, then the state $\rho := \frac{1}{\mathrm{dim} H_0}p$ is entangled.
\end{lem}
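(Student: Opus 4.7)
The plan is to argue by contradiction, using the standard fact that the support (range) of a positive operator equals the linear span of the pure states appearing in any convex decomposition of it.

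Suppose, toward contradiction, that $\rho = \frac{1}{\dim H_0}p$ is separable. Then by definition, there exists a probability distribution $(q_x)_x$ with $q_x > 0$ and unit product vectors $|\phi_x \otimes \psi_x\rangle \in H_B \otimes H_A$ such that
\[
\rho = \sum_x q_x |\phi_x \otimes \psi_x\rangle\langle \phi_x \otimes \psi_x|.
\]

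The first step is to observe that, since each summand $q_x |\phi_x \otimes \psi_x\rangle\langle\phi_x\otimes\psi_x|$ is positive and dominated by $\rho$, each vector $|\phi_x \otimes \psi_x\rangle$ must lie in the range of $\rho$. Explicitly, if $P_{H_0^\perp}$ denotes the orthogonal projection onto $H_0^\perp$, then $P_{H_0^\perp} \rho P_{H_0^\perp} = 0$, and expanding the sum gives $\sum_x q_x \|P_{H_0^\perp}(\phi_x\otimes\psi_x)\|^2 = 0$, forcing $P_{H_0^\perp}(\phi_x \otimes \psi_x) = 0$ for every $x$. Since $\rho$ is a positive scalar multiple of $p$, its range coincides with $H_0$, so each $|\phi_x \otimes \psi_x\rangle \in H_0$.

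But this contradicts the hypothesis that $H_0$ is an entangled subspace, i.e.\ that every nonzero vector in $H_0$ is entangled. Hence $\rho$ cannot be separable, which is the desired conclusion.

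The proof is essentially a one-line application of the support lemma for positive operators; the only subtle point worth spelling out in the write-up is why the range of $\rho$ equals $H_0$ (immediate from $\rho = p/\dim H_0$ and $p$ being a nonzero projection onto $H_0$), and why the pure states in a convex decomposition must lie in this range. No serious obstacle is expected.
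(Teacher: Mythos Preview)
Your proof is correct and follows essentially the same approach as the paper: both argue by contrapositive (equivalently, contradiction), writing a separable decomposition of $\rho$ into pure product states and showing that each product vector must lie in the range $H_0$ of $p$, contradicting the assumption that $H_0$ is an entangled subspace. The only cosmetic difference is in the justification of this range containment---the paper uses that $x_i \le p$ with $p$ a projection forces $x_i = p x_i p$, while you compute $P_{H_0^\perp}\rho P_{H_0^\perp} = 0$ directly; these are the same fact phrased two ways.
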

\begin{proof}
We prove the contrapositive. If $\rho$ is separable, then we can write 
$$p = \sum_i |\xi_i \rangle \langle \xi_i| \otimes |\eta_i \rangle \langle \eta_i| \qquad (0 \neq  \xi_i \in H_B, \ 0 \ne \eta_i \in H_A).$$ 

For each $i$ put $x_i = |\xi_i \rangle \langle \xi_i| \otimes |\eta_i \rangle \langle \eta_i|$. Then since $x_i \leq p$ and $p$ is a projection, it follows that $x_i = px_ip$, which implies that the range of $x_i$ is contained in the range of $p$. In particular, $\xi_i \otimes \eta_i \in H_0$, so $H_0$ is separable.
\end{proof}

\begin{thm} \label{thm:EBT}
Let $(k,l,m) \in \N_0^3$ be an admissible triple. If $k \neq l-m$, then the quantum channel $\Phi_{k}^{\overline{l},m}$ is not EBT. Also, if $k\ne m-l$, then the quantum channel $\Phi_k^{l,\overline{m}}$ is not EBT.
\end{thm}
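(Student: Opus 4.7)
The plan is to combine the Choi-matrix identification from Theorem \ref{thm:choi-eq} with Lemma \ref{lem:ent-sub}, reducing the question to an entanglement property of a specific $O_F^+$-invariant subspace of $H_m \otimes H_k$. By Theorem \ref{thm:choi-eq}, the Choi matrix of $\Phi_k^{\bar l, m}$ is a positive scalar multiple of the orthogonal projection onto $W := \alpha_l^{m,k}(H_l) \subset H_m \otimes H_k$, which is the copy of $v^l$ embedded inside $v^m \tp v^k$. By Lemma \ref{lem:ent-sub}, if $W$ contains no nonzero product vectors, then the normalized Choi state is entangled and $\Phi_k^{\bar l, m}$ is not EBT. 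The assertion for $\Phi_k^{l, \bar m}$ follows by the symmetric argument applied to $\alpha_m^{k,l}(H_m) \subset H_k \otimes H_l$, so I focus on the first statement.

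Write $l = m + k - 2s$ with $s = (m+k-l)/2$. Admissibility gives $0 \le s \le \min\{m,k\}$, and the hypothesis $k \ne l - m$ is precisely $s \ge 1$. In this case $W \cong v^l$ is an irreducible subrepresentation of $v^m \tp v^k$ inequivalent to the top component $H_{m+k} \cong v^{m+k}$, so $W \perp H_{m+k}$ inside $H_m \otimes H_k$. Suppose for contradiction that $0 \ne \xi \otimes \eta \in W$ with $\xi \in H_m$ and $\eta \in H_k$ both nonzero. The orthogonality then forces $p_{m+k}(\xi \otimes \eta) = 0$, where $p_{m+k}$ is the Jones-Wenzl projection (its range $H_{m+k}$ is contained in $H_m \otimes H_k$, since $H_{m+k}$ is annihilated by any cup inserted among the first $m$ or the last $k$ strands). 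The theorem thus reduces to the following key lemma: \emph{for nonzero $\xi \in H_m$ and nonzero $\eta \in H_k$, one has $p_{m+k}(\xi \otimes \eta) \ne 0$}.

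To prove this key lemma I would use the diagrammatic calculus developed in Section \ref{subsec:Diagram}. Conceptually, $p_{m+k}|_{H_m \otimes H_k}$ is (up to a positive scalar) the unique nonzero $O_F^+$-intertwiner $H_m \otimes H_k \to H_{m+k}$, and should be viewed as a non-degenerate ``quantum multiplication'' on the graded algebra $\bigoplus_n H_n$ of quantum symmetric powers of $H_1$. In the $SU(2)$ specialization this is literally polynomial multiplication $\C[x,y]_m \otimes \C[x,y]_k \to \C[x,y]_{m+k}$, and non-vanishing is immediate from the integral domain property. For general Kac-type $O_F^+$, I would argue by induction on $m+k$ using the Wenzl recursion \eqref{Wenzl}: the recursion writes $p_{m+k}$ as $\iota_{H_1} \otimes p_{m+k-1}$ minus the scalar $[m+k-1]_q/[m+k]_q$ times a cup-corrected term, and a putative vanishing $p_{m+k}(\xi \otimes \eta) = 0$ then propagates to a lower-order relation. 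Because the Kac-type hypothesis keeps the ratios $[n-1]_q/[n]_q$ strictly positive, no miraculous cancellation can occur, and the inductive hypothesis yields a contradiction.

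The main obstacle is precisely this key lemma in the general quantum case: the $SU(2)$ reduction to polynomials is transparent, but for $O_F^+$ with $N \ge 3$ one must carefully track the cup-correction terms through the Wenzl recursion, which is where the diagrammatic machinery of Section \ref{subsec:Diagram} is indispensable. Once the key lemma is in hand, the complementary statement about $\Phi_k^{l, \bar m}$ under $k \ne m - l$ follows by applying the same argument with the roles of $m$ and $l$ interchanged (i.e., working with the subspace $\alpha_m^{k,l}(H_m) \subset H_k \otimes H_l$).
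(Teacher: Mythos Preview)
Your approach is essentially the same as the paper's: both invoke Theorem~\ref{thm:choi-eq} to identify $C_{\Phi_k^{\bar l,m}}$ with a multiple of the projection onto $W=\alpha_l^{m,k}(H_l)\subset H_m\otimes H_k$, and then apply Lemma~\ref{lem:ent-sub} to reduce non-EBT to the entanglement of $W$. The only difference is that the paper does not prove the entanglement of $W$ here---it cites \cite[Theorem~3.2]{BrCo18b} for the fact that this subspace is entangled iff $l\ne m+k$---whereas you attempt to reprove that result via your ``key lemma'' that $p_{m+k}(\xi\otimes\eta)\ne 0$ for nonzero $\xi\in H_m$, $\eta\in H_k$.

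Your reduction to the key lemma is clean and correct (and your observation that it suffices to use only $W\perp H_{m+k}$, rather than any finer structure of $W$, is exactly the content of the cited theorem). However, your inductive sketch via the Wenzl recursion is not a proof as written: the sentence ``a putative vanishing $p_{m+k}(\xi\otimes\eta)=0$ then propagates to a lower-order relation'' hides all the work. The recursion \eqref{Wenzl} peels off a single strand of $H_1$, not a strand of $H_m$ or $H_k$, so applying it to a product $\xi\otimes\eta$ with $\xi\in H_m$ does not immediately produce another product of the same shape; one has to control how the cup term interacts with $\xi$. This can be done (and is essentially what \cite{BrCo18b} does), but you have not done it here. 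So the proposal is correct in structure and matches the paper, but the part you flag as the ``main obstacle'' is genuinely left open in your write-up and would need either the citation or a complete argument.
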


\begin{proof}
We have from Theorem \ref{thm:choi-eq} that $C_{\Phi_{k}^{\overline{l},m}} = \frac{[k+1]_q}{[l+1]_q} \alpha_l^{m,k}(\alpha_{l}^{m,k})^* \in \mc B(H_m \otimes H_k).$  Consider the orthogonal projection $p = \alpha_l^{m,k}(\alpha_{l}^{m,k})^*$.  The range of $p$ is the subrepresentation of $H_m \otimes H_k$ equivalent to $H_l$, and by [Theorem 3.2, \cite{BrCo18b}] this subspace is entangled iff $l\neq k+m$.   Applying Lemma  \ref{lem:ent-sub}, we conclude that $ \Phi_{k}^{\overline{l},m}$ is not EBT whenever $k \neq l-m$.
\end{proof}

\begin{rem}
We note that Theorem \ref{thm:EBT} leaves open whether or not the channels $\Phi_{l-m}^{\overline{l},m}$ are EBT.  In this case, the corresponding Choi map is a multiple of a projection onto a separable subspace, and we do not know if this projection is a multiple of an entangled state. 
\end{rem}

\subsubsection{PPT/ (anti-)degradability}

As the next step, one might naturally ask if $O_N^+$-TL-channels can have PPT property or (anti-)degradability. In fact, Theorem \ref{thm-QC} provides a strong partial answer on these structural questions for large $N$ as follows:

\begin{cor}
\begin{enumerate}
\item The channel $\Phi^{l, \bar m}_k$ is not PPT if $k>m-l$ and $\Phi^{\bar l, m}_k$ is not PPT if $k>l-m$ for sufficiently large $N$. In particular, the channels $\Phi^{l,\bar m}_{l+m}$ and $\Phi^{\bar l, m}_{l+m}$ are not PPT for all $N\geq 3$.
\item The channels $\Phi^{l,\bar m}_k$ and $\Phi^{\bar l, m}_k$ are neither degradable nor anti-degradable if $k>|l-m|$ for sufficiently large $N$.
\end{enumerate}
\end{cor}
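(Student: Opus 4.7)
Both assertions reduce to contraposing Proposition \ref{prop:implications} against the lower bounds for $Q^{(1)}$ supplied by Theorem \ref{thm-QC}. The organizing slogan is: any TL-channel whose one-shot quantum capacity is strictly positive cannot be PPT or anti-degradable, and non-degradability is then derived from non-anti-degradability of the \emph{complementary} channel.

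For part (1), if $\Phi^{l,\bar m}_k$ were PPT, Proposition \ref{prop:implications} would force $Q^{(1)}(\Phi^{l,\bar m}_k)=0$. But Theorem \ref{thm-QC} gives
\[
Q^{(1)}(\Phi^{l,\bar m}_k)\ \geq\ \tfrac{l+k-m}{2}\log N \;-\; C(N),\qquad C(N)\to 0,
\]
and when $k>m-l$ the leading coefficient is strictly positive, so the right-hand side is positive for all $N$ large enough, a contradiction. The same argument, using the companion lower bound for the complementary channel, rules out PPT-ness of $\Phi^{\bar l,m}_k$ when $k>l-m$. For the ``in particular'' case $k=l+m$ (with $N\geq 3$), I would instead invoke the sharper non-asymptotic bounds $Q^{(1)}(\Phi^{l,\bar m}_{l+m})\geq l\log(N-1)$ and $Q^{(1)}(\Phi^{\bar l,m}_{l+m})\geq m\log(N-1)$ from Theorem \ref{thm-QC}, which are strictly positive for every $N\geq 3$ in the nontrivial regime $l,m\geq 1$.

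For part (2), I would first exclude anti-degradability by the identical argument: anti-degradability again implies $Q^{(1)}=0$ via Proposition \ref{prop:implications}, so $\Phi^{l,\bar m}_k$ (resp.\ $\Phi^{\bar l,m}_k$) is not anti-degradable once $k>m-l$ (resp.\ $k>l-m$) and $N$ is large. To handle \emph{non-degradability}, I would invoke the standard observation that the complementary channel of a degradable channel is anti-degradable: if $\widetilde\Phi=\Psi\circ\Phi$ for some channel $\Psi$, then since $\widetilde{\widetilde\Phi}\cong\Phi$ via the canonical Stinespring involution, we obtain $\widetilde\Phi=\Psi\circ\widetilde{\widetilde\Phi}$, which is anti-degradability of $\widetilde\Phi$. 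The two TL-channels $\Phi^{l,\bar m}_k$ and $\Phi^{\bar l,m}_k$ are mutually complementary since they arise from the common Stinespring isometry $\alpha^{l,m}_k$; hence degradability of $\Phi^{l,\bar m}_k$ would force anti-degradability of $\Phi^{\bar l,m}_k$, which is already excluded when $k>l-m$ and $N$ is large. Imposing both $k>m-l$ (non-anti-degradability) and $k>l-m$ (non-degradability via the complement) yields the stated hypothesis $k>|l-m|$, and the argument for $\Phi^{\bar l,m}_k$ is entirely symmetric.

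I do not anticipate any serious obstacle; the whole argument is a dictionary translation between $Q^{(1)}$-lower bounds and structural properties. The only subtlety worth flagging explicitly in the written proof is that degradability does not itself force $Q^{(1)}=0$ (it only gives $Q=Q^{(1)}$), which is why non-degradability must be routed through the complement rather than read off directly from a $Q^{(1)}$-bound on $\Phi$ itself.
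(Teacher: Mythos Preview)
Your proposal is correct and follows essentially the same approach as the paper: both contrapose Proposition \ref{prop:implications} (PPT or anti-degradable $\Rightarrow Q^{(1)}=0$) against the positivity of $Q^{(1)}$ from Theorem \ref{thm-QC}, and both obtain non-degradability from non-anti-degradability of the complementary channel. You are in fact more explicit than the paper about this last step (and about why degradability cannot be excluded directly from a $Q^{(1)}$-bound), which the paper leaves implicit in its one-line proof of part (2).
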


\begin{proof}
\begin{enumerate}
\item Note that every PPT channel should have zero quantum capacity and that $Q(\Phi^{l, \bar m}_k)> 0$ if $k > m-l$ for sufficiently large $N$. Similar arguments are valid for $\Phi^{\bar l, m}_k$.
\item Note that every anti-degradable channel must have zero quantum capacity, while on the other hand both $\Phi^{l, \bar m}_k$ and $\Phi^{\bar l, m}_k$ have strictly positive quantum capacities for sufficiently large $N$ if $k>|l-m|$.
\end{enumerate}
\end{proof}

\subsection{The case of $SU(2)$}

We have a much better understanding about the TL-channels associated with $SU(2)$ than the ones from $O^+_N$ based on the following concrete description of Clebsch-Gordan coefficients. For an admissible triple $(k,l,m)\in \n^3_0$ we consider the associated isometry
\[\alpha^{l,m}_k |i\ra=\sum_{j=0}^l \sum_{j'=0}^m C^{l,m,k}_{j,j',i}|j j'\ra,\]

We actually have a precise but complicated formula (e.g. \cite[page 510]{VK}) for the constant $C^{l,m,k}_{j,j',i}$, which is a sum with multiple terms. Thus, the general constant $C^{l,m,k}_{j,j',i}$ is difficult to handle, but they satisfy several symmetries and some extremal cases can be written in a simpler form.
\begin{prop}\label{prop-symmetry} For any admissible triples $(k,l,m), (i,j,j') \in \n^3_0$ we have
	\begin{enumerate}
	
		\item $C^{l,m,k}_{j,j',i} = 0$ \, if\; $i + \frac{l+m-k}{2} \ne j+j'$,

		\item	$\begin{cases} \la i_1| \Phi^{l,\bar{m}}_k(|i\ra\la j |)|j_1\ra=0, & i_1-j_1\neq i-j\\ \la i_2| \Phi^{\bar{l},m}_k(|i\ra\la j |)|j_2\ra=0, & i_2-j_2\neq i-j\end{cases}$\; for \;$\begin{cases} 0\leq i_1,j_1\leq l,~0\leq i,j\leq k\\ 0\leq i_2,j_2\leq m,~0\leq i,j\leq k\end{cases},$

		\item $C^{l,m,k}_{j,j',i}=(-1)^{\frac{l+m-k}{2}}C^{m,l,k}_{j',j,i}$,
		
		\item $C^{l,m,k}_{j,j',i}=(-1)^{\frac{l+m-k}{2}}C^{l,m,k}_{l-j,m-j',k-i}$,
		
		\item $C^{l,m,k}_{j,j',i} \ne 0$\, if $\displaystyle i+\frac{l+m-k}{2}=j+j'$ and if one of the following is true: $\begin{cases} j=0, l \\ j'=0, m\\ i=0,k\end{cases}$.
		
	\end{enumerate}
\end{prop}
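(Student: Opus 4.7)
The plan is to work in the weight basis of the standard $SU(2)$-representations. Fix the convention that for each $k \in \N_0$ and $0 \le i \le k$, the basis vector $|i\rangle \in H_k$ is the weight vector of weight $i - k/2$ for the maximal torus of $SU(2)$. With this convention, properties (1) and (2) become weight-conservation statements, (3) and (4) are the classical exchange and sign-flip symmetries of the $SU(2)$ Clebsch-Gordan coefficients, and (5) follows from specializing the explicit Racah-type expression cited from \cite[p.~510]{VK} in the extremal cases.

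First I would prove (1). Since $\alpha^{l,m}_k$ intertwines the $SU(2)$-actions, it commutes with the torus action and hence preserves weights. The weight of $|j\rangle \otimes |j'\rangle \in H_l \otimes H_m$ is $j + j' - (l+m)/2$ while the weight of $|i\rangle$ is $i - k/2$, so a nonvanishing $C^{l,m,k}_{j,j',i}$ forces $j + j' = i + (l+m-k)/2$. For (2) I would expand
\[
\langle i_1 | \Phi^{l, \bar m}_k(|i\rangle \langle j|) |j_1\rangle = \sum_{b=0}^{m} C^{l,m,k}_{i_1, b, i} \, \overline{C^{l,m,k}_{j_1, b, j}},
\]
and apply (1) to each factor: a nonzero term requires both $i_1 + b = i + r$ and $j_1 + b = j + r$, writing $r := (l+m-k)/2$, whose difference is $i_1 - j_1 = i - j$. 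The argument for $\Phi^{\bar l, m}_k$ is identical after swapping the roles of the two tensor factors.

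For (3) and (4) I would translate to angular-momentum notation via $(j_1, j_2, j) = (l/2, m/2, k/2)$ and $m_a = (\text{index}) - j_a$, for which $j_1 + j_2 - j = r$, and then invoke the two classical Clebsch-Gordan symmetries: the exchange symmetry $\langle j_1 m_1, j_2 m_2 | j m \rangle = (-1)^{j_1 + j_2 - j} \langle j_2 m_2, j_1 m_1 | j m \rangle$ directly yields (3), while the sign-flip symmetry $\langle j_1, -m_1, j_2, -m_2 | j, -m\rangle = (-1)^{j_1 + j_2 - j} \langle j_1 m_1, j_2 m_2 | j m\rangle$ yields (4), noting that $i \mapsto k - i$ corresponds to $m \mapsto -m$. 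A more conceptual proof of (3) is also available: the tensor flip $\sigma: H_l \otimes H_m \to H_m \otimes H_l$ is $SU(2)$-equivariant, so $\sigma \circ \alpha^{l,m}_k = \lambda \, \alpha^{m,l}_k$ by Schur's lemma, and one computes $\lambda = (-1)^r$ by comparing both sides on a single chosen basis vector.

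Finally, for (5), I would specialize the Racah-type formula in each of the extremal cases. When $j \in \{0, l\}$, $j' \in \{0, m\}$, or $i \in \{0, k\}$, the defining sum collapses to a single term given by a product of factorials, which is manifestly nonvanishing provided the weight condition from (1) is satisfied. The symmetries (3) and (4) reduce the six subcases to one or two representative computations. The main obstacle is not conceptual but careful bookkeeping with sign and factorial conventions; happily, (1), (3), and (4) furnish internal consistency checks for the extremal formulae.
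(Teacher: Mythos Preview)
Your proposal is correct and essentially matches the paper's approach. The paper itself only explicitly argues (2) and (5) --- deriving (2) from (1) via the same expansion you write down, and for (5) noting that in the extremal cases the general formula from \cite{VK} collapses to a single nonvanishing factorial ratio, with (3) and (4) used to reduce cases --- while (1), (3), (4) are simply attributed to the standard reference; your weight-conservation and classical-symmetry arguments for those items are just the details behind that citation.
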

\begin{proof}
(2) We have $\la i_1| \Phi^{l,\bar{m}}_k(|i\ra\la j |)|j_1\ra=\displaystyle \sum_{i_2=0}^m C^{l,m,k}_{i_1,i_2,i}\overline{C^{l,m,k}_{j_1,i_2,j}}=0$ if $i_1-i\neq j_1-j$ by (1) and a similar argument holds for $\Phi^{\bar{l},m}_k$.

(5) If one of the parameters $i, j, j'$ becomes extremal, then the constant $C^{l,m,k}_{j,j',i}$ can be expressed in a single term, which is a ratio of several factorials by \cite[section 8.2.6]{VK} and the above symmetries (3) and (4).
\end{proof}

The $SU(2)$-TL-channel $\Phi^{l, \bar m}_k$ is of the following form.
	\begin{align}	
		\Phi^{l, \bar m}_k(|i\ra \la \tilde{i}|)
			& = (\iota \otimes \Tr)(\alpha^{l,m}_k |i\ra \la \tilde{i}| (\alpha^{l,m}_k)^*) \nonumber \\
			& = (\iota \otimes \Tr)(\sum_{j, \tilde{j}=0}^l \sum_{j', \tilde{j'}=0}^m C^{l,m,k}_{j,j',i}\overline{C^{l,m,k}_{\tilde{j},\tilde{j'},\tilde{i}}}|j j'\ra \la \tilde{j} \tilde{j'}|)\nonumber \\
			& = \sum_{j, \tilde{j}=0}^l \sum_{j'=0}^m C^{l,m,k}_{j,j',i}\overline{C^{l,m,k}_{\tilde{j},j',\tilde{i}}}\,|j\ra \la \tilde{j}|\\
			& = \sum_{j'=0}^m \sum_{j, \tilde{j}=0}^l C^{m,l,k}_{j',j,i}\overline{C^{m,l,k}_{j',\tilde{j},\tilde{i}}}\,|j\ra \la \tilde{j}| = \Phi^{\bar m, l}_k (|i\ra \la \tilde{i}|).\nonumber
	\end{align} 
The fourth equality is due to (3) of Proposition \ref{prop-symmetry}.

\begin{prop}
	For any admissible triple $(k,l,m) \in \n^3_0$ we have $\Phi^{l, \bar m}_k = \Phi^{\bar m, l}_k$. In particular, we have $\Phi^{l,\bar l}_k=\Phi^{\bar l, l}_k$, so that the channel $\Phi^{l,\bar l}_k$ is always degradable and anti-degradable.
\end{prop}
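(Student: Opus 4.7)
The plan is to observe that the first equality $\Phi^{l,\bar m}_k = \Phi^{\bar m, l}_k$ has essentially already been derived in the displayed calculation immediately preceding the proposition. Expanding both sides in the Clebsch-Gordan basis gives
\[
\Phi^{l,\bar m}_k(|i\rangle\langle \tilde i|) = \sum_{j,\tilde j=0}^{l}\sum_{j'=0}^{m} C^{l,m,k}_{j,j',i}\,\overline{C^{l,m,k}_{\tilde j,j',\tilde i}}\,|j\rangle\langle \tilde j|,
\]
and an analogous formula (with $H_m$ in the first tensor slot traced out) for $\Phi^{\bar m,l}_k$, namely
\[
\Phi^{\bar m,l}_k(|i\rangle\langle\tilde i|) = \sum_{j,\tilde j=0}^{l}\sum_{j'=0}^{m} C^{m,l,k}_{j',j,i}\,\overline{C^{m,l,k}_{j',\tilde j,\tilde i}}\,|j\rangle\langle\tilde j|.
\]
The symmetry (3) of Proposition \ref{prop-symmetry}, i.e.\ $C^{l,m,k}_{j,j',i} = (-1)^{(l+m-k)/2} C^{m,l,k}_{j',j,i}$, transports one expression into the other, and the overall sign is squared out by the pairing of a coefficient with its conjugate. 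This gives $\Phi^{l,\bar m}_k = \Phi^{\bar m, l}_k$ for every admissible triple, and specializing $m=l$ gives $\Phi^{l,\bar l}_k = \Phi^{\bar l, l}_k$.

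For the second (``in particular'') claim, the key observation is that $\Phi^{\bar l,l}_k$ is by construction the complementary channel of $\Phi^{l,\bar l}_k$. Indeed, both channels use the same Stinespring isometry $\alpha^{l,l}_k:H_k\to H_l\otimes H_l$: writing $\Phi^{l,\bar l}_k(\rho) = (\iota\otimes\Tr)(\alpha^{l,l}_k\rho(\alpha^{l,l}_k)^*)$ exhibits the second $H_l$-factor as the environment, so the complementary channel $\widetilde{\Phi^{l,\bar l}_k}$ is obtained by tracing out the \emph{first} factor instead, which is precisely the defining formula of $\Phi^{\bar l,l}_k$. Hence the identity $\Phi^{l,\bar l}_k = \Phi^{\bar l,l}_k$ is exactly the statement that $\Phi^{l,\bar l}_k$ is self-complementary.

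Once self-complementarity is in hand, taking the degrading map $\Psi = \id_{B(H_l)}$ gives both $\widetilde{\Phi^{l,\bar l}_k} = \Psi \circ \Phi^{l,\bar l}_k$ and $\Phi^{l,\bar l}_k = \Psi \circ \widetilde{\Phi^{l,\bar l}_k}$, so $\Phi^{l,\bar l}_k$ is simultaneously degradable and anti-degradable. There is no real obstacle here: the whole proof is a repackaging of the already-given Clebsch-Gordan symmetry together with the definition of the complementary channel. The only minor point worth being explicit about is that one must unwind the definition of $\Phi^{\bar m,l}_k$ carefully so as to match the summation obtained in the computation of $\Phi^{l,\bar m}_k$, which is where the symmetry from Proposition \ref{prop-symmetry}(3) does its work.
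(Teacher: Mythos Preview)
Your proposal is correct and matches the paper's approach exactly: the identity $\Phi^{l,\bar m}_k = \Phi^{\bar m,l}_k$ is the displayed Clebsch--Gordan computation using Proposition~\ref{prop-symmetry}(3), and the degradability/anti-degradability for $l=m$ follows immediately from self-complementarity via the identity degrading map. The paper in fact leaves the second step entirely implicit, so your write-up is, if anything, slightly more explicit than the original.
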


This allows us to restrict our attention to the case of $l\ge m$.

\subsubsection{EBT/PPT properties}

In this subsection, we completely characterize when the $SU(2)$-TL-channels $\Phi^{l,\overline{m}}_k$ and $\Phi^{\overline{l},m}_k$ are EBT or PPT. The main result of this subsection is as follows. 

\begin{thm}\label{thm:PPT1}
Let $(k,l,m)\in \n^3_0$ be an admissible triple with $l\geq m$.
\begin{enumerate}
\item The channel $\Phi^{l,\bar{m}}_k$ is EBT if and only if it is PPT if and only if $k=0$.
\item The channel $\Phi^{\bar{l},m}_k$ is EBT if and only if it is PPT if and only if $k=l-m$.
\end{enumerate}
\end{thm}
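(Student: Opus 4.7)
My plan is to prove the two equivalences in each part by noting that EBT $\Rightarrow$ PPT is automatic, and then establishing the two nontrivial implications: (i) the stated $k$-value implies EBT, and (ii) PPT forces that $k$-value. The two parts of the theorem are essentially duals of one another (they concern the same type of projector, just with $l$ and $m$ interchanged in the Choi matrix picture), so I will unify them via a single ``Key Lemma'' about SU(2) isotypic projectors. By Theorem \ref{thm:choi-eq}, the Choi matrix of $\Phi_k^{l,\bar m}$ is a positive multiple of the projection $P_m^{k,l}$ onto the copy of $v^m$ in $v^k\tp v^l$, while the Choi matrix of $\Phi_k^{\bar l,m}$ is a positive multiple of the projection $P_l^{k,m}$ onto the copy of $v^l$ in $v^k\tp v^m$. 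Thus, both parts reduce to the following:

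\medskip
\textbf{Key Lemma (to prove).} For any admissible $(j,a,b)\in \N_0^3$, the (normalized) projection $\rho_j^{a,b}=\tfrac{1}{j+1}P_j^{a,b}\in B(H_a\otimes H_b)$ onto the $v^j$-isotypic component is separable if and only if it is PPT if and only if $j=a+b$.
\medskip

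Assuming the Key Lemma, part (1) follows because PPT requires $m=k+l$; since $l\geq m$ and $k\geq 0$ this forces $k=0$ (which in turn forces $l=m$, in which case $B(H_k)=\C$ is one-dimensional and the channel is trivially EBT). Part (2) follows because PPT requires $l=k+m$, i.e. $k=l-m$.

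\medskip

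For the Key Lemma itself, the ``$j=a+b$ gives separable'' direction is handled by coherent states: under the standard identification $H_j=\mathrm{Sym}^j(\C^2)$, the subspace $v^{a+b}\subset v^a\tp v^b$ is $\mathrm{Sym}^{a+b}(\C^2)$ sitting inside $\mathrm{Sym}^a(\C^2)\otimes\mathrm{Sym}^b(\C^2)$, and the classical SU(2)-coherent-state resolution of the symmetric projector,
\[
P_{a+b}^{a,b}=(a+b+1)\int_{\mathbb{CP}^1}(|\xi\ra\la\xi|)^{\otimes a}\otimes (|\xi\ra\la\xi|)^{\otimes b}\,d\mu(\xi),
\]
exhibits $\rho_{a+b}^{a,b}$ as a convex combination of product states, hence separable (and thus PPT and EBT).

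\medskip

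The nontrivial direction is ``$j<a+b$ forces non-PPT.'' This is the main obstacle, and I would attack it using the diagrammatic calculus of Section \ref{subsec:Diagram}. The projector $P_j^{a,b}=\alpha_j^{a,b}(\alpha_j^{a,b})^*$ is drawn as two trivalent vertices joined by a $k$-cable, i.e.\ as the diagram one gets from $(p_a\otimes p_b)(\iota^{\otimes a-r}\otimes \cup_F^r\otimes \iota^{\otimes b-r})p_j(\iota^{\otimes a-r}\otimes \cap_F^r\otimes \iota^{\otimes b-r})(p_a\otimes p_b)$, where $r=\tfrac{a+b-j}{2}\geq 1$. Applying $T\otimes\iota$ on the first $a$ strands corresponds, via the standard identification between partial transpose and rotation of strands, to ``bending'' these $a$ strands around; in the SU(2) case ($F=\bigl(\begin{smallmatrix}0&1\\-1&0\end{smallmatrix}\bigr)$, $d=-2$), the snake equation carries the sign $\mathrm{sgn}(d)=-1$. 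Thus each of the $r\geq 1$ nested cups $\cup_F^r$ in the original diagram contributes a factor whose partial-transposed image contains a nested cap structure with overall sign $(-1)^r$ coming from the $\cup_F\cap_F=-2$ loop/snake identities. Pairing the resulting diagram against a suitable test vector (obtained as the image under $\alpha_j^{a,b}$ of an antisymmetric-type vector in $H_j$, e.g.\ one built from the cases in Proposition \ref{prop-symmetry}(5)) yields a strictly negative expectation, witnessing non-positivity of $(T\otimes\iota)(P_j^{a,b})$.

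\medskip

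I expect the hard step to be the last one: organizing the rotated diagram so as to produce a clean witness whose negative sign is manifestly forced by $\mathrm{sgn}(d)=-1$. An alternative, more computational route (which I would fall back on) uses the Clebsch-Gordan symmetries of Proposition \ref{prop-symmetry}: write $(T\otimes\iota)(P_j^{a,b})=\sum_{i}\sum_{s,t,s',t'}C^{a,b,j}_{s,t,i}\overline{C^{a,b,j}_{s',t',i}}|s',t\ra\la s,t'|$, and evaluate on the vector $|\psi\ra=|a-1\ra|0\ra-|a\ra|1\ra$ (adjusted by the parameters); the alternating signs of the Clebsch-Gordan coefficients given by parts (3)-(4) of Proposition \ref{prop-symmetry} make the scalar $\la\psi|(T\otimes\iota)(P_j^{a,b})|\psi\ra$ negative exactly when $j<a+b$.
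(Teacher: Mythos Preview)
Your reduction to the Key Lemma via Theorem \ref{thm:choi-eq} is correct, and the deduction of parts (1) and (2) from it is fine. Your argument that $\rho_{a+b}^{a,b}$ is separable via the $SU(2)$-coherent-state resolution of the symmetric projector is also correct and is essentially a repackaging of the paper's approach: the paper picks a single product vector $e\otimes f\in H_l\subset H_m\otimes H_{l-m}$ and averages it over $SU(2)$ using Proposition \ref{prop:ave}, which is exactly your $\mathbb{CP}^1$-integral written at the group level.

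The genuine gap is in the non-PPT direction. Your diagrammatic approach is, as you acknowledge, not carried to completion; the claim that ``pairing against a suitable test vector \ldots\ yields a strictly negative expectation'' is precisely the point that needs proof, and the sign bookkeeping from $\mathrm{sgn}(d)=-1$ alone does not produce a witness. Your computational fallback with $|\psi\rangle=|a-1,0\rangle-|a,1\rangle$ does not obviously work either: expanding $\langle\psi|(T\otimes\iota)P_j^{a,b}|\psi\rangle$ with the selection rule of Proposition \ref{prop-symmetry}(1) gives two nonnegative diagonal terms $|C^{a,b,j}_{a-1,0,a-1-r}|^2+|C^{a,b,j}_{a,1,a+1-r}|^2$ minus a cross term $2\,\mathrm{Re}\bigl(C^{a,b,j}_{a,0,a-r}\overline{C^{a,b,j}_{a-1,1,a-r}}\bigr)$, and nothing in Proposition \ref{prop-symmetry}(3)--(4) forces this to be negative in general.

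The paper's device sidesteps this difficulty entirely: instead of looking for a single vector with negative expectation, it exhibits a $2\times 2$ compression $\begin{bmatrix}a&b\\\bar b&c\end{bmatrix}$ of $C_{T\circ\Phi}$ with $a=0$ and $b\ne 0$, hence determinant $-|b|^2<0$. The point is that one can force $a=0$ for free. Taking $v_1=|l,0\rangle$ (resp.\ $|m,0\rangle$), the entry $a=\langle v_1|C_{T\circ\Phi}|v_1\rangle$ is a sum of $|C^{l,m,k}_{l,j',0}|^2$, and the selection rule $l+j'=\tfrac{l+m-k}{2}$ is impossible for $k>0$ when $l\ge m$, so every term vanishes. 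Then $v_2$ is chosen so that the off-diagonal $b$ involves only ``extremal'' Clebsch--Gordan coefficients, which are nonzero by Proposition \ref{prop-symmetry}(5). This is the missing idea in your proposal: rather than trying to control the sign of a nontrivial combination of squares and cross terms, kill the diagonal entry outright via the selection rule.
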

\begin{proof}
(1) If the channel $\Phi^{l,\bar{m}}_k$ is PPT, then its Choi matrix
\[C_{T\circ \Phi}=(T\circ \Phi\otimes \iota)(\sum_{i,j=1}^{d_A} |i\ra \la j|\otimes  |i\ra \la j|)=\sum_{i,j=1}^{d_A} T\circ \Phi(|i\ra \la j|)\otimes  |i\ra \la j|\]
should be a positive definite matrix. In particular, for any orthogonal unit vectors $v_1,v_2\in H_B\otimes H_A$ we should have
$$ \begin{bmatrix} \la v_1|C_{T\circ \Phi}|v_1\ra&\la v_1|C_{T\circ \Phi}|v_2\ra \\ 
\la v_2| C_{T\circ \Phi}|v_1\ra &\la v_2 | C_{T\circ \Phi}|v_2\ra   \end{bmatrix} = \begin{bmatrix} a &  b \\ 
\bar{b} & c \end{bmatrix} \ge 0.$$
We take a particular choice of $v_1$, $v_2$ as follows.
	$$\begin{cases} |v_1\ra = | l 0\ra, |v_2\ra = |0l\ra & \text{if}\; k>l\\ |v_1\ra = | l 0\ra, |v_2\ra = |l-k, k\ra & \text{if}\; k\le l\end{cases}.$$
Now we have $\displaystyle a =  \la v_1|C_{T\circ \Phi}|v_1\ra = \sum_{j'=0}^m C^{l,m,k}_{l,j',0}\overline{C^{l,m,k}_{l,j',0}}$. Since the channel $\Phi^{l,\bar{m}}_0$ is trivially EBT (and PPT) we may assume $k>0$, then $l + j' \ne \frac{l+m-k}{2}$ from the restriction that $l\ge m$. Thus, we get $a = 0$ by (1) of Proposition \ref{prop-symmetry}. Similarly, we can check that $b = C^{l,m,k}_{0, \frac{l+m-k}{2}, 0}\overline{C^{l,m,k}_{l, \frac{l+m-k}{2}, l}}$ for $k > l$. By (5) of Proposition \ref{prop-symmetry} we know that $b\ne 0$, so that ${\rm det}\begin{bmatrix} a &  b \\ 
\bar{b} & c \end{bmatrix} = - |b|^2 <0$, which is a contradition. The case $k \le l$ can be done by the same argument.

\vspace{0.3cm}
(2) We apply a similar argument as before. By taking
	$$\begin{cases} |v_1\ra = | m 0\ra, |v_2\ra = |m-k, k\ra & \text{if}\; l-m < k \le m\\ |v_1\ra = | m 0\ra, |v_2\ra = |0 m\ra & \text{if}\; k> m \lor l-m \end{cases}$$
we can similarly check that the matrix $ \begin{bmatrix} \la v_1|C_{T\circ \Phi}|v_1\ra&\la v_1|C_{T\circ \Phi}|v_2\ra \\ 
\la v_2| C_{T\circ \Phi}|v_1\ra &\la v_2 | C_{T\circ \Phi}|v_2\ra   \end{bmatrix}$ is not positive definite, so that the channels $\Phi^{\bar l, m}_k$ is not PPT if $k>l-m$.

But the case $k=l-m$ is no longer trivial. Note that we can pick a product vector $e\otimes f\in H_{l}\subseteq H_m\otimes H_{l-m}$ with $e\in H_m$ and $f\in H_{l-m}$. Then, by Theorem \ref{thm:choi-eq} and Proposition \ref{prop:ave}, we have
\begin{align*}
\frac{1}{l-m+1}C_{\Phi^{\overline{l},m}_{l-m}}&=\frac{1}{l+1}\alpha^{m,l-m}_l (\alpha^{m,l-m}_l)^*\\
&=\int_{SU(2)}\pi_m(x^{-1}) |e \ra\la e | \pi_m(x) \otimes \pi_{l-m}(x^{-1}) |f \ra\la f | \pi_{l-m}(x)dx,
\end{align*}
where $dx$ implies the normalized Haar measure on $SU(2)$.
This implies that the normalized Choi matrix of $\Phi^{\overline{l},m}_{l-m}$ is a separable state since the set of separable states are closed.
\end{proof}

\subsection{(Anti-)Degradability}\label{subsec:highest}

We first present the following cases when $SU(2)$-TL-channels are (anti-)degradable.

\begin{thm}\label{thm:highest}
Let $(k,l,m)\in \n^3_0$ be an admissible triple with $l\ge m$.

	\begin{enumerate}
		\item The channel $\Phi^{l,\bar{m}}_k$ is degradable if (a) $l=m$\, or\, (b) $k=l+m$ \, or \,(c) $k = l-m$. Moreover, we have a degrading channel for the highest weight case as follows.
	\begin{equation}\label{eq-degrading}
	\Phi^{m,\overline{l-m}}_l\circ \Phi^{l,\bar{m}}_{l+m}=\Phi^{m, \bar{l}}_{l+m}.
	\end{equation}
		\item The channel $\Phi^{l, \bar m}_k$ is not anti-degradable for $l>m$. Equivalently, $\Phi^{\bar l , m}_k$ is not degradable for $l>m$.
	\end{enumerate}
\end{thm}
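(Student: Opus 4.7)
The plan is to address each part of the statement separately; only~(1)(b) requires any substantial computation. Throughout, I use that $\widetilde{\Phi^{l,\bar m}_k} = \Phi^{\bar l,m}_k$ and that, since $SU(2)$ is of Kac type, both of these CG-channels are bistochastic by Proposition~\ref{prop-CGchannel-bistochastic}. I also use the standard duality that $\Phi$ is anti-degradable if and only if $\widetilde\Phi$ is degradable (via the same intermediate channel).

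For part~(2), I plug the maximally mixed state $\rho_k = \tfrac{1_k}{k+1}$ into the coherent information of $\Phi^{l,\bar m}_k$. Bistochasticity of both the channel and its complement gives
\[
	H\bigl(\Phi^{l,\bar m}_k(\rho_k)\bigr) - H\bigl(\widetilde{\Phi^{l,\bar m}_k}(\rho_k)\bigr) = H\bigl(\tfrac{1_l}{l+1}\bigr) - H\bigl(\tfrac{1_m}{m+1}\bigr) = \log\tfrac{l+1}{m+1} > 0,
\]
so $Q^{(1)}(\Phi^{l,\bar m}_k) > 0$. Proposition~\ref{prop:implications}(2) then precludes anti-degradability of $\Phi^{l,\bar m}_k$, and by the duality this also precludes degradability of $\Phi^{\bar l,m}_k$.

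Part~(1)(a) is immediate from the observation made just after Proposition~\ref{prop-symmetry}: when $l=m$, the third symmetry gives $\Phi^{l,\bar l}_k = \Phi^{\bar l,l}_k = \widetilde{\Phi^{l,\bar l}_k}$, so the identity map on $B(H_l)$ serves simultaneously as a degrading and an anti-degrading channel. Part~(1)(c) is equally brief: Theorem~\ref{thm:PPT1}(2) shows that $\Phi^{\bar l,m}_{l-m}$ is EBT, hence anti-degradable by \cite[Corollary~10.28]{Hol-book}; by the duality, $\Phi^{l,\bar m}_{l-m}$ is degradable.

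The substantive case is~(1)(b), where we must verify the identity~\eqref{eq-degrading}. Because $k=l+m$ forces $r=0$ in the Wenzl recursion, each of the isometries $\alpha^{l,m}_{l+m}, \alpha^{m,l}_{l+m}, \alpha^{m,l-m}_l, \alpha^{l-m,m}_l$ is (up to our chosen normalization, which here happens to be $1$) the natural inclusion of the top summand cut out by a Jones--Wenzl projection. Unwinding the composition, and using the elementary identity $(\Tr_{l-m}\otimes \Tr_m)(\alpha^{l-m,m}_l(\cdot)(\alpha^{l-m,m}_l)^*) = \Tr_l$ (valid because $\alpha^{l-m,m}_l$ is an isometry), one obtains
\[
	\Phi^{m,\overline{l-m}}_l \circ \Phi^{l,\bar m}_{l+m}(\rho) = (\iota \otimes \Tr_{l-m} \otimes \Tr_m)(V\rho V^*), \qquad \Phi^{m,\bar l}_{l+m}(\rho) = (\iota \otimes \Tr_{l-m} \otimes \Tr_m)(W\rho W^*),
\]
where $V := (\alpha^{m,l-m}_l \otimes \iota_{H_m})\alpha^{l,m}_{l+m}$ and $W := (\iota_{H_m} \otimes \alpha^{l-m,m}_l)\alpha^{m,l}_{l+m}$ are both $SU(2)$-equivariant isometric embeddings of $H_{l+m}$ into $H_m \otimes H_{l-m} \otimes H_m$. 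A short fusion-rule check confirms that $H_{l+m}$ appears with multiplicity one in the triple tensor product, so Schur's lemma forces $V = cW$ for some phase $c \in \T$; conjugation cancels the phase, and equality of the two channels follows. The main obstacle here is only bookkeeping: ensuring that the ``associativity'' of these highest-weight embeddings holds at the level of the chosen scalar normalizations. The cleanest way to see this is a brief Temperley--Lieb diagrammatic argument (in the sense of Section~\ref{subsec:Diagram}), slicing the Jones--Wenzl projection $p_{l+m}$ via the two associativity patterns $((l-m)+m)+m$ and $m+((l-m)+m)$.
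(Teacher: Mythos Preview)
Your argument is correct. Parts~(1)(a), (1)(c), and~(2) match the paper's reasoning essentially verbatim: the paper also deduces (2) from $Q^{(1)}(\Phi^{l,\bar m}_k)\ge\log\frac{l+1}{m+1}>0$ via Propositions~\ref{prop-bistochastic-estimates} and~\ref{prop-CGchannel-bistochastic}, and relies on the preceding Proposition (that $\Phi^{l,\bar l}_k=\Phi^{\bar l,l}_k$) together with Theorem~\ref{thm:PPT1}(2) for the extremal cases (a) and (c).

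For part~(1)(b), however, your route is genuinely different. The paper establishes~\eqref{eq-degrading} by a direct matrix-entry computation: it expands both sides in the standard basis using the closed form $C^{l,m,l+m}_{j_1,j_2,j}=\delta_{j_1+j_2,j}\sqrt{\tfrac{l!\,m!}{(l+m)!}}\sqrt{\tfrac{j!\,(l+m-j)!}{j_1!\,j_2!\,(l-j_1)!\,(m-j_2)!}}$ and reduces the resulting sum over the intermediate index to the Vandermonde convolution $\sum_{i_2}\binom{m}{i_2}\binom{l-m}{i-s_2-i_2}=\binom{l}{i-s_2}$. You instead observe that both $V=(\alpha^{m,l-m}_l\otimes\iota_m)\alpha^{l,m}_{l+m}$ and $W=(\iota_m\otimes\alpha^{l-m,m}_l)\alpha^{m,l}_{l+m}$ are isometric intertwiners in $\Hom_{SU(2)}(H_{l+m},H_m\otimes H_{l-m}\otimes H_m)$, and the fusion rules force this space to be one-dimensional (only the $j=l$ summand of $H_m\otimes H_{l-m}$ can contribute $H_{l+m}$ upon tensoring with $H_m$), so $V=cW$ for a unimodular scalar and the channels coincide. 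Your argument is shorter, coordinate-free, and makes the identity transparent as a consequence of multiplicity one; the paper's computation, while longer, has the virtue of being fully explicit and of exhibiting the Vandermonde identity as the combinatorial content of~\eqref{eq-degrading}.
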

\begin{proof}

(1) For the identity \eqref{eq-degrading} we need to show that for any $0\leq i,j\leq l+m$ and for any $s_2$ such that $\max \left \{0,i-j\right\}\leq s_2\leq \min \left \{m,m+i-j\right\}$,
\[(\Phi^{\overline{l-m},m}_l\circ \Phi^{l,\bar{m}}_{l+m}(|i\ra \la j |))_{s_2,s_2+j-i}=(\Phi^{\bar{l},m}_{l+m}(|i \ra \la j|))_{s_2,s_2+j-i}\]
by (2) of Proposition \ref{prop-symmetry}.

Equivalently, let us show that for any $\max \left \{0,i-j\right\}\leq s_2\leq \min \left \{m,m+i-j\right\}$
\[\sum_{i_2}C^{l,m,l+m}_{i-i_2,i_2,i}\overline{C^{l,m,l+m}_{j-i_2,i_2,j}}C^{l-m,m,l}_{i-i_2-s_2,s_2,i-i_2}\overline{C^{l-m,m,l}_{i-i_2-s_2,s_2+j-i,j-i_2}}=C^{l,m,l+m}_{i-s_2,s_2,i}\overline{C^{l,m,l+m}_{i-s_2,s_2+j-i,j}},\]
where $i_2$ runs over $\max \left \{0,i-s_2-l+m\right\}\leq i_2\leq \min \left \{m,i-s_2\right\}$.

We use the following explicit formula for Clebsch-Gordan coefficients to the highest weight case, namely for any $l,m$
\[C^{l,m,l+m}_{j_1,j_2,j}=\delta_{j_1+j_2,j}\sqrt{\frac{l! m!}{(l+m)!}}\sqrt{\frac{j!(l+m-j)!}{j_1! j_2! (l-j_1)!(m-j_2)!}}.\]

Now, we have
\begin{align*}
\lefteqn{\sum_{i_2}C^{l,m,l+m}_{i-i_2,i_2,i}\overline{C^{l,m,l+m}_{j-i_2,i_2,j}}C^{l-m,m,l}_{i-i_2-s_2,s_2,i-i_2}\overline{C^{l-m,m,l}_{i-i_2-s_2,s_2+j-i,j-i_2}}}\\
\notag&=\frac{l!m!}{(l+m)!}\frac{(l-m)!m!}{l!}\sqrt{\frac{i!(l+m-i)!j!(l+m-j)!}{s_2!(m-s_2)!(s_2+j-i)!(m-s_2-j+i)!}}\\
\notag&\;\;\;\; \times \sum_{i_2} \frac{1}{i_2!(m-i_2)!(i-i_2-s_2)!(l-m+s_2+i_2-i)!}\\
\notag&= \frac{l!m!}{(l+m)!l!}\sqrt{\frac{i!(l+m-i)!j!(l+m-j)!}{s_2!(m-s_2)!(s_2+j-i)!(m-s_2-j+i)!}}\sum_{i_2} {m \choose i_2}{l-m \choose i-s_2-i_2}\\
\label{eq1}&=\frac{l!m!}{(l+m)!l!}\sqrt{\frac{i!(l+m-i)!j!(l+m-j)!}{s_2!(m-s_2)!(s_2+j-i)!(m-s_2-j+i)!}}{l \choose i-s_2}\\
\notag&= \frac{l!m!}{(l+m)!}\frac{1}{(i-s_2)!(l+s_2-i)!}\sqrt{\frac{i!(l+m-i)!j!(l+m-j)!}{s_2!(m-s_2)!(s_2+j-i)!(m-s_2-j+i)!}}\\
\notag&= \sqrt{\frac{l!m!}{(l+m)!}}\sqrt{\frac{i!(l+m-i)!}{(i-s_2)!s_2!(l-i+s_2)!(m-s_2)!}}\\
\notag& \;\;\;\; \times \sqrt{\frac{l!m!}{(l+m)!}}\sqrt{\frac{j!(l+m-j)!}{(i-s_2)!(s_2+j-i)!(l-i+s_2)!(m-s_2-j+i)!}}\\
\notag&= C^{l,m,l+m}_{i-s_2,s_2,i}\overline{C^{l,m,l+m}_{i-s_2,s_2+j-i,j}}.
\end{align*}

The third equality in the above is from the following fact
\[{l \choose i-s_2}=\sum_{\max \left \{0,i-s_2-l+m \right\}\leq i_2\leq \min \left \{m,i-s_2\right\}} {m \choose i_2} {l-m \choose i-s_2-i_2}.\]

\vspace{0.3cm}

(2) By Proposition \ref{prop-bistochastic-estimates} and Proposition \ref{prop-CGchannel-bistochastic} we know that
\[0< \log(\frac{l+1}{m+1}) \leq Q^{(1)}(\Phi^{l, \bar m}_k),\]
which leads us to the conlusion we wanted.
\end{proof}

\begin{ex}\label{ex:non-deg-non-antideg}

The channel $\Phi^{l, \bar m}_k$ could be non-degradable for intermediate $l-m<k<l+m$ at least in low dimensional examples. The strategy is to find explicit states $\rho\in M_{k+1}$ such that 
\[0<H(\Phi^{\bar l, m}(\rho))-H(\Phi^{l, \bar m}_k(\rho))\leq Q^{(1)}(\Phi^{\bar l, m}_k).\]
The inequality above implies that $\Phi^{\bar l, m}_k$ is not anti-degradable and equivalently $\Phi^{l, \bar m}_k$ is not degradable.

The channel $\Phi^{3,\bar 2}_3$ is not degradable. Indeed, for $\rho=\displaystyle \left [ \begin{array}{cccc} 0.25&0&0&0\\ 0&0.75&0&0 \\ 0&0&0&0\\0&0&0&0 \end{array} \right ]$ we have
\begin{align*}
&H(\Phi^{\bar 3, 2}_3(\rho))-H(\Phi^{3, \bar 2}_3(\rho))\\
& = H(\left [ \begin{array}{ccc}
0.5&0&0\\ 0&0.2&0\\0&0&0.3
\end{array} \right ]
)-H(\left [ \begin{array}{cccc} 
0.45 &0&0&0\\ 0&0.15&0&0\\
0&0&0.4&0\\
0&0&0&0
\end{array} \right ])\\
&\approx 0.0192,
\end{align*}
where the first equality is obtained by the precise description of the associated isometry
	$$\alpha^{3,2}_3:\displaystyle \Comp^4\rightarrow \Comp^4\otimes \Comp^3, \begin{array}{ll} |1\ra &\mapsto -\sqrt{\frac{3}{5}} |12\ra + \sqrt{\frac{2}{5}} |21\ra \\ |2\ra & \mapsto -\sqrt{\frac{2}{5}} |13\ra -\sqrt{\frac{1}{15}} |22\ra+\sqrt{\frac{8}{15}}|31\ra \end{array}$$
using the known formula of Clebsch-Gordan coefficients. Here, $\left \{|j\ra\right\}_{j=1}^{n+1}$ refers to the canonical orthonormal basis of $H_n=\Comp^{n+1}$.
\end{ex}

\begin{rem}
For the channels $\Phi^{\bar{l},m}_{l+m}$ with $l\geq m$, we have 
\[0=Q(\Phi^{\bar{l},m}_{l+m})<C(\Phi^{\bar{l},m}_{l+m})=\log(\mathrm{dim}(H_m))\] 
by Theorem \ref{thm:highest}, \cite[Proposition 8.8]{Hol-book} and the fact that $H_{\min}(\Phi^{l, \bar{m}}_{l+m})=0$. This means that, outside the realm of entanglement-breaking channels, there exist channels which completely destroy quantum information though all the classical information can be preserved.
\end{rem}

\section{Tensor products of Temeperley-Lieb Channels and outputs of Entangled Covariant States}\label{sec:tensor}

It is well known that additivity of Holevo capacities is equivalent to additivity of minimum output entropies \cite{Sh04b} and Hastings \cite{Ha09} established non-additivity of the minimum output entropy by exhibiting the existence of random unitary channels $\Phi$ such that 
\begin{equation}\label{ineq-Hastings}
H_{\min}(\Phi\otimes \overline{\Phi})< H_{\min}(\Phi)+H_{\min}(\overline{\Phi}),
\end{equation}
where $\overline{\Phi}$ is the conjugate channel of $\Phi$. In the proof of (\ref{ineq-Hastings}), the maximally entangled state was used to estimate an upper bound of $H_{\min}(\Phi\otimes \overline{\Phi})$. Since we know the minimum output entropies for single $O_N^+$-TL-channels in an asymptotic sense, it is natural to try to evaluate the minimum output entropies for tensor products of $O_N^+$-TL-channels.  Although we are unable to fully evaluate such minimum output entropies for all tensor products, we do establish upper bounds for the minimum output entropies $H_{\min}(\Phi^{\bar l_1, m_1}_{k_1}\otimes \Phi^{l_2, \bar m_2}_{k_2})$.  This is achieved by evaluating the entropies $H((\Phi^{\bar l_1, m_1}_{k_1}\otimes \Phi^{l_2, \bar m_2}_{k_2})(\rho))$ for certain entangled states $\rho$. More precisely, we will present explicit formulae for
$$H((\Phi^{\bar l_1, m_1}_{k_1}\otimes \Phi^{l_2, \bar m_2}_{k_2})(\frac{1}{[i+1]_q}\alpha^{k_1,k_2}_i (\alpha^{k_1,k_2}_i)^*))$$
for all admissible triples $(i,k_1,k_2)\in \n^3_0$.

In this section we use all the notation and planar string diagram formalism for $\text{Rep}(O^+_F)$ introduced in Section \ref{TL-diagrams}.

\subsection{Tetrahedral nets and the quantum $6j$-symbols}

Following \cite{KaLi94}, let $\mc A \subset \N_0^6$ be the set of all sextuples $\left [\begin{matrix} a & b &i \\ c& d&j
\end{matrix}\right]$ with the property that each of the following triples \[(a,d,i), \ (b,c,i), \ (a,b ,j), \ (d,c,j)\] is admissible.   We define the \textit{tetrahedral net} to be the  function $\text{Tet}_q:\mc A \to \C$ given by
\[
\text{Tet}_q\left [\begin{matrix} a & b &i \\ c& d&j
\end{matrix}\right]  = \tau_i((A_i^{b,c})^* (\iota_{H_b} \otimes (A_0^{j,j})^* \otimes \iota_{H_c})(A_a^{b,j} \otimes A_d^{j,c})A_i^{a,d}).
\] 
In terms of planar string diagrams, the Tet$_q$ functions are given by
\[\text{Tet}_q\left [\begin{matrix} a & b &i \\ c& d&j
\end{matrix}\right] = \begin{tikzpicture}[baseline=(current  bounding  box.center),
			wh/.style={circle,draw=black,thick,inner sep=.5mm},
			bl/.style={circle,draw=black,fill=black,thick,inner sep=.5mm}, scale = 0.2]

\node  at (-1,-5) {$i$};	
\node at (-1,13) {$i$};
\node at (-2,1) {$a$};
\node at (2,1) {$d$};
\node  at (0,5) {$j$};	
\node at (-2,7) {$b$};
\node at (2,7) {$c$};

\draw [-, color=black]
		(-4,4) -- (4, 4);

\draw [-, color=black]
		(-4,4) -- (0,0);

\draw [-, color=black]
		(0,-0) -- (4,4);

\draw [-, color=black]
		(0,-4) -- (0,0);
		
\draw [-, color=black]
		(-4,4) -- (0,8);
		
\draw [-, color=black]
		(4,4) -- (0,8);
		
\draw [-, color=black]
		(0,8) -- (0,12);		

\draw [-, color=black]
		(0,12) to [bend left =110] (0,-4);

	\end{tikzpicture} . \]

Next, we introduce the {\it quantum $6j$-symbols} $\{\cdot\}_q: \mc A \to \C$, which are defined  in terms of the tetrahedral nets as follows:

\[
\left\{\begin{matrix} a & b &i \\ c& d&j
\end{matrix}\right\}_q = \frac{\text{Tet}_q\left [\begin{matrix} a & b &i \\ c& d&j
\end{matrix}\right][i+1]_q}{\theta_q(a,d,i)\theta_q(b,c,i)}.\]

\begin{rem}
We note that there exist simple algebraic formulae that allow one to numerically evaluate the tetrahedral nets (and hence also the quantum $6j$-symbols).  See \cite[Section 9.11]{KaLi94} for example.
\end{rem}

The most important geometric-algebraic feature of the quantum $6j$-symbols $\left\{\begin{matrix} a & b &i \\ c& d&j
\end{matrix}\right\}_q$ is that they arise as the basis change coefficients for two canonical bases for the Hom-space $\text{Hom}_{O^+_F}(H_a \otimes H_d, H_b  \otimes H_c)$.  More precisely, $\text{Hom}_{O^+_F}(H_a \otimes H_d,  H_b \otimes H_c)$ has one linear basis given by the string diagrams 
\[
\begin{tikzpicture}[baseline=(current  bounding  box.center),
			wh/.style={circle,draw=black,thick,inner sep=.5mm},
			bl/.style={circle,draw=black,fill=black,thick,inner sep=.5mm}, scale = 0.2]

\node at (1,0) {$i$};
\node at (3,3) {$c$};
\node at (-3,3) {$b$};
\node at (3,-3) {$d$};
\node at (-3,-3) {$a$};

\draw [-, color=black]
		(0,-2) -- (0, 2);

\draw [-, color=black]
		(0,2) -- (2, 4);

\draw [-, color=black]
		(0,2) -- (-2, 4);

\draw [-, color=black]
		(0,-2) -- (-2, -4);

\draw [-, color=black]
		(0,-2) -- (2, -4);

	\end{tikzpicture} \qquad (i \in \N_0 \text{ such that }(i,a,d), \ (i,b,c) \text{ admissible}), 
\]   
and another linear basis given by
\[
\begin{tikzpicture}[baseline=(current  bounding  box.center),
			wh/.style={circle,draw=black,thick,inner sep=.5mm},
			bl/.style={circle,draw=black,fill=black,thick,inner sep=.5mm}, scale = 0.2]

\node at (0,0) {$j$};
\node at (3,3) {$c$};
\node at (-3,3) {$b$};
\node at (3,-3) {$d$};
\node at (-3,-3) {$a$};

\draw [-, color=black]
		(2,-2) -- (2, 4);

\draw [-, color=black]
		(-2,-2) -- (-2, 4);

\draw [-, color=black]
		(-2,1) -- (2,1);

	\end{tikzpicture} \qquad (j \in \N_0 \text{ such that }(j,a,b), \ (j,c,d) \text{ admissible}).  
\]   
We then have that the quantum $6j$-symbols are the basis change coefficients between these two bases \cite[Proposition 11] {KaLi94}: 
\begin{eqnarray}\label{6j1}\begin{tikzpicture}[baseline=(current  bounding  box.center),
			wh/.style={circle,draw=black,thick,inner sep=.5mm},
			bl/.style={circle,draw=black,fill=black,thick,inner sep=.5mm}, scale = 0.2]

\node at (0,0) {$j$};
\node at (3,3) {$c$};
\node at (-3,3) {$b$};
\node at (3,-3) {$d$};
\node at (-3,-3) {$a$};

\draw [-, color=black]
		(2,-2) -- (2, 4);

\draw [-, color=black]
		(-2,-2) -- (-2, 4);

\draw [-, color=black]
		(-2,1) -- (2,1);

	\end{tikzpicture}= \sum_{i}\left\{\begin{matrix} a & b &i \\ c& d&j
\end{matrix}\right\}_q  \begin{tikzpicture}[baseline=(current  bounding  box.center),
			wh/.style={circle,draw=black,thick,inner sep=.5mm},
			bl/.style={circle,draw=black,fill=black,thick,inner sep=.5mm}, scale = 0.2]

\node at (1,0) {$i$};
\node at (3,3) {$c$};
\node at (-3,3) {$b$};
\node at (3,-3) {$d$};
\node at (-3,-3) {$a$};

\draw [-, color=black]
		(0,-2) -- (0, 2);

\draw [-, color=black]
		(0,2) -- (2, 4);

\draw [-, color=black]
		(0,2) -- (-2, 4);

\draw [-, color=black]
		(0,-2) -- (-2, -4);

\draw [-, color=black]
		(0,-2) -- (2, -4);

	\end{tikzpicture}, 
\end{eqnarray}
and similarly by a rotational symmetry argument, 
\begin{eqnarray} \label{6j2}\begin{tikzpicture}[baseline=(current  bounding  box.center),
			wh/.style={circle,draw=black,thick,inner sep=.5mm},
			bl/.style={circle,draw=black,fill=black,thick,inner sep=.5mm}, scale = 0.2]

\node at (1,0) {$j$};
\node at (3,3) {$b$};
\node at (-3,3) {$a$};
\node at (3,-3) {$c$};
\node at (-3,-3) {$d$};

\draw [-, color=black]
		(0,-2) -- (0, 2);

\draw [-, color=black]
		(0,2) -- (2, 4);

\draw [-, color=black]
		(0,2) -- (-2, 4);

\draw [-, color=black]
		(0,-2) -- (-2, -4);

\draw [-, color=black]
		(0,-2) -- (2, -4);

	\end{tikzpicture}= \sum_{i}\left\{\begin{matrix} a & b &i \\ c& d&j
\end{matrix}\right\}_q\begin{tikzpicture}[baseline=(current  bounding  box.center),
			wh/.style={circle,draw=black,thick,inner sep=.5mm},
			bl/.style={circle,draw=black,fill=black,thick,inner sep=.5mm}, scale = 0.2]

\node at (0,0) {$i$};
\node at (3,3) {$c$};
\node at (-3,3) {$b$};
\node at (3,-3) {$d$};
\node at (-3,-3) {$a$};

\draw [-, color=black]
		(2,-2) -- (2, 4);

\draw [-, color=black]
		(-2,-2) -- (-2, 4);

\draw [-, color=black]
		(-2,1) -- (2,1);

	\end{tikzpicture}.
\end{eqnarray}

The following formula involving three-vertices and tetrahedral nets will be handy in the next subsection.

\begin{lem} \label{tetra-lemma}
Let  $\left [\begin{matrix} a & b &i \\ c& d&j
\end{matrix}\right] \in \mc A$.  Then 
\[
\begin{tikzpicture}[baseline=(current  bounding  box.center),
			wh/.style={circle,draw=black,thick,inner sep=.5mm},
			bl/.style={circle,draw=black,fill=black,thick,inner sep=.5mm}, scale = 0.2]

\node  at (-1,-5) {$i$};	
\node at (-2,1) {$a$};
\node at (2,1) {$d$};
\node  at (0,5) {$j$};	
\node at (-6,7) {$b$};
\node at (6,7) {$c$};

\draw [-, color=black]
		(-4,4) -- (4, 4);

\draw [-, color=black]
		(-4,4) -- (0,0);

\draw [-, color=black]
		(0,-0) -- (4,4);

\draw [-, color=black]
		(0,-4) -- (0,0);
		
\draw [-, color=black]
		(-4,4) -- (-8,8);
		
\draw [-, color=black]
		(4,4) -- (8,8);

	\end{tikzpicture}  =\frac{ \text{Tet}_q\left [\begin{matrix} a & b &i \\ c& d&j
\end{matrix}\right]}{\theta_q(i,b,c)}\begin{tikzpicture}[baseline=(current  bounding  box.center),
			wh/.style={circle,draw=black,thick,inner sep=.5mm},
			bl/.style={circle,draw=black,fill=black,thick,inner sep=.5mm}, scale = 0.2]

\node  at (-1,-5) {$i$};	

\node at (-6,7) {$b$};
\node at (6,7) {$c$};

\draw [-, color=black]
		(-4,4) -- (0,0);

\draw [-, color=black]
		(0,-0) -- (4,4);

\draw [-, color=black]
		(0,-4) -- (0,0);
		
\draw [-, color=black]
		(-4,4) -- (-8,8);
		
\draw [-, color=black]
		(4,4) -- (8,8);

	\end{tikzpicture}   .
\]
\end{lem}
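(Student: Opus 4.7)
The plan is to use Schur's lemma to reduce the identity to computing a single scalar. Let $T$ denote the linear map on the left-hand side, viewed as an element of $\Hom_{O^+_F}(v^i, v^b \tp v^c)$. Because the tangle on the left is built entirely from three-vertices and identity strings that are themselves $O^+_F$-intertwiners, $T$ is a genuine intertwiner. The admissibility of $(i,b,c)$ together with the fusion rules \eqref{frules} implies that $v^i$ appears in $v^b \tp v^c$ with multiplicity one, so by Schur's lemma the space $\Hom_{O^+_F}(v^i, v^b \tp v^c)$ is one-dimensional and spanned by the three-vertex $A_i^{b,c}$. Hence there exists a unique scalar $\lambda \in \C$ such that
\[
T = \lambda \cdot A_i^{b,c}.
\]

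To determine $\lambda$, I would project onto the line spanned by $A_i^{b,c}$ using the natural inner product provided by the quantum trace $\tau_i$. Concretely, I compose with $(A_i^{b,c})^*$ at the top of both diagrams and take the closure, i.e.\ apply $\tau_i$. On the one hand,
\[
\tau_i\bigl((A_i^{b,c})^* T\bigr) = \lambda \cdot \tau_i\bigl((A_i^{b,c})^* A_i^{b,c}\bigr) = \lambda \cdot \theta_q(i,b,c),
\]
by the very definition of the theta-net. On the other hand, closing up $T$ with $(A_i^{b,c})^*$ reproduces exactly the tetrahedral diagram that defines $\text{Tet}_q\left[\begin{smallmatrix} a & b & i \\ c & d & j \end{smallmatrix}\right]$: the caps supplied by $(A_i^{b,c})^*$ connect the two free $b$- and $c$-strands of $T$ to the top of the inner configuration, producing the same planar graph as in the definition of $\text{Tet}_q$. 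Therefore
\[
\tau_i\bigl((A_i^{b,c})^* T\bigr) = \text{Tet}_q\left[\begin{matrix} a & b & i \\ c & d & j \end{matrix}\right].
\]
Equating the two expressions yields $\lambda = \text{Tet}_q\left[\begin{smallmatrix} a & b & i \\ c & d & j \end{smallmatrix}\right]/\theta_q(i,b,c)$, which is the asserted identity.

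The only genuinely delicate point is verifying that the closed diagram obtained from $(A_i^{b,c})^*T$ coincides, as a planar graph in the Temperley-Lieb category, with the tetrahedron defining $\text{Tet}_q\left[\begin{smallmatrix} a & b & i \\ c & d & j \end{smallmatrix}\right]$. This is a straightforward planar isotopy: the caps introduced by $(A_i^{b,c})^*$ are precisely what is needed to turn the four three-vertices in $T$ (at the $(a,d,i)$, $(b,j,a)$, $(j,c,d)$ positions plus the one introduced by $(A_i^{b,c})^*$ at the $(b,c,i)$ position) into the tetrahedral configuration appearing in the definition of $\text{Tet}_q$. Since we are working in the Kac case, ordinary and quantum traces coincide, so no additional normalization factors arise.
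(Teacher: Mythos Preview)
Your proof is correct and follows exactly the same approach as the paper: use the one-dimensionality of $\Hom_{O^+_F}(v^i, v^b \tp v^c)$ to write the left-hand side as $\lambda A_i^{b,c}$, then extract $\lambda$ by applying $\tau_i((A_i^{b,c})^*\,\cdot\,)$ and recognizing the resulting closed diagram as the tetrahedral net. The paper's version is slightly terser, omitting the explicit justification of the planar isotopy you spell out at the end.
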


\begin{proof}
Denote the quantity on the left hand side by $B$.  Then $B \in \text{Hom}_{O^+_F}(H_i, H_b\otimes H_c) = \C A_{i}^{b,c}$, and so there exists $\lambda  \in \C$ such that $B = \lambda A_{i}^{b,c}$ (i.e., $B$ is a multiple of a three-vertex).  But then we have \[\text{Tet}_q\left [\begin{matrix} a & b &i \\ c& d&j
\end{matrix}\right] = \tau_i((A_{i}^{b,c})^*B) = \tau_i((A_{i}^{b,c})^*\lambda A_{i}^{b,c}) = \lambda \theta_q(i,b,c). \]
\end{proof}

\subsection{Tensor products of TL-channels and outputs of entangled states}
Here we address tensor products of the form $\Phi_{k_1}^{\bar l_1, m_1} \otimes \Phi_{k_2}^{l_2, \bar m_2}$, and compute explicitly the outputs of $O^+_F$-covariant states of the form $\rho_{i}^{k_1, k_2} = \frac{1}{[i+1]}_q \alpha_i^{k_1, k_2}(\alpha_i^{k_1, k_2})^*$, for all admissible triples $(i,k_1,k_2)$.  Note that in the special case of $i = 0$ and $k_1 = k_2$, we have that $\rho_0^{k,k}$ is a maximally entangled state, and in general, $\rho_i^{k_1, k_2}$ is an entangled state \cite[Theorem 5.5]{BrCo17b} if $k_1,k_2>0$.

In order to ease the notational burden on the following theorem, let us fix once and for all admissible triples $(i, k_1, k_2)$, $(k_j, l_j, m_j) \in \N_0^3$ ($j = 1,2$), and let $X_i  = \big( \Phi_{k_1}^{\bar l_1, m_1} \otimes \Phi_{k_2}^{l_2, \bar m_2}\big)(\rho_i^{k_1, k_2})$

\begin{thm}
We have the following spectral decomposition for $X_i$:
\[
X_i =\sum_{\substack{l=m_1+l_2-2r \\
0 \le r \le \min\{m_1,l_2\}}} \lambda_{i, l}^{m_1, l_2} \alpha_l^{m_1,l_2}(\alpha_l^{m_1,l_2})^*, 
\]
where 
\begin{align*}
\lambda_{i, l}^{m_1, l_2}
&=\Bigg(\frac{[i+1]_q[k_1+1]_q[k_2+1]_q\theta_q(l, m_1,l_2)}{[l+1]_q\theta_q(k_1,l_1,m_1)\theta_q(k_2,l_2,m_2)\theta_q(i,k_1,k_2)}\Bigg) \\
&\;\;\;\;\times \sum_{\substack{j=2t\\ 0 \le t \le \min\{k_1,k_2 \}}}
\frac{ \text{\small $\left\{\begin{matrix} k_1 & k_2 & j \\ k_2& k_1& i
 \end{matrix}\right\}_q 
\text{Tet}_q\left [\begin{matrix} l_1 & m_1 &m_1 \\ j& k_1&k_1
\end{matrix}\right]\text{Tet}_q\left [\begin{matrix} k_2 & j &l_2 \\ l_2& m_2&k_2
\end{matrix}\right]
\left\{\begin{matrix} m_1 & m_1 &l \\ l_2& l_2&j
\end{matrix}\right\}_q$}}{\theta_q (m_1,m_1, j)\theta_q (l_2,j,l_2)}, 
\end{align*}
and occurs with multiplicity $[l+1]_q$.
\end{thm}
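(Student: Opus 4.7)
The plan is to first pin down the structural form of $X_i$ by purely representation-theoretic means, then reduce each eigenvalue $\lambda_{i,l}^{m_1,l_2}$ to the evaluation of a closed planar Temperley--Lieb diagram, which I will simplify using the 6j-symbol basis-change identities \eqref{6j1}--\eqref{6j2} together with Lemma~\ref{tetra-lemma}.

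First I would observe that each of the factors $\Phi_{k_1}^{\bar l_1,m_1}$ and $\Phi_{k_2}^{l_2,\bar m_2}$ is $O_F^+$-covariant (since $O_F^+$ is of Kac type, every contragredient is unitary, so the covariance result for CG-channels from Section~\ref{sec:preliminaries} applies), while the input $\rho_i^{k_1,k_2}=\frac{1}{[i+1]_q}\alpha_i^{k_1,k_2}(\alpha_i^{k_1,k_2})^*$ is (a scalar multiple of) the covariant projection onto the unique $H_i$-subrepresentation of $H_{k_1}\otimes H_{k_2}$. Therefore $X_i\in \text{End}_{O_F^+}(H_{m_1}\otimes H_{l_2})$, and Schur's lemma combined with the fusion rule \eqref{frules} forces $X_i$ to be a linear combination of the minimal covariant projections $\alpha_l^{m_1,l_2}(\alpha_l^{m_1,l_2})^*$ with $l=m_1+l_2-2r$, $0\le r\le\min\{m_1,l_2\}$; since each such projection has rank $\dim H_l=[l+1]_q$, this delivers both the form of the spectral decomposition and the stated multiplicities.

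Next, the coefficient is recovered from the trace pairing $\lambda_{i,l}^{m_1,l_2}[l+1]_q=\Tr(\alpha_l^{m_1,l_2}(\alpha_l^{m_1,l_2})^* X_i)$. Expanding this trace via the Kac-type diagrammatic descriptions of $\Phi_{k_1}^{\bar l_1,m_1}$, $\Phi_{k_2}^{l_2,\bar m_2}$ and $\rho_i^{k_1,k_2}$, I obtain one closed planar diagram built from the three-vertices $A_i^{k_1,k_2}$, $A_{k_1}^{l_1,m_1}$, $A_{k_2}^{l_2,m_2}$, $A_l^{m_1,l_2}$ and their adjoints, with the $H_{l_1}$- and $H_{m_2}$-strands closed by cups coming from the two partial traces; all the scalar normalizations for the $\alpha$'s collect immediately into the prefactor $\frac{[i+1]_q[k_1+1]_q[k_2+1]_q\theta_q(l,m_1,l_2)}{[l+1]_q\theta_q(k_1,l_1,m_1)\theta_q(k_2,l_2,m_2)\theta_q(i,k_1,k_2)}$. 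I would then apply \eqref{6j1} to the central block $\alpha_i^{k_1,k_2}(\alpha_i^{k_1,k_2})^*$: this introduces an intermediate strand of colour $j$ and the 6j-factor $\left\{\begin{matrix} k_1 & k_2 & j\\ k_2 & k_1 & i\end{matrix}\right\}_q$, and the joint admissibility of $(j,k_1,k_1)$ and $(j,k_2,k_2)$ forces $j=2t$ with $0\le t\le\min\{k_1,k_2\}$, precisely the claimed summation range. A symmetric application of \eqref{6j2} to the output block $(\alpha_l^{m_1,l_2})^*\alpha_l^{m_1,l_2}$ introduces an intermediate strand that planar consistency identifies with the same $j$, yielding the second 6j-factor $\left\{\begin{matrix} m_1 & m_1 & l\\ l_2 & l_2 & j\end{matrix}\right\}_q$. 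After these substitutions the diagram decouples into two tetrahedral closures sharing only the $j$-strand, and each is evaluated by Lemma~\ref{tetra-lemma} to a ratio $\text{Tet}_q/\theta_q$, contributing $\text{Tet}_q\!\left[\begin{matrix} l_1 & m_1 & m_1\\ j & k_1 & k_1\end{matrix}\right]/\theta_q(m_1,m_1,j)$ and $\text{Tet}_q\!\left[\begin{matrix} k_2 & j & l_2\\ l_2 & m_2 & k_2\end{matrix}\right]/\theta_q(l_2,j,l_2)$ respectively, producing the stated formula.

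The main obstacle will be the topological bookkeeping: I need to verify that the two successive basis changes can be performed on a planar diagram without introducing crossings, and that the intermediate colour is necessarily the same $j$ on both sides of the diagram. The parity constraint $j=2t$ is a subtle point that is not imposed by hand but arises automatically from the admissibility conditions after the first 6j-substitution, reflecting the familiar Temperley--Lieb fact that only even-coloured strands survive a self-closure. Beyond this, confirming that the normalizations from the $\alpha$'s, the two partial traces, and the two 6j-rewrites assemble into exactly the stated prefactor is a careful but routine accounting exercise using only identities already established in Section~\ref{sec:TL}.
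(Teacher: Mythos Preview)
Your overall strategy---use covariance and Schur's lemma to force the spectral form, then extract each $\lambda_{i,l}^{m_1,l_2}$ by pairing $X_i$ against the test projection $\alpha_l^{m_1,l_2}(\alpha_l^{m_1,l_2})^*$---is sound and essentially dual to the paper's direct manipulation of the open diagram for $X_i$. The prefactor bookkeeping you describe is also correct.

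The gap is in your middle step. Applying \eqref{6j2} to the $l$-block $(\alpha_l^{m_1,l_2})^*\alpha_l^{m_1,l_2}$ introduces a \emph{new} summation variable $j'$, not the same $j$; nothing called ``planar consistency'' forces $j'=j$ at that stage of the diagram. What actually kills the off-diagonal terms $j\ne j'$ is Schur orthogonality of bubbles \emph{after} the $l_1$- and $m_2$-loops have already been removed, and that requires reversing your order of operations. More to the point, the second $6j$-substitution is unnecessary. The cleaner route within your own framework is: (i) apply the first $6j$ to the $i$-block, (ii) apply Lemma~\ref{tetra-lemma} to each of the two partial-trace subdiagrams (this is where the two ratios $\text{Tet}_q/\theta_q$ you name actually arise, collapsing the $X_i$-part to a bare H-shape with horizontal colour $j$ and verticals $m_1,l_2$), and (iii) observe that the remaining closed diagram---this H-shape traced against the $l$-I-shape from the test projection---is by definition the single tetrahedral net $\text{Tet}_q\!\left[\begin{smallmatrix} m_1 & m_1 & l\\ l_2 & l_2 & j\end{smallmatrix}\right]$, which equals $\tfrac{\theta_q(m_1,l_2,l)^2}{[l+1]_q}\left\{\begin{smallmatrix} m_1 & m_1 & l\\ l_2 & l_2 & j\end{smallmatrix}\right\}_q$ by the defining relation between Tet and the $6j$-symbol. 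No second summation variable ever appears.

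This is exactly what the paper does, except that it works with the open diagram for $X_i$ throughout: first $6j$ on the $i$-block, then Lemma~\ref{tetra-lemma} on the two sides, and finally \eqref{6j1} applied in the forward direction (H-shape $\to$ sum of I-shapes over $l$) to read off the spectral decomposition directly, bypassing the test projection altogether.
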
 

\begin{proof}
We have that, up to planar isotopy, the planar tangle representating $X_i$ is given by:
\[
X_i = \frac{[i+1]_q[k_1+1]_q[k_2+1]_q}{\theta_q (k_1,k_2,i)\theta_q (l_1,m_1,k_1) \theta_q (l_2,m_2,k_2)} \begin{tikzpicture}[baseline=(current  bounding  box.center),
			wh/.style={circle,draw=black,thick,inner sep=.5mm},
			bl/.style={circle,draw=black,fill=black,thick,inner sep=.5mm}, scale = 0.17]
\node at (1,0) {$i$};
\node at (3,3) {{\footnotesize $ k_2$}};
\node at (-3,3) {{\footnotesize $ k_1$}};
\node at (-5,8) {$m_1$};
\node at (5,8) {$l_2$};
\node at (-5,-8) {$m_1$};
\node at (5,-8) {$l_2$};
\node at (3,-3) {{\footnotesize $k_2$}};
\node at (-3,-3) {{\footnotesize $k_1$}};
\node at (-7,0) {$l_1$};
\node at (8,0) {$m_2$};
\draw [-, color=black]
		(0,-2) -- (0, 2);
\draw [-, color=black]
		(0,2) -- (3, 5);
\draw [-, color=black]
		(0,2) -- (-3, 5);
\draw [-, color=black]
		(0,-2) -- (-3, -5);
\draw [-, color=black]
		(0,-2) -- (3, -5);
\draw [-, color=black]
		(-3,-9) -- (-3, -5);
\draw [-, color=black]
		(3,-9) -- (3, -5);
\draw [-, color=black]
		(-3,5) -- (-3,9);
\draw [-, color=black]
		(3,5) -- (3, 9);
\draw [-, color=black]
		(-3,5) to [bend right = 90]  (-3,-5);
\draw [-, color=black]
		(3,5) to [bend left = 90]  (3, -5);
	\end{tikzpicture}
\]

Using the formulae \eqref{6j1}-\eqref{6j2} for the quantum $6j$-symbols together with Lemma \ref{tetra-lemma}, we have 
\begin{align*}
&\begin{tikzpicture}[baseline=(current  bounding  box.center),
			wh/.style={circle,draw=black,thick,inner sep=.5mm},
			bl/.style={circle,draw=black,fill=black,thick,inner sep=.5mm}, scale = 0.17]
\node at (1,0) {$i$};
\node at (3,3) {{\footnotesize $ k_2$}};
\node at (-3,3) {{\footnotesize $ k_1$}};
\node at (-5,8) {$m_1$};
\node at (5,8) {$l_2$};
\node at (-5,-8) {$m_1$};
\node at (5,-8) {$l_2$};
\node at (3,-3) {{\footnotesize $k_2$}};
\node at (-3,-3) {{\footnotesize $k_1$}};
\node at (-7,0) {$l_1$};
\node at (8,0) {$m_2$};
\draw [-, color=black]
		(0,-2) -- (0, 2);
\draw [-, color=black]
		(0,2) -- (3, 5);
\draw [-, color=black]
		(0,2) -- (-3, 5);
\draw [-, color=black]
		(0,-2) -- (-3, -5);
\draw [-, color=black]
		(0,-2) -- (3, -5);
\draw [-, color=black]
		(-3,-9) -- (-3, -5);
\draw [-, color=black]
		(3,-9) -- (3, -5);
\draw [-, color=black]
		(-3,5) -- (-3,9);
\draw [-, color=black]
		(3,5) -- (3, 9);
\draw [-, color=black]
		(-3,5) to [bend right = 90]  (-3,-5);
\draw [-, color=black]
		(3,5) to [bend left = 90]  (3, -5);
	\end{tikzpicture}
= \sum_j
\left\{\begin{matrix} k_1 & k_2 & j \\ k_2& k_1& i
 \end{matrix}\right\}_q \begin{tikzpicture}[baseline=(current  bounding  box.center),
			wh/.style={circle,draw=black,thick,inner sep=.5mm},
			bl/.style={circle,draw=black,fill=black,thick,inner sep=.5mm}, scale = 0.17]
\node at (0,-1.2) {\footnotesize $j$};
\node at (1.9,3) {\footnotesize $k_2$};
\node at (-1.9,3) {\footnotesize $k_1$};
\node at (-5,8) {$m_1$};
\node at (5,8) {$l_2$};
\node at (-5,-8) {$m_1$};
\node at (5,-8) {$l_2$};
\node at (1.9,-3) {\footnotesize$k_2$};
\node at (-1.9,-3) {\footnotesize $k_1$};
\node at (-7,0) {\footnotesize$l_1$};
\node at (7.5,0) {\footnotesize $m_2$};
\draw [-, color=black]
		(-3,0) -- (3, 0);
\draw [-, color=black]
		(-3,-5) -- (-3, 5);
\draw [-, color=black]
		(3,-5) -- (3, 5);
\draw [-, color=black]
		(-3,-9) -- (-3, -5);
\draw [-, color=black]
		(3,-9) -- (3, -5);
\draw [-, color=black]
		(-3,5) -- (-3,9);
\draw [-, color=black]
		(3,5) -- (3, 9);
\draw [-, color=black]
		(-3,5) to [bend right = 90]  (-3,-5);
\draw [-, color=black]
		(3,5) to [bend left = 90]  (3, -5);
	\end{tikzpicture} 
= \sum_j 
\left\{\begin{matrix} k_1 & k_2 & j \\ k_2& k_1& i
 \end{matrix}\right\}_q \begin{tikzpicture}[baseline=(current  bounding  box.center),
			wh/.style={circle,draw=black,thick,inner sep=.5mm},
			bl/.style={circle,draw=black,fill=black,thick,inner sep=.5mm}, scale = 0.17]
\draw [-, color=black]
		(-1,0) to [bend right = -90] (1, 0);
\node at (0,-1) {\footnotesize$j$};
\draw [-, color=black]
		(-3,-5) -- (-1, 0);
\node at (-1,-3) {\footnotesize$k_1$};
\draw [-, color=black]
		(-5,0) -- (-1, 0);
\node at (-2.5,1) {\footnotesize$k_1$};
\draw [-, color=black]
		(3,-5) -- (1, 0);
\node at (1,-3) {\footnotesize$k_2$};
\draw [-, color=black]
		(1,0) -- (5, 0);
\node at (2.5,1) {\footnotesize$k_2$};
\draw [-, color=black]
		(-3,-9) -- (-3, -5);
\node at (-5,-8) {$m_1$};
\draw [-, color=black]
		(3,-9) -- (3, -5);
\node at (5,-8) {$l_2$};
\draw [-, color=black]
		(-5,0) -- (-5,4);
\node at (-5,5) {$m_1$};
\draw [-, color=black]
		(5,0) -- (5, 4);
\node at (5,5) {$l_2$};
\draw [-, color=black]
		(-3,-5) -- (-5,0);
\node at (-5,-2) {\footnotesize$l_1$};
\draw [-, color=black]
		(3,-5) -- (5, 0);
\node at (6,-2) {\footnotesize$m_2$};
	\end{tikzpicture} \\
&=\sum_j 
\left\{\begin{matrix} k_1 & k_2 & j \\ k_2& k_1& i
 \end{matrix}\right\}_q \frac{\text{ \small $ \text{Tet}_q\left [\begin{matrix} l_1 & m_1 &m_1 \\ j& k_1&k_1
\end{matrix}\right]\text{Tet}_q\left [\begin{matrix} k_2 & j &l_2 \\ l_2& m_2&k_2
\end{matrix}\right]$}}{\theta_q(m_1, m_1, j)\theta_q(l_2, j, l_2)} \begin{tikzpicture}[baseline=(current  bounding  box.center),
			wh/.style={circle,draw=black,thick,inner sep=.5mm},
			bl/.style={circle,draw=black,fill=black,thick,inner sep=.5mm}, scale = 0.17]
\node at (0,-0.1) {\footnotesize$j$};
\node at (3.5,3) {\footnotesize$l_2$};
\node at (-3.5,3) {\footnotesize$m_1$};
\node at (3.5,-3) {\footnotesize$l_2$};
\node at (-3.5,-3) {\footnotesize$m_1$};
\draw [-, color=black]
		(2,-2) -- (2, 4);
\draw [-, color=black]
		(-2,-2) -- (-2, 4);
\draw [-, color=black]
		(-2,1) -- (2,1);
	\end{tikzpicture} \\
&= \sum_{l}\sum_j \left\{\begin{matrix} k_1 & k_2 & j \\ k_2& k_1& i
 \end{matrix}\right\}_q \frac{ \text{ \small $ \text{Tet}_q\left [\begin{matrix} l_1 & m_1 &m_1 \\ j& k_1&k_1
\end{matrix}\right]\text{Tet}_q\left [\begin{matrix} k_2 & j &l_2 \\ l_2& m_2&k_2
\end{matrix}\right]$}}{\theta_q(m_1, m_1, j)\theta_q(l_2, j, l_2)} \left\{\begin{matrix} m_1 & m_1 &l \\ l_2& l_2&j
\end{matrix}\right\}_q  \begin{tikzpicture}[baseline=(current  bounding  box.center),
			wh/.style={circle,draw=black,thick,inner sep=.5mm},
			bl/.style={circle,draw=black,fill=black,thick,inner sep=.5mm}, scale = 0.17]
\node at (1,0) {\footnotesize$l$};
\node at (3,3) {\footnotesize$l_2$};
\node at (-3,3) {\footnotesize$m_1$};
\node at (3,-3) {\footnotesize$l_2$};
\node at (-3,-3) {\footnotesize$m_1$};
\draw [-, color=black]
		(0,-2) -- (0, 2);
\draw [-, color=black]
		(0,2) -- (2, 4);
\draw [-, color=black]
		(0,2) -- (-2, 4);
\draw [-, color=black]
		(0,-2) -- (-2, -4);
\draw [-, color=black]
		(0,-2) -- (2, -4);
	\end{tikzpicture} \\
&=\sum_l \Big( \sum_j \text{ \small $\left\{\begin{matrix} k_1 & k_2 & j \\ k_2& k_1& i
 \end{matrix}\right\}_q $} \frac{\text{ \small $ \text{Tet}_q\left [\begin{matrix} l_1 & m_1 &m_1 \\ j& k_1&k_1
\end{matrix}\right]\text{Tet}_q\left [\begin{matrix} k_2 & j &l_2 \\ l_2& m_2&k_2
\end{matrix}\right]$}}{\theta_q(m_1, m_1, j)\theta_q(l_2, j, l_2)} \text{ \small $ \left\{\begin{matrix} m_1 & m_1 &l \\ l_2& l_2&j
\end{matrix}\right\}_q $}\Big) \frac{\theta_q(l,m_1, l_2)}{[l+1]_q}  \alpha_l^{m_1,l_2}\alpha_l^{m_1,l_2*}.
\end{align*}

In the above, the summands run over $l$ such that $(l, m_1, l_2)$ is admissible, and $j$ such that both $(j, k_1, k_1)$ and $(j, k_2, k_2)$ are admissible.  This corresponds exactly to $l=m_1+l_2-2r$ with $0 \le r \le \min\{m_1,l_2\}$ and $j=2t$ with $0 \le t \le \min\{k_1,k_2 \}$.  The claimed formula for the eigenvalue $\lambda_{i, l}^{m_1, l_2}$ is now immediate.  Note also that the multiplicity of $\lambda_{i, l}^{m_1, l_2}$ is $\text{rank}(\alpha_l^{m_1,l_2} (\alpha_l^{m_1,l_2})^* )= \dim H_l = [l+1]_q$.
\end{proof}

\begin{rem}
As remarked above, the element $X_0 \in B(H_{m_1} \otimes H_{l_2})$ is the output of the $O^+_F$-covariant Bell state $\rho_0^{k,k} \in B(H_k \otimes H_k)$.  In this situation, the eigenvalue formula for $X_0$ simplifies greatly.  This can be seen by using similar arguments to those in the proof given above, or by directly using algebraic relations satisfied by the quantum $6j$-symbols.  In any case, we get    
\[X_0 = \sum_{\substack{l=m_1+l_2-2r \\
0 \le r \le \min\{m_1,l_2\}}} \lambda_{0, l}^{m_1, l_2} \alpha_l^{m_1,l_2}\alpha_l^{m_1,l_2*},\]  
with 
\begin{align*}\lambda_{0, l}^{m_1, l_2} &= \frac{[k+1]_q \text{Tet}_q\left[\begin{matrix} m_1 & l_1 & l \\ m_2& l_2&k
\end{matrix}\right]^2}{\theta_q(l_1,m_1,k)\theta_q(l_2,m_2,k) \theta_q(m_1,l_2, l)\theta_q(l_1,m_2,l)} \\
 &=\frac{[k+1]_q \left \{\begin{matrix} m_1 & l_1 & l \\ m_2& l_2&k
\end{matrix}\right\}_q^2\theta_q(l,l_1,m_2)\theta_q(l,m_1,l_2)}{\theta_q(l_1,m_1,k)\theta_q(l_2,m_2,k) [l+1]_q^2},
\end{align*} occurring with multiplicity $[l+1]_q$.
\end{rem}

\subsection{Remarks on the MOE additivity problem for certain $O_N^+$-TL-channels}
Given that we have, on the one hand, asymptotically sharp estimates on the MOE of the $O_N^+$-TL-channels $\Phi_k^{\bar l, m}$, $\Phi_k^{l, \bar m}$ (given by $H_{\min}(\Phi_k^{\bar l, m}),  H_{\min}(\Phi_k^{l, \bar m}) \sim \Big(\frac{l+m-k}{2}\Big)\log N$ - cf. Theorem \ref{thm-MOE-sharp}), and on the other hand,  we have exact formulae for the outputs $X_i =  \big( \Phi_{k_1}^{\bar l_1, m_1} \otimes \Phi_{k_2}^{l_2, \bar m_2}\big)(\rho_i^{k_1, k_2}) $ of entangled states under the tensor products of certain TL-channels, it is natural to ask whether one can obtain a strict inequality of the form 
\[
H(X_i) < \Big(\frac{l_1+m_1-k_1}{2}\Big)\log N + \Big(\frac{l_2+m_2-k_2}{2}\Big)\log N \qquad (\text{for suitable $i, k_j, l_j, m_j$}).
\]
If this were the case, we would have obtained deterministic examples of pairs of quantum channels which witness the non-additivity of their minimum output entropy.  

Unfortunately, however, extensive numerical evaluations of $H(X_i)$ for suitable parameter choices always yield inequalities of the form $H(X_i) - \Big(\frac{l_1+m_1-k_1}{2}\Big)\log N - \Big(\frac{l_2+m_2-k_2}{2}\Big)\log N >0$ with the difference going to zero as $N \to \infty$.  We see this as strong evidence that the pairs of quantum channels $\Phi_{k_1}^{\bar l_1, m_1}, \Phi_{k_2}^{l_2, \bar m_2} $ 
are not MOE strictly subadditive.

\section{Some Temperley-Lieb channels are not modified TRO-channels}\label{sec:TRO}

For a quantum channel $\Phi: B(H_A) \to B(H_B)$ with a Stinespring isometry $V : H_A \to H_B \otimes H_E$ the range space ${\rm Ran}V \subseteq H_B \otimes H_E$ is called a {\it Stinespring space} of $\Phi$. Note that the choice of isometry $V$ is not unique, but any associated Stinespring space is known to determine the channel $\Phi$. For this reason we will fix a Stinespring isometry $V$ and refer to the range ${\rm Ran}V$ as {\it the Stinespring space}. We say that the channel $\Phi$ is a {\it TRO-channel} if its Stinespring space is a {\it TRO}, i.e.  a{\it ternary ring of operators}. Recall that a TRO is a subspace $X$ of $B(H,K)$ for some Hilbert spaces $H,K$ such that $x,y,z \in X \Rightarrow xy^*z\in X$, i.e. closed under triple product. It is well-known that finite dimensional TRO's are direct sums of  rectangular matrix spaces with mutiplicity. Since the Stinespring space determines the channel it has been observed in \cite{GaJuLa16} that a TRO-channel $\Phi: B(H_A) \to B(H_B)$ is always of the following form: the channel $\Phi$ has a Stinespring space $X$ given by
	$$X = \oplus^M_{i=1}B(\Comp^{m_i}, \Comp^{n_i}) \otimes 1_{l_i} \subseteq B(H_E, H_B),$$
where
	$$H_E = \oplus^M_{i=1}\Comp^{m_i} \otimes \Comp^{l_i}\;\; \text{and} \;\; H_B = \oplus^M_{i=1}\Comp^{n_i} \otimes \Comp^{l_i}.$$
Moreover, we have $H_A = (X, \la \cdot, \cdot \ra_{H_A})$, where the inner product is given by $\la x, y \ra_{H_A} := \Tr_E(y^*x)$, $x,y \in X \subseteq B(H_E, H_B)$. Finally, the channel $\Phi$ is given by
	$$\Phi(|x\ra \la y|) = xy^*,\; x,y \in H_A=X \subseteq B(H_E, H_B).$$
Based on the above description we can define a variant of TRO-channels. We first fix a {\it symbol} $f\in B(H_E)$, i.e. a positive matrix with $\tau(f) := \frac{\Tr_E(f)}{d_E}=1$ and {\it strongly independent} of the right algebra $\mathcal{R}(X) = \text{span}\{x^*y: x,y\in X\}$. Here, we say that $x\in B(H_E)$ is {\it independent} of $\mathcal{R}(X)$ if $\tau(xy) = \tau(x)\tau(y)$ for all $y\in \mathcal{R}(X)$ and {\it strongly independent} of $\mathcal{R}(X)$ if $x^n$ is indepedent of $\mathcal{R}(X)$ for every $n\ge 1$. Then the {\it modified TRO-channel} $\Phi_f$ with the symbol $f$ is defined by
	$$\Phi_f : B(H_A) \to B(H_B),\;\; |x\ra \la y| \mapsto xfy^*.$$ The original TRO-channel $\Phi$ corresponds to the case of $\Phi_f$ with $f = 1_E$. It has been proved in \cite{GaJuLa16} that we have exact calculations for various capacities of $\Phi$ as follows.
	$$Q^{(1)}(\Phi) = P^{(1)}(\Phi) = Q(\Phi) = P(\Phi) = \log (\max_i n_i),\; \chi(\Phi) = C(\Phi) = \log (\sum_i n_i).$$
Moreover, we also have the following estimates for modified TRO-channels.
	$$Q^{(1)}(\Phi) \le Q^{(1)}(\Phi_f) \le Q^{(1)}(\Phi) + \tau(f \log f).$$
The same estimates hold for other capacities, i.e. we may replace $Q^{(1)}$ with $P^{(1)}, Q, P, \chi$ and $C$. Important examples of (modified) TRO-channels include random unitary channels using projective unitary representations of finite (quantum) groups and generalized dephasing channels \cite{GaJuLa16}.

In this section we prove that some TL-channels do not belong to the class of modified TRO-channels. Before we proceed to the details we need to be more precise about comparing two quantum channels.
\begin{defn}
Let $\Phi : B(H_A) \to B(H_B)$ and $\Psi: B(H_{A'}) \to B(H_{B'})$ be quantum channels with $d_B \le d_{B'}$. We say that $\Phi$ is equivalent to $\Psi$ if there is a unitary $U: H_A \to H_{A'}$ and an isometry $V: H_B \to H_{B'}$ such that
	$$V\Phi(U^* \rho\, U)V^* = \Psi(\rho),\;\; \rho \in B(H_A).$$
\end{defn}

We can find an example with minimal non-trivial dimensions. 

\begin{prop}\label{prop-non-TRO}
The $SU(2)$-TL-channel $\Phi^{\bar{2},1}_1$ is not equivalent to any modified TRO-channel.
\end{prop}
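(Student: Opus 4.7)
The plan is to combine the EBT property of $\Phi^{\bar 2, 1}_1$ (from Theorem \ref{thm:PPT1}) with the capacity bounds for modified TRO channels recalled above, and then to rule out the remaining structures by a Choi-rank/dimension count.

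First, using the $SU(2)$ Clebsch-Gordan coefficients of Proposition \ref{prop-symmetry}, a direct computation yields the explicit closed form $\Phi^{\bar 2, 1}_1(\rho) = \tfrac{2}{3}\Tr(\rho)I_2 - \tfrac{1}{3}\rho$ for $\rho \in B(H_1)=M_2$ (equivalently, the random-unitary formula $\tfrac{1}{3}\sum_{k=x,y,z}\sigma_k\rho\sigma_k$). Since $(k,l,m)=(1,2,1)$ satisfies $k=l-m$, Theorem \ref{thm:PPT1} gives that $\Phi^{\bar 2, 1}_1$ is EBT, so $Q^{(1)}(\Phi^{\bar 2, 1}_1)=0$. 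The formula together with $SU(2)$-covariance yields that every pure-state output has spectrum $\{1/3,2/3\}$, so $H_{\min}(\Phi^{\bar 2, 1}_1)=h(1/3)$ and $\chi(\Phi^{\bar 2, 1}_1)=\log 2-h(1/3)<\log 2$, where $h$ denotes the binary Shannon entropy. Finally, Theorem \ref{thm:choi-eq} shows that the Choi matrix of $\Phi^{\bar 2, 1}_1$ is a scalar multiple of a rank-$3$ projection, so the Choi rank equals $3$.

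Next, suppose for contradiction that $\Phi^{\bar 2, 1}_1$ is equivalent to a modified TRO channel $\Phi_f$ with base TRO $X=\bigoplus_{i=1}^{M}B(\C^{m_i},\C^{n_i})\otimes 1_{l_i}$. Equivalence (via a unitary on inputs and an isometry on outputs) preserves $Q^{(1)}$, $\chi$ and the Choi rank, so the recalled GJL bounds give
\[
\log \max_i n_i \;\le\; Q^{(1)}(\Phi_f)\;=\;0, \qquad \log \sum_i n_i \;\le\; \chi(\Phi_f)\;<\;\log 2,
\]
which force $n_i=1$ for every $i$ and $M=1$. The base TRO is therefore a single block with $m:=m_1=d_A=2$, $n_1=1$, and some $l:=l_1\ge 2$. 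The modified TRO Stinespring environment then has dimension $d_E=ml=2l\ge 4$, while the Choi rank of $\Phi_f=\Phi^{\bar 2, 1}_1$ is $3$, so the symbol $f\in M_{2l}$ is necessarily positive of rank exactly $3$, hence non-invertible.

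The main obstacle is to rule out this remaining non-invertible single-block case. The identification $H_A\cong\C^2$ via the orthonormal basis $\tilde e_i=(e_i^*\otimes 1_l)/\sqrt{l}$, together with the choice of equivalence data $U\in U(2)$ and an isometric embedding $V:\C^2\hookrightarrow\C^l$, translates the equation $\Phi_f(|\tilde e_i\rangle\langle\tilde e_j|)=\tfrac{1}{l}f_{ij}$ into an explicit matrix formula pinning down each $l\times l$ block $f_{ij}$ of $f\in M_{2l}$ from the data of $\Phi^{\bar 2, 1}_1$ and $(U,V)$. In the smallest case $l=2$, $f$ is forced (up to unitary conjugation on each factor) to equal $\tfrac{4}{3}\bigl(I_4-|\phi^+\rangle\langle\phi^+|\bigr)$ on $\C^2\otimes\C^2$, and for $l>2$ the additional constraint that the image of $\Phi_f$ lies in the 4-dimensional subspace $VM_2V^*\subseteq M_l$ further restricts the form of $f$. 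The hard part is then a direct verification that no such $f$ can simultaneously be positive, satisfy $\tau(f)=1$, and be strongly independent of $\mathcal R(X)=M_2\otimes 1_l$ in the sense intended by the GJL framework: at this step the rank deficiency of $f$ (forced by Choi-rank matching) conflicts with the strong-independence / positivity structure, which I would establish by a case analysis on $(l,U,V)$ and by examining the mixed moments $\tau(f^{n_1}y_1f^{n_2}y_2\cdots)$ for $y_j\in\mathcal R(X)$.
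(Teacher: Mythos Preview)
Your capacity-based reduction is correct and is a genuinely different route from the paper's brute-force enumeration: using the GJL lower bounds $\log\max_i n_i \le Q^{(1)}(\Phi_f)$ and $\log\sum_i n_i \le \chi(\Phi_f)$ together with $Q^{(1)}(\Phi^{\bar 2,1}_1)=0$ and $\chi(\Phi^{\bar 2,1}_1)<\log 2$ (both of which survive the equivalence, as you note) immediately forces $n_i=1$ for all $i$ and $M=1$, hence $m_1=2$. This collapses the paper's Cases (1) and (3) in one stroke and is more conceptual than the paper's separate treatment of each case.

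The problem is that your treatment of the remaining case ($M=1$, $n_1=1$, $m_1=2$, $l:=l_1\ge 2$) is not a proof but a plan. You write that the contradiction ``I would establish by a case analysis on $(l,U,V)$ and by examining the mixed moments $\tau(f^{n_1}y_1f^{n_2}y_2\cdots)$''. Two issues. First, nothing is actually verified; the Choi-rank observation $\operatorname{rank} f=3$ is correct and nice, but you never exhibit the conflict with strong independence, and the dependence on $(U,V)$ makes the promised case analysis open-ended. Second, and more seriously, the ``mixed moments'' you describe are \emph{not} the notion of strong independence used here: by the definition recalled just above the proposition, $f$ is strongly independent of $\mathcal R(X)$ exactly when $\tau(f^n y)=\tau(f^n)\tau(y)$ for all $n\ge 1$ and all $y\in\mathcal R(X)$. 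There is no factorisation of alternating products, so your proposed moment test is not available.

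The paper dispatches this last case in two lines, and you can splice that argument directly onto your reduction. With $\mathcal R(X)=M_2\otimes 1_l$ and $f=\begin{bmatrix} f_{11}&f_{12}\\ f_{21}&f_{22}\end{bmatrix}\in M_2\otimes M_l$, independence of $f$ and of $f^2$ from $\mathcal R(X)$ gives $\Tr(f_{11})=l$ and $\Tr((f^2)_{11})=l$. On the other hand $(f^2)_{11}=f_{11}^2+f_{12}f_{12}^*$, so $l=\Tr((f^2)_{11})\ge \Tr(f_{11}^2)$. But $f_{11}/l=\Phi_f(|\tilde e_1\rangle\langle\tilde e_1|)$ is (via the equivalence) $V\Phi^{\bar 2,1}_1(U^*|1\rangle\langle 1|U)V^*$, and every pure-state output of $\Phi^{\bar 2,1}_1$ has purity $5/9$ (as you yourself computed), whence $\Tr(f_{11}^2)=\tfrac{5}{9}l^2$. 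Thus $l\ge\tfrac{5}{9}l^2$, i.e.\ $l\le 9/5$, contradicting $l\ge 2$ (which you already know from $\operatorname{Ran}\Phi^{\bar 2,1}_1=B(\C^2)$). Inserting this short computation completes your argument; without it the proposal has a genuine gap.
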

\begin{proof}
Since we have the associated isometry $\alpha^{2,1}_1:\Comp^2\rightarrow \Comp^3\otimes \Comp^2, \begin{array}{ll} |1\ra &\mapsto -\sqrt{\frac{2}{3}} |12\ra + \sqrt{\frac{1}{3}} |21\ra \\ |2\ra & \mapsto -\sqrt{\frac{1}{3}} |22\ra +\sqrt{\frac{2}{3}} |31\ra \end{array}$ (\cite{VK}),
we can see that channel $\Phi^{\bar{2},1}_1: B(\Comp^2) \to B(\Comp^2)$ maps $|1\ra \la 1| \mapsto \begin{bmatrix} 1/3 & 0 \\ 0 & 2/3 \end{bmatrix}$, $|1\ra \la 2|\mapsto \begin{bmatrix} 0 & - 1/3 \\ 0 & 0 \end{bmatrix}$, $|2\ra \la 1|\mapsto \begin{bmatrix} 0 & 0 \\ -1/3 & 0 \end{bmatrix}$ and $|2\ra \la 2|\mapsto \begin{bmatrix} 2/3 & 0 \\ 0 & 1/3 \end{bmatrix}$. Thus, we can observe that ${\rm Ran}\Phi^{\bar{2},1}_1 = B(\Comp^2)$, which is a full matrix algebra.

Let $\Phi_f$ be a modified TRO-channel with the parameters $n_i, m_i, l_i$, $1\le i\le M$ as above. Since we need to match the dimensions of the sender's Hilbert spaces we only have the following 3 possible cases. (1) $M=1$, $n_1 = 2$, $m_1=1$, (2) $M=1$, $n_1 = 1$, $m_1=2$ and (3) $M=2$, $n_1 = n_2 = m_1 = m_2 = 1$.

\vspace{0.3cm}

Case (1): The corresponding modified TRO-channel becomes (after identifying the orthonormal basis in a suitable way)
	$$\Phi_f : B(\Comp^2) \to B(\Comp^2) \otimes B(\Comp^{l_1}),\; |i\ra \la j| \mapsto |i\ra \la j| \otimes \frac{f}{l_1}.$$
If we assume that $\Phi^{\bar{2},1}_1$ is equivalent to $\Phi_f$, then there are a unitary $U: \Comp^2 \to \Comp^2$ and an isometry $V: \Comp^2 \to \Comp^2 \otimes \Comp^{l_1}$ such that
	$$V\Phi^{\bar{2},1}_1(U^* \rho\, U)V^* = \Phi_f(\rho),\;\; \rho \in B(H_A).$$
Since ${\rm Ran}\Phi^{\bar{2},1}_1 = B(\Comp^2)$ we also have ${\rm Ran}\Phi_f \cong B(\Comp^2)$ as a subalgebra of $B(\Comp^2) \otimes B(\Comp^{l_1})$, which forces $g := \frac{f}{l_1}$ to be a pure state. This implies that $g^2 = g$, so that $\Tr((|1\ra \la 1| \otimes g)^2) = \Tr(|1\ra \la 1| \otimes\frac{f}{l_1}) = 1$. However, the state $\rho' = \Phi^{\bar{2},1}_1(U^* |1\ra \la 1|U)$ can be easily shown to satisfy $\Tr((\rho')^2) = 5/9 \ne 1$. Since $X\mapsto VXV^*$ is a trace preserving map, we get a contradiction.

\vspace{0.3cm}

Case (2): The corresponding modified TRO-channel becomes 
	$$\Phi_f : B(\Comp^2) \to B(\Comp^{l_1}),\; |i\ra \la j| \mapsto \frac{f_{ij}}{l_1},$$
where $f = \begin{bmatrix} f_{11} & f_{12}\\ f_{21} & f_{22} \end{bmatrix} \in B(\Comp^2) \otimes B(\Comp^{l_1})$ with $f_{ij} \in B(\Comp^{l_1})$, $1\le i,j \le 2$. Since ${\rm Ran}\Phi^{\bar{2},1}_1 = B(\Comp^2)$ we know that $l_1 \ge 2$. We assume that there are a unitary $U: \Comp^2 \to \Comp^2$ and an isometry $V: \Comp^2 \to \Comp^2 \otimes \Comp^{l_1}$ such that $V\Phi^{\bar{2},1}_1(U^* \rho\, U)V^* = \Phi_f(\rho),\;\; \rho \in B(H_A)$ as before. In this case we have $\mathcal{R}(X) = B(\Comp^2)\otimes \Comp 1_{l_1}$. It is straightforward to check that independence of $f$ with respect to $\mathcal{R}(X)$ implies that $\Tr(f_{11}) = l_1$. We also know that $f^2$ is independent of $\mathcal{R}(X)$, which means that $\Tr((f^2)_{11}) = l_1$. However, we have
	$$l_1 = \Tr((f^2)_{11}) = \Tr(f^2_{11} + f_{12}f_{21}) \ge \Tr(f^2_{11}) = \frac{5}{9}l_1^2,$$
which is a contradiction. The above inequality is from $f^*_{12} = f_{21}$ and the last equality is from the fact that
	$$\Tr((\frac{f_{ij}}{l_1})^2) = \Tr((\rho')^2) = 5/9.$$
	
\vspace{0.3cm}

Case (3): The corresponding modified TRO-channel becomes 
	$$\Phi_f : B(\Comp^2) \to B(\Comp^{l_1+l_2}),\; \begin{bmatrix} a_{11} & a_{12}\\ a_{21} & a_{22} \end{bmatrix} \mapsto \left[\frac{f_{ij}}{\sqrt{l_i l_j}}\right]_{1\le i,j \le 2}, $$
where $f = \begin{bmatrix} f_{11} & f_{12}\\ f_{21} & f_{22} \end{bmatrix} \in B(\Comp^{l_1+l_2})$ with $f_{ij} \in B(\Comp^{l_j}, \Comp^{l_1})$, $1\le i,j \le 2$. Since ${\rm Ran}\Phi^{\bar{2},1}_1 = B(\Comp^2)$ we know that $l_1 \ge 2$. We assume that there are a unitary $U: \Comp^2 \to \Comp^2$ and an isometry $V: \Comp^2 \to \Comp^2 \otimes \Comp^{l_1}$ such that $V\Phi^{\bar{2},1}_1(U^* \rho\, U)V^* = \Phi_f(\rho),\;\; \rho \in B(H_A)$ as before. In this case we have $\mathcal{R}(X) = \Comp 1_{l_1} \oplus \Comp 1_{l_2} \subseteq B(\Comp^{l_1+l_2})$.  It is also straightforward to check that independence of $f$ with respect to $\mathcal{R}(X)$ implies that $\Tr(f_{11}) = l_1$. We also know that $f^2$ is independent of $\mathcal{R}(X)$, which means that $\Tr((f^2)_{11}) = l_1$. However, we have
	$$l_1 = \Tr((f^2)_{11}) = \Tr(f^2_{11} + f_{12}f_{21}) \ge \Tr(f^2_{11}) = \frac{5}{9}l_1^2,$$
where the last identity is from the fact that
	$$\Tr((\frac{f_{ij}}{l_1})^2) = \Tr(\begin{bmatrix}\frac{f_{ij}}{l_1} & 0 \\ 0 & 0\end{bmatrix}^2 ) = \Tr((\rho')^2) = 5/9.$$	
Thus, we can conclude that $l_1 = 1$, which actually means that $f_{11} = \Tr(f_{11}) = l_1 = 1$. Thus, we have $\Tr((\frac{f_{ij}}{l_1})^2) = 1 \ne 5/9$, so that we get a contradiction.

\begin{rem}
The canonical complementary channel $\tilde{\Phi}_f$ of a modified TRO-channel $\Phi_f$ can be written as follows.
	$$\tilde{\Phi}_f : B(H_A) \to B(H_E),\;\; |x\ra \la y| \mapsto \sqrt{f}y^*x\sqrt{f}.$$
Then, we can also show that the Temperley-Lieb channel $\Phi^{\bar{2},1}_1$ for $G = SU(2)$ is not equivalent to any canonical complementary channel $\tilde{\Phi}_f$ of a modified TRO-channel $\Phi_f$. This time the argument is easier since we only need to observe that ${\rm rank}(\tilde{\Phi}_f) \le 2$ in all the 3 possible cases in the proof of Proposition \ref{prop-non-TRO}.
\end{rem}

\end{proof}

\bibliographystyle{alpha}
\bibliography{TL-Capacity-MOE}

\end{document}